\documentclass[reprint,
superscriptaddress,showkeys,
nofootinbib,
 amsmath,amssymb,
 aps,
prx,
]{revtex4-2}
\usepackage{booktabs}
\usepackage{cmap}% text search
\usepackage{graphicx}% Include figure files
\usepackage{dcolumn}% Align table columns on decimal point
\usepackage{mathtools,physics,bbm,euscript,mathrsfs,enumitem,nicefrac,amsthm,amsfonts,float,tcolorbox,bm}
\usepackage{tikz}
\usetikzlibrary{arrows.meta, positioning, calc, intersections,decorations.markings,plotmarks}
\usepackage{appendix}
\usepackage{placeins}
\makeatletter
\let\newfloat\newfloat@ltx
\makeatother
\usepackage{algorithm}
\usepackage{algpseudocode}
\usepackage{algpseudocodex}
\tikzset{algpxIndentLine/.style={draw=black, thick}} 

\usepackage{subcaption}

\usepackage[%backref,
colorlinks,bookmarks=true, citecolor=blue!90!black, urlcolor=blue!90!black,linkcolor=blue!90!black, pdfencoding=auto, psdextra]{hyperref}
\hypersetup{pdfauthor={Name}}

\usepackage{multirow} 
\newcolumntype{C}[1]{>{\centering\arraybackslash}p{#1}}

\usepackage[capitalize]{cleveref}
\makeatletter
\newcommand{\LongState}[1]{\State \parbox[t]{\linewidth-\ALG@thistlm}{\raggedright #1\strut}}
\makeatother
\usepackage{calc}

\newcommand{\MultilineInput}[1]{%
  \Statex \textbf{Input:}\enskip \parbox[t]{\linewidth-\widthof{\textbf{Input:}\enskip}}{\raggedright #1}
}
\newcommand{\MultilineOutput}[1]{%
  \Statex \textbf{Output:}\enskip \parbox[t]{\linewidth-\widthof{\textbf{Output:}\enskip}}{\raggedright #1}
}
\newtheoremstyle{upright}
  {3pt}   % Space above
  {3pt}   % Space below
  {\normalfont} % Body font
  {}      % Indent amount
  {\bfseries} % Theorem head font
  {.}     % Punctuation after theorem head
  {.5em}  % Space after theorem head
  {}      % Theorem head spec

\theoremstyle{plain}
\newtheorem{theorem}{Theorem}[section]
\newtheorem{lemma}[theorem]{Lemma}

\newtheorem{proposition}[theorem]{Proposition}

\theoremstyle{upright}
\newtheorem{definition}{Definition}[section]
\newtheorem{remark}{Remark}
\newtheorem{example}{Example}
\newtheorem{construction}{Construction}

\newcommand\nc\newcommand
\nc\bfa{{\boldsymbol a}}\nc\bfA{{\boldsymbol A}}\nc\cA{{\EuScript A}}
\nc\bfb{{\boldsymbol b}}\nc\bfB{{\boldsymbol B}}\nc\cB{{\EuScript B}}
\nc\bfc{{\boldsymbol c}}\nc\bfC{{\boldsymbol C}}\nc\cC{{\mathscr C}}\nc\uc{{\underline c}}
\nc\bfd{{\boldsymbol d}}\nc\bfD{{\boldsymbol D}}\nc\cD{{\mathscr D}}
\nc\bfe{{\boldsymbol e}}\nc\bfE{{\boldsymbol E}}\nc\cE{{\EuScript E}}\nc\ue{{\underline e}}
\nc\bff{{\boldsymbol f}}\nc\bfF{{\boldsymbol F}}\nc\cF{{\mathcal F}}\nc\uf{{\underline f}}
\nc\bfg{{\boldsymbol g}}\nc\bfG{{\boldsymbol G}}\nc\cG{{\EuScript G}}
\nc\bfh{{\boldsymbol h}}\nc\bfH{{\boldsymbol H}}\nc\cH{{\mathcal H}}
\nc\bfi{{\boldsymbol i}}\nc\bfI{{\boldsymbol I}}\nc\cI{{\mathcal I}}
\nc\bfj{{\boldsymbol j}}\nc\bfJ{{\boldsymbol J}}\nc\cJ{{\EuScript J}}
\nc\bfk{{\boldsymbol k}}\nc\bfK{{\boldsymbol K}}\nc\cK{{\EuScript K}}
\nc\bfl{{\boldsymbol l}}\nc\bfL{{\boldsymbol L}}\nc\cL{{\EuScript L}}
\nc\bfm{{\boldsymbol m}}\nc\bfM{{\boldsymbol M}}\nc\cM{{\EuScript M}}
\nc\bfn{{\boldsymfol n}}\nc\bfN{{\boldsymbol N}}\nc\cN{{\EuScript N}}\nc\un{{\underline n}}
\nc\bfo{{\boldsymbol o}}\nc\bfO{{\boldsymbol O}}\nc\cO{{\EuScript O}}
\nc\bfp{{\boldsymbol p}}\nc\bfP{{\boldsymbol P}}\nc\cP{{\EuScript P}}
\nc\bfq{{\boldsymbol q}}\nc\bfQ{{\boldsymbol Q}}\nc\cQ{{\EuScript Q}}
\nc\bfr{{\boldsymbol r}}\nc\bfR{{\boldsymbol R}}\nc\cR{{\EuScript R}}\nc\ur{{\underline r}}
\nc\bfs{{\boldsymbol s}}\nc\bfS{{\boldsymbol S}}\nc\cS{{\EuScript S}}
\nc\bft{{\boldsymbol t}}\nc\bfT{{\boldsymbol T}}\nc\cT{{\EuScript T}}
\nc\bfu{{\boldsymbol u}}\nc\bfU{{\boldsymbol U}}\nc\cU{{\EuScript U}}
\nc\bfv{{\boldsymbol v}}\nc\bfV{{\boldsymbol V}}\nc\cV{{\mathscr V}}
\nc\bfw{{\boldsymbol w}}\nc\bfW{{\boldsymbol W}}\nc\cW{{\mathscr W}}
\nc\bfx{{\boldsymbol x}}\nc\bfX{{\boldsymbol X}}\nc\cX{{\EuScript X}}\nc\ux{{\underline x}}
\nc\bfy{{\boldsymbol y}}\nc\bfY{{\boldsymbol Y}}\nc\cY{{\mathscr Y}}\nc\uy{{\underline y}}
\nc\bfz{{\boldsymbol z}}\nc\bfZ{{\boldsymbol Z}}\nc\cZ{{\EuScript Z}}
\nc{\1}{{\mathbbm{1}}}

\nc{\remove}[1]{}

\DeclareSymbolFont{bbold}{U}{bbold}{m}{n}
\DeclareSymbolFontAlphabet{\mathbbold}{bbold}

\DeclareMathOperator{\supp}{supp}

\DeclareMathOperator{\wt}{wt}

\nc\reals{{\mathbb R}}
\nc{\ff}{{\mathbb F}}
\nc{\PP}{{\mathbb P}}
\nc{\complex}{{\mathbb C}}

\newcommand{\ah}{{\mathfrak {a}}}
\newcommand{\bh}{{\mathfrak {b}}}
\newcommand{\uh}{{\mathfrak {u}}}
\newcommand{\ph}{{\mathfrak {p}}}
\nc{\gh}{{\mathfrak {g}}}
\nc{\hh}{{\mathfrak {h}}}
\nc{\fh}{{\mathfrak {f}}}

\usepackage{xcolor}
\definecolor{Qcolor}{RGB}{31,120,180}

\usepackage[normalem]{ulem}
\newcommand\redout{\bgroup\markoverwith{\textcolor{red}{\rule[0.5ex]{2pt}{0.8pt}}}\ULon}

\allowdisplaybreaks

\begin{document}

\title{Breaking the bicycle frame: Coset-based quantum LDPC codes}

\author{Arda Aydin}
 \email{aaydin@umd.edu}
\affiliation{ISR and Department of ECE, University of Maryland, College Park, MD 20742
}
\author{Itzhak Tamo}
\email{tamo@tauex.tau.ac.il}
\affiliation{ Department of Electrical Engineering - Systems, Tel Aviv University, Tel Aviv 6997801, Israel}

\author{Alexander Barg}%
 \email{abarg@umd.edu}
\affiliation{ISR and Department of ECE, University of Maryland, College Park, MD 20742
}
\affiliation{Joint Center for Quantum Information and Computer Science,
NIST/University of Maryland, College Park, MD 20742}

\begin{abstract}
Generalizing the  construction of two-block group algebra (2BGA) codes, we introduce a family of two-block quantum LDPC codes constructed using 
the action of a group on the cosets of its subgroup. This replaces the regular group actions of the earlier two-block constructions and significantly expands the search space, yielding new quantum LDPC codes outside the 2BGA family. Through a computer search, we identify several new quantum LDPC codes, including weight-6 codes with parameters $[[48,8,6]]$, $[[96,8,10]]$, and $[[224,12,16]]$, as well as weight-8 codes with parameters $[[84,16,8]]$, $[[112,16,10]]$, $[[128,16,12]]$, and $[[168,16,15]]$. Furthermore, we introduce a maximally packed syndrome extraction schedule of depth $w+2$, including initialization and measurement steps, for any code with a maximum stabilizer weight of $w$ from our family. Under a standard circuit-level noise model, our codes, when decoded using BP-OSD, perform competitively with BB codes, achieving thresholds of $\approx0.65\%$ for the weight-6 family and $\approx0.35\%$ for the weight-8 family. Finally, we introduce a group-theoretic framework to generate sequences of graph-based covers of 2BGA codes, recovering and extending recent results on code constructions of this type.
\end{abstract}
           
\maketitle

\section{\label{sec:Introduction} Introduction}

Quantum low-density parity-check (LDPC) codes are among the leading approaches to realizing fault-tolerant quantum computation.  Surface codes  \cite{KITAEV20032,1997RuMaS..52.1191K,bravyi1998quantumcodeslatticeboundary} form one of the earliest examples of quantum LDPC codes. These codes provide strong error suppression but incur a large qubit overhead, requiring many physical qubits. Specifically, to encode $k$ logical qubits with a distance $d$, the surface code requires $\Theta(k d^2)$ physical qubits. To reduce this overhead, researchers have studied quantum LDPC codes with more complex structures,  aiming at constructing codes that simultaneously achieve a high encoding rate and large distance. The first advances toward achieving this goal were made about 15 years ago \cite{5205648}, followed by further developments in \cite{Hastings21,9490244}, and culminating in the construction of asymptotically good families of quantum LDPC codes \cite{10.1145/3519935.3520017,9996782}.

Asymptotically good codes are essential for the realization of a large-scale, fault-tolerant quantum computation. However, for near-term practical considerations, finite-length quantum codes with specific structural properties play a critical role. For instance, despite its large qubit overhead, the surface code 
is one of the primary choices of quantum LDPC code for practical implementations because of its advantageous geometrical properties. 
Its structure is embeddable in a 2D grid, requiring only local connections. Due to these characteristics, along with its high threshold and robust error suppression under circuit-level noise, surface codes have attracted sustained attention in recent years.

Another promising class of finite-length quantum LDPC codes is the family of bivariate bicycle (BB) codes, introduced in Ref.~\cite{Bravyi2024}. Rather than embedding qubits exclusively in a planar 2D grid, the authors of \cite{Bravyi2024} considered a bi-planar structure. By incorporating a few long-range connections alongside the local connections utilized by the toric code, they successfully constructed quantum LDPC codes with a significantly lower qubit overhead than the surface code, while maintaining comparable thresholds and error suppression capabilities.
The building blocks of BB codes are permutation matrices representing the regular action of the finite abelian group $ \mathbb{Z}_l \times \mathbb{Z}_m$, where $l,m \geq 2$ are integers. 

\begin{figure*}[t]
     \centering
     \begin{tikzpicture}[rotate=90]
         % This work
         \draw[thick, fill=gray!10] (0,0) ellipse (4cm and 6.5cm);
         \node [align=center, font=\large\bfseries] at (0.0, 5.25) {This work};
         \draw[thick, rounded corners=0.75cm] (-4.2, -6.75) rectangle (4.25, 8.0);
         \node[align=center, font=\large] at (3.0, 6.5) {Two-Block \\CSS Codes};
          % 2BGA
         \draw[thick, fill=white!10] (0,-1.25) ellipse (3.75cm and 5.25cm);
        \node[align=center] at (0.0, 3.25) {2BGA \\ Code\cite{Pryadko2BGA}};
         %ZSZ Codes
         \draw[thick, fill=white!10] (1.25, 2.25) ellipse (1.25cm and 0.8cm);
         \node[align=center] at (1.25, 2.25) {"ZSZ" \\ Code \cite{mfmt-fwkg}};
          % Abelian 2BGA
         \draw[thick, fill=white!10] (0,-2.50) ellipse (3.40cm and 4.0cm);
         \node[align=center] at (-1.0, 0.65) {Abelian \\ 2BGA};
          % BB
         \draw[thick, fill=white!10] (1.15,-2.50) ellipse (2.25cm and 3.5cm);
         \node[align=center] at (0.9, 0.1) {BB \\ Code \cite{Bravyi2024}};
         %Generalized Toric
         \draw[thick, fill=white!10] (2.00,-1.55) ellipse (1.25cm and 1.50cm);
         \node[align=center] at (1.90, -1.00) {Generalized \\ toric\\ code \cite{rmy6-9n89}};
         %Cyclic HGP
         \draw[thick] (2.00,-3.40) ellipse (1.25cm and 1.50cm);
         \node[align=center] at (1.90, -3.7) {Cyclic \\ HGP \\ \cite{aydin2026cyclichypergraphproductcode}};
         %GB Codes
         \draw[thick, rotate around={20:(-1.25,-4.50)}] (-0.80,-4.50) ellipse (1.25cm and 1.25cm);
         \node[align=center] at (-1.30, -4.5) {GB \\Code \cite{PhysRevA.88.012311}};
         %TB
         \draw[thick, fill=white!10] (-2.10,-1.50) ellipse (0.95cm and 1.4cm);
         \node[align=center] at (-2.1, -0.65) {TB \\ code \\ \cite{ll5p-z88p}};
         %TB
         \draw[thick, fill=white!10] (-2.10,-2.15) ellipse (0.75cm and 0.75cm);
         \node[align=center] at (-2.1, -2.05) {Cubic \\ code \\ \cite{Haah2011}};
     \end{tikzpicture}
     \caption{Various quantum LDPC codes and their relationship with each other.}
     \label{fig:CodeFamilies}
 \end{figure*}

\begin{table*}[htbp]
    \centering 
    \caption{Mapping existing quantum LDPC codes to our generalized framework. }
    \label{tab:code_families_relationship}
    \begin{tabular}{@{}ll@{\quad}ll@{}}
        \toprule
        \textbf{Code Family} & \textbf{Subgroup Condition} & \textbf{Quotient Group} ($G/H$) & \textbf{Reference} \\
        \midrule
        \textbf{Coset-based codes} & \textbf{Arbitrary} $H \leq G$ & \textbf{Coset space} $G/H$ & this work \\
        \midrule
        2BGA & Normal $H$ & General finite group & Lin-Pryadko \cite{Pryadko2BGA} \\
        ``ZSZ'' Code & Normal $H$ & $\mathbb{Z}_l \rtimes \mathbb{Z}_m$ & Guo e.a. \cite{mfmt-fwkg} \\
        Abelian 2BGA & Normal $H $ & Abelian finite group & Lin-Pryadko \cite{Pryadko2BGA} \\
        Trivariate Bicycle (TB) & Normal $H $ & $\mathbb{Z}_k \times \mathbb{Z}_l \times \mathbb{Z}_m$ & Voss e.a. \cite{ll5p-z88p} \\
        Cubic Code & Normal $H$ & $\mathbb{Z}_l \times \mathbb{Z}_l \times \mathbb{Z}_l$ & Haah \cite{Haah2011} \\
        Bivariate Bicycle (BB) & Normal $H $ & $\mathbb{Z}_l \times \mathbb{Z}_m$ & Bravyi e.a. \cite{Bravyi2024} \\
        Generalized Toric & Normal $H$ & $\mathbb{Z}_l \times \mathbb{Z}_m$ & Liang e.a. \cite{rmy6-9n89} \\
        Cyclic HGP & Normal $H $ & $\mathbb{Z}_l \times \mathbb{Z}_m$ & Aydin-Delfosse-Tham\cite{aydin2026cyclichypergraphproductcode} \\
        Generalized Bicycle (GB) & Normal $H$ & $\mathbb{Z}_r$ & Kovalev-Pryadko \cite{PhysRevA.88.012311} \\
        \bottomrule
    \end{tabular}
\end{table*}

BB codes belong to a broader family of codes known as two-block group algebra (2BGA) codes \cite{Pryadko2BGA}. These codes are constructed using permutation matrices that represent the left and right regular actions of a finite group, which is not limited to being abelian. Since the left and right regular actions commute with each other, the CSS orthogonality condition is automatically satisfied owing to the general construction method known as two-block CSS codes \cite{PhysRevA.88.012311}. 

One of the primary reasons for the interest in 2BGA codes is the well-defined structure that they imply for the resulting Tanner graph. Namely, 
the edges of the graph that connect the qubits correspond to the actions of the underlying finite group, which makes it possible to design an efficient CNOT gate schedule that complies with the constraints imposed by the hardware architecture. For instance, the structure of the underlying group $ \mathbb{Z}_l \times \mathbb{Z}_m$, along with the structure of the constituent matrices, enables one to map BB codes onto a 2D bi-planar layout, which is highly desirable for superconducting hardware architectures. Furthermore, it has been demonstrated that BB codes are suitable for implementation in trapped-ion architectures \cite{Ye2025quantumerror,tham2025distributedfaulttolerantquantummemories} and cold-atom platforms \cite{Wang2026coprimebivariate}.    

Another example of 2BGA codes is a class of hypergraph product (HGP) codes known as cyclic HGP codes \cite{aydin2026cyclichypergraphproductcode}, which form
a special case of BB codes and admit an efficient layout for trapped-ion platforms. Generalized bicycle (GB) codes \cite{PhysRevA.88.012311} 
form yet another example of 2BGA codes that relies on the group $ \mathbb{Z}_r$ for its construction. Note that GB codes are equivalent to BB codes when $l$ and $m$ are coprime, due to the group isomorphism $\mathbb{Z}_{lm} \cong \mathbb{Z}_l \times \mathbb{Z}_m$. Furthermore, Ref.~\cite{Panteleev2021degeneratequantum}
introduced several examples of finite-length GB codes that feature both good parameters and high thresholds. More recently, the authors of \cite{tripier2026faulttolerantquantumcomputingtrapped} proposed a blueprint for a fault-tolerant trapped-ion quantum computer architecture. The authors defined a \textit{three-ring} framework based on the ``moving qubits'' model (a fully connected trapped-ion architecture where qubits can physically move) and introduced an efficient quantum memory utilizing 2BGA codes, including the cyclic HGP, BB, and GB code families.

2BGA codes based on non-abelian groups have also been investigated for the construction of self-correcting quantum memories in neutral-atom arrays \cite{mfmt-fwkg}. In the same work, the authors specifically studied 2BGA codes with the underlying group formed as a semidirect product of two cyclic groups $ \mathbb{Z}_l \rtimes \mathbb{Z}_m$. Several other code families in the class of 2BGA codes are trivariate bicycle codes \cite{ll5p-z88p} with the underlying group $ \mathbb{Z}_k \times \mathbb{Z}_l \times \mathbb{Z}_m$ and Haah's cubic code \cite{Haah2011}, which is a special case of the former relying on the underlying group $ \mathbb{Z}_l \times \mathbb{Z}_l \times \mathbb{Z}_l$.

These examples are listed in Table~\ref{tab:code_families_relationship} and their relationships are visualized in 
Figure \ref{fig:CodeFamilies}. Collectively, they demonstrate the potential of 2BGA codes for near-term fault-tolerant quantum memories. An extensive analysis of 2BGA codes for both abelian and non-abelian groups, including a computer search for codes with underlying groups up to a certain order, has been conducted 
by Lin and Pryadko in Ref.~\cite{Pryadko2BGA}. 

In this work, we introduce a general family of quantum LDPC codes that contains and extends 2BGA codes and hence also their subfamilies listed in Table~\ref{tab:code_families_relationship}. Specifically, we present a construction of two-block CSS codes based on group actions on cosets rather than the regular group actions utilized in 2BGA codes. While for certain 
group-subgroup pairs, our approach is equivalent to the 2BGA construction, generally speaking, it yields new, previously unknown quantum LDPC codes. While there is a limited number of finite groups of a given order, there are many more group-subgroup pairs that yield a coset space of that same size. Consequently, for a fixed code length, our construction provides a much wider variety of Tanner graph structures and an expanded search space for discovering new quantum codes. 

Aided by computer search, we identify several new quantum LDPC codes with competitive parameters (see Tables \ref{tab:group_codes} and \ref{tab:group_codes_additional}). We also design an explicit, maximally packed syndrome extraction circuit for our codes. Any weight-$w$ code from our family admits a depth-$(w+2)$ syndrome extraction circuit, including state initialization and measurement operations.  We further analyze the structure of the Tanner graph of our codes and provide an upper bound for its thickness. By construction, our results apply to all 2BGA codes, such as BB and GB codes. In simulations, we observe that our codes perform comparably to BB codes under circuit-level noise (see Table \ref{tab:summarylogicalerrorrates}), while extending the set of possible code parameters.

As another set of results, we present a group-theoretic framework for generating larger quantum LDPC codes by lifting a small base 2BGA code. In particular, this construction recovers and generalizes the results of Symons et al. \cite{symons2025sequencesbivariatebicyclecodes}, in which the authors studied families of BB codes from covering graphs.  Additionally, we provide several notable examples of quantum LDPC codes that are graph-based covers of previously known 2BGA codes.

The paper is organized as follows: In Section \ref{sec:GroupTheory}, we provide a brief overview of group actions on cosets. We describe our code construction in Section \ref{sec:CodeConstruction} and introduce our search strategy for finding new codes in Section \ref{sec:CodeSearch}. Section \ref{sec:SC_Circuit} presents an explicit syndrome extraction algorithm for our code family alongside the results of our circuit-level memory experiments. Finally, we conclude by describing how to generate a sequence of our codes using covering graphs in Section \ref{sec:CoverGraphs}.

\section{\label{sec:GroupTheory} Groups Acting on Cosets}
Our code construction relies on the action of a finite group $G$ on the cosets of its subgroup $H$. In this section, we list
the main properties of this action. For a subgroup $H\leq G$, let $N_G(H):=\{g\in G : gH=Hg\}$ the normalizer of $H$ in $G$. Let $S_{G/H}$ be the group of permutations of the left cosets of $H$ in $G$. We define the standard left action of $G$ and right action of $N_G(H)$ on the left cosets as follows:
\begin{align}\label{eq:defL}
    L : G \rightarrow S_{G/H},
\end{align}
where
\begin{align*}
    L(g) : G/H &\rightarrow G/H\\
    xH &\mapsto (gx)H.
\end{align*}
Similarly,
\begin{align}\label{eq:defR}
    R : N_G(H) \rightarrow S_{G/H},
\end{align}
where
\begin{align*}
    R(g) : G/H &\rightarrow G/H\\
    xH &\mapsto (xg)H.
\end{align*}
In this section we list standard properties of these actions \cite[Sec.4.2]{Dummit2004}, including their proofs in 
Appendix \ref{appendix:additionalproofs} for completeness. Recall that a group homomorphism (anti-homomorphism) is a map $G_1\stackrel{\phi}\to G_2$ such that $\phi(g_1g_2)=\phi(g_1)\phi(g_2)$ (resp., $\phi(g_1g_2)=\phi(g_2)\phi(g_1)$).
In the particular case when $G_2$ is a permutation group of a finite set, $\phi$ is called a permutation (anti)-representation of $G_1$.
\begin{proposition}\label{prop:LRhomomorpshim}
$L$ is a homomorphism and $R$ is an anti-homomorphism.
\end{proposition}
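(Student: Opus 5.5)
The plan is to verify the (anti-)homomorphism identities pointwise on cosets, after first checking that each map is well defined and lands in $S_{G/H}$. Throughout, composition in $S_{G/H}$ is the usual one, $(\sigma\tau)(xH)=\sigma(\tau(xH))$.

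\emph{Well-definedness of $L$.} If $xH=yH$, then $y=xh$ for some $h\in H$. Hence $gyH=gxhH=gxH$, so $L(g)$ does not depend on the coset representative. It is a bijection because $L(g^{-1})$ inverts it.

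\emph{Well-definedness of $R$.} This is where the normalizer hypothesis is needed, and it is the only genuinely delicate point. Let $g\in N_G(H)$ and $y=xh$ with $h\in H$. Then
\begin{align*}
ygH=xhgH=xg(g^{-1}hg)H=xgH,
\end{align*}
since $g^{-1}hg\in g^{-1}Hg=H$. Bijectivity again follows because $g^{-1}\in N_G(H)$ and $R(g^{-1})$ inverts $R(g)$.

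\emph{The identities.} Take $g_1,g_2\in G$. Then
\begin{align*}
L(g_1g_2)(xH)=g_1g_2xH=L(g_1)\bigl(L(g_2)(xH)\bigr),
\end{align*}
so $L(g_1g_2)=L(g_1)L(g_2)$ and $L$ is a homomorphism. Now take $g_1,g_2\in N_G(H)$. Then
\begin{align*}
R(g_1g_2)(xH)=xg_1g_2H=R(g_2)\bigl(xg_1H\bigr)=R(g_2)\bigl(R(g_1)(xH)\bigr),
\end{align*}
so $R(g_1g_2)=R(g_2)R(g_1)$ and $R$ is an anti-homomorphism. The reversal of order comes from $R$ acting on the right while composition in $S_{G/H}$ applies the right-hand factor first.
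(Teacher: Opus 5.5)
Your proof is correct and follows essentially the same route as the paper's appendix proof. Both check that $L(g)$ and $R(g)$ are well defined, using $g^{-1}Hg=H$ for $g\in N_G(H)$ in the case of $R$, and then verify $L(g_1g_2)=L(g_1)\circ L(g_2)$ and $R(g_1g_2)=R(g_2)\circ R(g_1)$ pointwise on cosets; the only cosmetic difference is that you get bijectivity from the explicit inverses $L(g^{-1})$ and $R(g^{-1})$, whereas the paper reads its chain of equivalences backwards to get injectivity, which suffices on the finite set $G/H$.
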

Crucially, the left action of $G$ and the right action of $N_G(H)$ on the left cosets of $H$ in $G$ commute with each other.  Using this property in our code construction, we immediately establish the orthogonality requirement imposed by the CSS code family. We formalize this property in the following proposition:
\begin{proposition}\label{prop:commutativity}
    Let $G$ be a finite group and let $H\leq G$ be a subgroup. Let $L$ and $R$ be the maps defined in \eqref{eq:defL} and \eqref{eq:defR}. Let $g_1\in G$ and $g_2\in N_G(H)$. Then,
    \begin{align*}
        L(g_1)\circ R(g_2) = R(g_2) \circ L(g_1).
    \end{align*}
\end{proposition}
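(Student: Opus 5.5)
The plan is to prove the identity pointwise on the coset space $G/H$. Two permutations of $G/H$ are equal exactly when they agree on every left coset. So it suffices to fix an arbitrary coset $xH$, with $x\in G$, and to compute its image under both compositions using the definitions \eqref{eq:defL} and \eqref{eq:defR}.

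Before this computation I would record a well-definedness check that the argument relies on. $L(g_1)$ is independent of the representative, since $xH=x'H$ implies $g_1xH=g_1x'H$. For $R(g_2)$ the hypothesis $g_2\in N_G(H)$ is needed. If $x'=xh$ with $h\in H$, then
\[
x'g_2H=xhg_2H=xg_2(g_2^{-1}hg_2)H=xg_2H,
\]
because $g_2^{-1}hg_2\in H$. This verification is presumably part of the paper's treatment of Proposition~\ref{prop:LRhomomorpshim}, and I would cite it rather than repeat it.

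With that in place, the main computation is short:
\[
\bigl(L(g_1)\circ R(g_2)\bigr)(xH)=L(g_1)\bigl((xg_2)H\bigr)=(g_1xg_2)H,
\]
\[
\bigl(R(g_2)\circ L(g_1)\bigr)(xH)=R(g_2)\bigl((g_1x)H\bigr)=(g_1xg_2)H.
\]
The two results agree by associativity in $G$. Since $xH$ was arbitrary, the two permutations coincide.

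I do not expect a real obstacle. The only delicate point is making sure $R(g_2)$ is a legitimate map on cosets, which is exactly why $g_2$ must lie in the normalizer. In the second line, for instance, $R(g_2)$ is applied to $(g_1x)H$ using the representative $g_1x$, and the result must not depend on that choice. Once this is granted, commutativity reduces to associativity of the group product, with $G$ multiplying on the left and $g_2$ multiplying on the right.
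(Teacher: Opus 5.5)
Your proof is correct and matches the paper's own argument. The paper also evaluates both compositions on an arbitrary coset $xH$ and gets $(g_1xg_2)H$ by associativity. It relies on the well-definedness of $R(g_2)$ for $g_2\in N_G(H)$, which it establishes in its proof of Proposition~\ref{prop:LRhomomorpshim}, exactly as you anticipated.
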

Recall that the {\em core} of $H$ in $G$ is the largest normal subgroup of $G$ contained in $H$:
\begin{align*}
    \operatorname{Core}_G(H) = \bigcap_{g \in G} gHg^{-1}.
\end{align*}
The number of distinct left actions on the left cosets is given by $|G|/|\operatorname{Core}_G(H)|$, while the number of distinct right actions is $|N_G(H)|/|H|$. This follows directly from the following proposition:

\begin{proposition}\label{prop:ImageLandR}
    Let $G$ be a finite group and $H \leq G$ be a subgroup. Then
  $\ker L\cong \operatorname{Core}_G(H)$ and $\ker R\cong H$. Consequently,
    \begin{align*}
        \operatorname{Im}(L) \cong G/\operatorname{Core}_G(H) \text{ and } \operatorname{Im}(R) \cong N_G(H)/H.
    \end{align*}
\end{proposition}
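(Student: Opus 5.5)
We compute the two kernels directly from the definitions in \eqref{eq:defL} and \eqref{eq:defR}. The isomorphism statements for the images then follow from the first isomorphism theorem. By Proposition~\ref{prop:LRhomomorpshim}, $L$ is a homomorphism, so $\ker L$ is a normal subgroup of $G$. The map $R$ is only an anti-homomorphism. To handle it, we note that $g\mapsto R(g^{-1})$ is a genuine homomorphism $N_G(H)\to S_{G/H}$ with the same kernel and the same image as $R$. Alternatively, we compose $R$ with the inversion anti-automorphism of $S_{G/H}$, which preserves the image as a subgroup. Either way, the first isomorphism theorem applies.

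\emph{Kernel of $L$.} An element $g$ lies in $\ker L$ iff $gxH=xH$ for every $x\in G$. This holds iff $x^{-1}gx\in H$ for all $x$, i.e.\ iff $g\in xHx^{-1}$ for all $x$. Hence $\ker L=\bigcap_{x\in G}xHx^{-1}=\operatorname{Core}_G(H)$, as an equality and not merely an isomorphism. This gives $\operatorname{Im}(L)\cong G/\operatorname{Core}_G(H)$.

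\emph{Kernel of $R$.} Let $g\in N_G(H)$. Then $g\in\ker R$ iff $xgH=xH$ for all $x$. Taking $x=e$ gives $gH=H$, i.e.\ $g\in H$. Conversely, if $g\in H$ then $gH=H$, so $xgH=xH$ for every $x$. Thus $\ker R=H$. Note that $H\le N_G(H)$ and that $H$ is normal in $N_G(H)$ by definition of the normalizer, so the quotient makes sense and $\operatorname{Im}(R)\cong N_G(H)/H$. The counts $|G|/|\operatorname{Core}_G(H)|$ and $|N_G(H)|/|H|$ then follow by taking orders.

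The only point needing care is the anti-homomorphism issue for $R$. We must make sure the kernel is still a normal subgroup and that the first isomorphism theorem applies. The reduction to $g\mapsto R(g^{-1})$ handles this, since $R(g^{-1})=\mathrm{id}$ iff $R(g)=\mathrm{id}$. We should also check that $R$ is well defined on cosets, which uses $g\in N_G(H)$: if $xH=yH$ then $xgH=xHg=yHg=ygH$. We treat this as part of the setup rather than an obstacle.
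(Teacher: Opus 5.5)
Your proof is correct and follows the paper's argument: both compute $\ker L=\bigcap_{x}xHx^{-1}=\operatorname{Core}_G(H)$ and $\ker R=H$ by the same chain of coset equivalences, then apply the first isomorphism theorem. Your passage to the homomorphism $g\mapsto R(g^{-1})$, which has the same kernel and image as $R$, supplies a detail the paper leaves implicit when it applies the first isomorphism theorem to the anti-homomorphism $R$.
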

We will make extensive use of the permutation matrices that encode this action. 
\begin{definition}\label{def:matrices}
Let $G$ be a finite group and $H \leq G$ be a subgroup. Let $G/H = \{x_1H, x_2H, \ldots, x_mH\}$, where $(x_i)_i$ is a fixed set of coset representatives and $m:=[G:H]$ is the index of $H$ in $G$. For each $g \in G$, define an $m \times m$ permutation matrix $\bfL(g)$ by setting
    \begin{align*}
        \bfL(g)_{i,j} = 
            \1_{\{L(g)(x_jH) = x_iH\}}.
    \end{align*}
    Similarly, for each $g \in N_G(H)$, define the $m \times m$ permutation matrix $\bfR(g)$ as:
    \begin{align*}
        \bfR(g)_{i,j} =  \1_{\{R(g)(x_jH) = x_iH\}}.
    \end{align*}
\end{definition}

Switching from permutations to permutation matrices, we also use a new notation for the permutation (anti-) representation, writing 
   \begin{align*}
        \bfL : G \rightarrow GL_m(\mathbb{C})
    \end{align*}
and 
   \begin{align*}
        \bfR : N_G(H) \rightarrow GL_m(\mathbb{C}).
    \end{align*}
for the left and right action on the cosets.

The following example illustrates the construction of permutation matrices representing left and right group actions on cosets.
\begin{example}\label{example: D1}
    Let $G=\langle r,s\mid r^4=s^2=(sr)^2=1\rangle$ be the dihedral group of order $8$ and $H=\langle s\rangle=\{e,s\}$ be a subgroup. Let us fix the representatives as $x_1=e, x_2=r, x_3=r^2$ and write the cosets as follows:
    \begin{align*}
        G/H=\{ H, rH, r^2H,r^3H \}
    \end{align*}
Since $\operatorname{Core}_G(H)=\{e\}$, we have $G/\operatorname{Core_G(H)}\cong G$. 
By Proposition \ref{prop:ImageLandR}, we have $|\operatorname{Im}(\bfL)|=8$, so there are $8$ unique permutation matrices in the image of $\bfL$. Let us illustrate how these entries are computed from Definition \ref{def:matrices}. 
Since $$L(r)(x_1H) = r(eH) = rH = x_2H,$$ we find that $\bfL(r)_{2,1}=1$. Similarly, for the generator $s$, we have 
  $$L(s)(x_2H) = s(rH) = srH = r^3sH = r^3H = x_4H,$$ which gives $\bfL(s)_{4,2} = 1$. Continuing in the same manner, we obtain
    \begin{align*}
        \bfL(r)=\begin{bmatrix}
            0 & 0 & 0 & 1\\
            1 & 0 & 0 & 0\\
            0 & 1 & 0 & 0 \\
            0 & 0 & 1 & 0
        \end{bmatrix},\quad  \bfL(s)=\begin{bmatrix}
            1 & 0 & 0 & 0\\
            0 & 0 & 0 & 1\\
            0 & 0 & 1 & 0 \\
            0 & 1 & 0 & 0
        \end{bmatrix}
    \end{align*}
    The remaining left action matrices can be constructed using the property of the representation: $$\bfL(e)=\bfL(r)^4=\bfL(s)^2=\bfI, \quad \bfL(rs)=\bfL(r)\bfL(s),$$ etc. 
 
    Next, the normalizer is $N_G(H)=\{e,s,r^2,r^2s\}$ and we find
    \begin{align*}
        {N_G(H)}/{H}=\{H, r^2H\}.
    \end{align*}
 Again, by Proposition \ref{prop:ImageLandR}, there are 2 unique right action permutation matrices, namely, $\bfR(e)=\bfR(s)$ and $\bfR(r^2)=\bfR(r^2s)$. Explicitly,
    \begin{align*}
        \bfR(e)=\bfR(s)=\bfI; \quad
        \bfR(r^2)=\bfR(r^2s)=\begin{bmatrix}
    0 & 0 & 1 & 0\\
    0 & 0 & 0 & 1\\
    1 & 0 & 0 & 0 \\
    0 & 1 & 0 & 0
\end{bmatrix}
    \end{align*}$\triangleleft$
\end{example}

For a finite field $F$ and a finite group $G$, the group algebra $F[G]$ is a set of formal linear combinations
\begin{align*}
    F[G] = \Big\{\sum_{g\in G} \lambda_gg : \lambda_g\in F \Big\}.
\end{align*}
Elements $\ah=\sum_{g\in G}\alpha_gg$ and $\bh=\sum_{g\in G}\beta_gg$ can be added and multiplied 
according to 
\begin{align*}
    \ah + \bh=  \sum_{g\in G}(\alpha_g+\beta_g)g\\
    \ah*\bh =  \sum_{g,h\in G}(\alpha_h\beta_{h^{-1}g})g, 
\end{align*}

Let $H\le G$ be a subgroup and let 
  \begin{align*}
\ah=\sum_{g\in G} \alpha_gg, \quad
\bh=\sum_{g\in N_G(H)} \beta_gg.
\end{align*}
The maps $\bfL$ and $\bfR$ define a natural action on $F[G]$ anf $F[N_G(H)]$, respectively, according to
\begin{align*}
    \bfL(\ah) = \sum_{g\in G}\alpha_g\bfL(g), \quad
    \bfR(\bh) = \sum_{g\in N_G(H)}\beta_g\bfR(g).
\end{align*}
Note that $\bfL(\ah_1*\ah_2)=\bfL(\ah_1)\bfL(\ah_2)$, $\bfR(\bh_1*\bh_2)=\bfR(\bh_2)\bfR(\bh_1)$, and 
$[\bfL(\ah),\bfR(\bh)]=0$ by Proposition \ref{prop:commutativity}.

\section{\label{sec:CodeConstruction} Code Construction}
In this section, we present our code construction, which is based on the framework of two-block CSS codes. Originally introduced 
by Kovalev and Pryadko \cite{PhysRevA.88.012311}, these codes can be viewed as a generalization of an early paper on bicycle codes, proposed by MacKay \textit{et al.}~\cite{MacKayBicyle}. A {\em two-block CSS code} is defined by parity-check matrices of the form
\begin{align*}
    H_X = [A \mid B] \quad \text{and} \quad H_Z = [-B^T \mid A^T],
\end{align*}
where $A$ and $B$ are square commuting matrices over a finite field. Using the matrices $H_X$ and $H_Z$ in this form
enables one to fulfill effortlessly the CSS commutativity condition: indeed, $H_X H_Z^T = -AB + BA = 0$. Our construction belongs to two-block CSS codes and can be described as follows:
\begin{construction}\label{cons:Generalized2BGA}
Let $F$ be a finite field, let $G$ be a finite group, let $H\le G$, and let $N=N_G(H)$ be the normalizer of $H$ in $G$. For a given $\ah\in F[G]$ and $\bh\in F[N]$,  define a two-block CSS code $Q_G^H(\ah,\bh)$ via the parity-check matrices
    \begin{align*}
        H_X=[\bfL(\ah) \mid \bfR(\bh)]\quad \text{and}\quad H_Z=[-\bfR(\bh)^T\mid \bfL(\ah)^T].
    \end{align*}
We call the code $Q_G^H$ {\em binary} if $F={\mathbb F}_2$.
\end{construction}

Before analyzing this construction, we comment on its relation to existing work. 
    Two-block group algebra (2BGA) codes \cite{Pryadko2BGA} form a large family of qLDPC codes, including many well-known families, such as Generalized Bicycle (GB) codes \cite{PhysRevA.88.012311} and Bivariate Bicycle (BB) codes \cite{Bravyi2024}. The code family 
$Q_G^H(\ah,\bh)$ generalizes the 2BGA construction: indeed if $H\triangleleft G$ is normal, then $\operatorname{Core}_G(H)= H$. By Proposition \ref{prop:ImageLandR}, in this case, the matrices $\bfL$ are the permutation matrices representing the left regular action of the quotient group $G/H$ on itself. Similarly, because $N_G(H) = G$ when $H$ is a normal subgroup, the  matrices $\bfR$ describe the right regular action of $G/H$, and 
thus, the code $Q_G^H(\ah,\bh)$ reduces to the {2BGA} code defined over the group $G/H$. Further specializing $G$, we obtain most families of
BB-related codes known in the literature. The precise reductions are summarized in Table \ref{tab:code_families_relationship}, and 
Figure \ref{fig:CodeFamilies} presents a visual illustration of the interrelations between these code families.

    Let us define the support sets of the group algebra elements as
    \begin{align*}
        &\supp(\ah) := \{g \in G : \alpha_g \neq 0\}, \\
        &\supp(\bh) := \{g \in N_G(H) : \beta_g \neq 0\},
    \end{align*}
    and refer to the number of non-zero terms as their respective \textit{weights} denoted by $w_{\ah} = |\operatorname{supp}(\ah)|$ and $w_{\bh} = |\operatorname{supp}(\bh)|$.  
In the following proposition, we record the parameters provides an upper bound on the stabilizer weights of $Q_G^H(\ah,\bh)$.

\begin{proposition}\label{prop:CodeProperties}
    Let $Q_G^H(\ah,\bh)$ be the code defined in Construction \ref{cons:Generalized2BGA} and let $m=[G:H]$  Then $Q_G^H(\ah,\bh)$ has the parameters
    \begin{align*}
        &n = 2m, \\
        &k = n - \rank(H_X) - \rank(H_Z), \\
        &d = \min(d_X, d_Z),
    \end{align*}
    where the $X$- and $Z$-distances are given by
    \begin{align*}
        &d_X = \min\{\wt(v) : v \in \ker H_X \setminus \operatorname{Im} H_Z^T\}, \\ 
        &d_Z = \min\{\wt(v) : v \in \ker H_Z \setminus \operatorname{Im} H_X^T\}.
    \end{align*}
The weight of the stabilizer generators of the code $Q_G^H(\ah,\bh)$ satisfies  $w \leq w_{\ah} + w_{\bh}$.
\end{proposition}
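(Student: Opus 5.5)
The plan is to treat $Q_G^H(\ah,\bh)$ as an instance of the general CSS framework and check three things: the CSS condition, the standard parameter formulas, and the weight bound.

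\textbf{CSS condition.} By Construction \ref{cons:Generalized2BGA}, $H_X$ and $H_Z$ are $m\times 2m$ matrices over $F$. Hence the code is supported on $n=2m$ qudits (qubits when $F=\mathbb F_2$), with $m=[G:H]$ the number of left cosets (Definition \ref{def:matrices}). The CSS condition requires $H_XH_Z^T=0$. We compute
\begin{align*}
H_XH_Z^T=-\bfL(\ah)\bfR(\bh)+\bfR(\bh)\bfL(\ah).
\end{align*}
Expand $\bfL(\ah)=\sum_g\alpha_g\bfL(g)$ and $\bfR(\bh)=\sum_{h}\beta_h\bfR(h)$ by bilinearity. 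Then apply Proposition \ref{prop:commutativity} termwise: for $g\in G$ and $h\in N_G(H)$, the permutation matrices $\bfL(g)$ and $\bfR(h)$ commute. This gives $[\bfL(\ah),\bfR(\bh)]=0$, so the two terms cancel and the CSS condition holds.

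\textbf{Parameters.} Once the code is CSS, the formulas for $n$, $k$ and $d$ are the standard CSS ones. The code space is isomorphic to $(\ker H_X/\operatorname{Im}H_Z^T)\otimes$ its $Z$-counterpart, so
\begin{align*}
k=\dim\ker H_X-\rank H_Z=n-\rank H_X-\rank H_Z.
\end{align*}
For the distance, a nontrivial logical operator of one type is a vector in $\ker H_X\setminus\operatorname{Im}H_Z^T$, or in $\ker H_Z\setminus\operatorname{Im}H_X^T$, and the minimum of the two weights gives $d$. I would simply cite the usual CSS dimension and distance formulas here.

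\textbf{Weight bound.} A row of $H_X$ is the concatenation of a row of $\bfL(\ah)$ and a row of $\bfR(\bh)$. Row $i$ of $\bfL(\ah)$ is $\sum_g\alpha_g(\text{row } i \text{ of } \bfL(g))$. Each $\bfL(g)$ is a permutation matrix, so it contributes at most one nonzero entry per row. Summing over $g\in\supp(\ah)$ therefore gives at most $w_{\ah}$ nonzeros; this is an upper bound rather than equality, because distinct $g$ may induce the same permutation (Proposition \ref{prop:ImageLandR}) or hit the same column, and coefficients can cancel. The same argument bounds a row of $\bfR(\bh)$ by $w_{\bh}$. Hence each $X$-stabilizer has weight at most $w_{\ah}+w_{\bh}$. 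For $H_Z$, a row of $\bfR(\bh)^T$ is a column of $\bfR(\bh)$, and the transpose of a permutation matrix is again a permutation matrix. So the identical argument bounds each $Z$-stabilizer weight by $w_{\ah}+w_{\bh}$ as well.

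There is no real obstacle here. The only point needing care is that the bound must be stated as an inequality, since coset actions are generally not faithful and several group elements can collapse onto the same permutation.
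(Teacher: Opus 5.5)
Your proposal is correct and takes the same route as the paper. The paper reads off $n=2m$ from the $m\times m$ blocks, cites the standard CSS formulas for $k$ and $d$ (with orthogonality $H_XH_Z^T=-\bfL(\ah)\bfR(\bh)+\bfR(\bh)\bfL(\ah)=0$ coming from Proposition~\ref{prop:commutativity}), and calls the weight bound immediate; your permutation-matrix row and column count is just the explicit version of that. One small slip: describing the code space as a tensor product of $\ker H_X/\operatorname{Im}H_Z^T$ with its $Z$-counterpart is not literally right, since the code space has dimension $|F|^k$ with $k=\dim(\ker H_X/\operatorname{Im}H_Z^T)$, but the formula for $k$ you then use is the correct standard one.
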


\begin{proof}
 The matrices $\bfL$ and $\bfR$ each have dimension $m\times m$, making the code length $n$ equal to $2m$. The dimension and distance follow
 since the code $Q_G^H(\ah,\bh)$ is a CSS code.  The statement about the stabilizer weight is immediate.
\end{proof}
We say that the code $Q_G^H(\ah,\bh)$ has \textit{regular stabilizer weight} if the weights of all stabilizer generators in the matrices $H_X$ and $H_Z$ satisfy $w = w_{\ah} + w_{\bh}$.

\begin{remark}
    By Proposition \ref{prop:ImageLandR}, choosing the support of the group algebra elements $\ah$ and $\bh$ from the representatives of the cosets $G/\operatorname{Core}_G(H)$ and $N_G(H)/H$ ensures that the underlying permutation matrices composing the sums $\bfL(\ah)$ and $\bfR(\bh)$ remain completely distinct 
\end{remark}

\subsection{Planar connectivity and code thickness}
In addition to the code parameters and noise tolerance, implementations call for CSS codes with planar or close-to-planar Tanner graphs, which
simplify connectivity and design of syndrome extraction circuits. Code thickness, which, roughly speaking, measures how close the Tanner graph is to being planar, is one way to quantify this requirement. In this section, we analyze the thickness of our code construction.

To define the Tanner graph of an $[[n, k, d]]$ CSS code $\mathcal{Q}$, recall that it is described by a set of data qubits $V_D=\{q_1,q_2,\ldots,q_n\}$, a set of $X$-checks $V_X=\{X_1,X_2,\ldots, X_{r_X}\}$, and a set of $Z$-checks $V_Z=\{ Z_1,Z_2,\ldots,Z_{r_Z} \}$. Each $X$- or $Z$-check corresponds to an $X$- or $Z$-stabilizer generator of the code $\mathcal{Q}$. If the stabilizer generators are independent, then $n=k+r_X+r_Z$.
\begin{definition}[Tanner graph]
The $X$-{\em Tanner graph} $\tau_X=(V_D\cup V_X, E_X)$ of the code $\mathcal{Q}$ is the bipartite graph whose vertex set is $V_D\cup V_X$ and the connections $E_X$ are given by the matrix $H_X$. Similarly, the $Z$-{\em Tanner graph} has vertices $V_D\cup V_Z$ and connections $E_Z$ defined by the matrix $H_Z$.
Finally, the full {\em Tanner graph} of the code $\mathcal{Q}$ is defined as the union $\tau=\tau_X\cup\tau_Z$, with the vertex set $V = V_D \cup V_X \cup V_Z$ and the edge set $E = E_X \cup E_Z$. 
\end{definition}

The planarity of the Tanner graph of a quantum code can be an important consideration for certain hardware architectures, since crossing connections between qubits can induce additional crosstalk noise. Therefore, planar codes, such as surface codes, are favorable due to these practical considerations. When planarity is not achievable, a solution can be codes whose Tanner graphs can be decomposed into a small number of planar subgraphs. 
This approach can be quantified using the concept called graph thickness. Formally, the thickness of a graph, denoted $\theta$ below, is defined as the minimum number of planar subgraphs into which its edges can be partitioned. 

CSS codes with low thickness were recently studied in \cite{PhysRevLett.129.050504}. By a classic result from graph theory \cite{HALTON1991219}, CSS codes whose Tanner graph has maximal degree $\delta$,
can be implemented in $\lceil\delta/2\rceil$ planar layers. Using this result for the code $Q_G^H(\ah,\bh)$, we obtain a
bound $\theta \leq \lceil(w_{\ah}+w_{\bh})/2\rceil$. Our main result in this section states that, relying on the highly structured Tanner graph of $Q_G^H(\ah,\bh)$, this bound can be improved as follows.

\begin{lemma}\label{lemma:thickness}
    Let $Q_G^H(\ah,\bh)$ be the code of length $n$ defined in Construction \ref{cons:Generalized2BGA} constructed over the binary field. 
     Assume the code has regular stabilizer weight of $w=w_{\ah}+w_{\bh}$,
    where $w_{\ah}$ and $w_{\bh}$ are the weights of the group algebra elements $\ah \in \mathbb{F}_2[G]$ and $\bh\in\mathbb{F}_2[N_G(H)]$. \ref{prop:CodeProperties}
     Then the thickenss of the Tanner graph of $Q_G^H(\ah,\bh)$ satisfies the inequalities
    \begin{align*}
        \Big\lceil \frac{nw}{4(n-1)}\Big\rceil\leq \theta \leq \max\left(\left\lceil{\frac{w_{\ah}}{2}}\right\rceil, \left\lceil{\frac{w_{\bh}}{2}}\right\rceil, \left\lceil \frac{w}{3}\right\rceil \right).
    \end{align*}
In particular, if $w_{\ah} = w_{\bh}$, then the thickness of a code $Q_G^H(\ah,\bh)$ of length $n\geq 4(\lfloor w/4\rfloor +1)/(4-(w\bmod 4))$ satisfies
    \begin{align*}
        \left\lfloor\frac{w}{4}\right\rfloor + 1 \leq \theta \leq \left\lceil \frac{w}{3} \right\rceil.
    \end{align*}
\end{lemma}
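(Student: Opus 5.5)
The lower bound will come from Euler's formula, and the upper bound from an explicit partition of the edges into planar layers, each built from the terms of $\ah$ and $\bh$.

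\emph{Lower bound.} I would use the standard fact that a simple planar bipartite graph on $V\ge 3$ vertices has at most $2V-4$ edges. The Tanner graph $\tau$ has $V=2n$ vertices: $n$ data qubits, $n/2$ $X$-checks and $n/2$ $Z$-checks. It is bipartite between qubits and checks. Under regular stabilizer weight, each of the $n$ checks has degree $w$, so $|E|=nw$. Each planar layer therefore carries at most $4n-4$ edges, which gives $\theta\ge \lceil nw/(4(n-1))\rceil$. For the special case, I would check arithmetically that $nw/(4(n-1))>\lfloor w/4\rfloor$ exactly when $n$ exceeds the stated threshold; this yields $\theta\ge\lfloor w/4\rfloor+1$.

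\emph{Upper bound: edge classes.} Write $\ah=\sum_i a_i$ and $\bh=\sum_j b_j$, with $w_{\ah}$ and $w_{\bh}$ terms respectively. Label the left and right qubit blocks by the cosets $x_kH$, and do the same for the $X$- and $Z$-checks. Each term $a_i$ contributes two perfect matchings: $X$-check $c$ to left qubit $\bfL(a_i)c$, and right qubit $q$ to $Z$-check $\bfL(a_i)q$. Each $b_j$ contributes two matchings: $X$-check $c$ to right qubit $\bfR(b_j)c$, and left qubit $p$ to $Z$-check $\bfR(b_j)p$. Regular weight makes these classes edge-disjoint, and together they cover $E$. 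I would use three kinds of layers.
\begin{itemize}
\item \textbf{Pair layers.} The edges of two $a$-terms form a union of two matchings on $X\cup\text{left}$ and on $\text{right}\cup Z$. This is a disjoint union of even cycles (or double edges), hence planar. The same holds for two $b$-terms, and for a single term.
\item \textbf{Triple layers of type $\{a_1,a_2,b_1\}$.} The $a$-edges split into cycles $C$ on $X\cup\text{left}$ and cycles $C'$ on $\text{right}\cup Z$. The $b_1$-edges form the matching $\varphi$ that sends $X$-check $c\mapsto$ right qubit $\bfR(b_1)c$ and left qubit $p\mapsto$ $Z$-check $\bfR(b_1)p$. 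By Proposition~\ref{prop:commutativity}, the edge $c\sim\bfL(a_i)c$ is mapped to $\bfR(b_1)c\sim \bfR(b_1)\bfL(a_i)c=\bfL(a_i)\bfR(b_1)c$, which is again an $a_i$-edge. So $\varphi$ is a graph isomorphism between the two cycle systems. Each component is then two copies of a cycle joined by rungs between corresponding vertices, i.e.\ a prism (or a degenerate ladder/multi-edge version), which is planar.
\item \textbf{Triple layers of type $\{b_1,b_2,a_1\}$.} These are planar by the symmetric argument, again using that $\bfL$ and $\bfR$ commute.
\end{itemize}

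\emph{Upper bound: counting.} It remains to pack the $w_{\ah}$ $a$-terms and $w_{\bh}$ $b$-terms into as few layers as possible, using layers of the forms $\{a,a,b\}$, $\{b,b,a\}$, $\{a,a\}$, $\{b,b\}$ and singletons. When $w_{\ah}$ and $w_{\bh}$ are balanced, I would greedily alternate $aab$ and $bba$ layers and clean up leftovers with pairs; this uses $\lceil w/3\rceil$ layers. When one weight dominates, say $w_{\ah}\ge 2w_{\bh}$, every $b$ rides along in an $aab$ layer and $\lceil w_{\ah}/2\rceil$ layers suffice. Either way the count is $\max(\lceil w_{\ah}/2\rceil,\lceil w_{\bh}/2\rceil,\lceil w/3\rceil)$. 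For $w_{\ah}=w_{\bh}$ the maximum equals $\lceil w/3\rceil$, since $w/2\cdot\frac12=w/4\le w/3$, and this gives the special case.

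\emph{Main obstacle.} The hard step is the planarity of the triple layers: one must verify rigorously that the commutation of $\bfL$ and $\bfR$ forces every component to be a prism. Degenerate cases need care, namely $a_1^{-1}a_2$ giving 2-cycles, which produce multi-edges that are harmless for planarity. The other delicate step is the integer bookkeeping in the packing argument, where I would check the residues of $w_{\ah}$ and $w_{\bh}$ modulo $2$ and $3$ case by case.
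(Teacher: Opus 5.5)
Your proposal is correct and follows the same route as the paper: the Euler bound $E\le 2V-4$ with $V=2n$ and $E=nw$ gives the lower bound, and the upper bound comes from packing the terms into layers with at most two $\bfL$-terms, at most two $\bfR$-terms and at most three terms in total, each layer being a disjoint union of prisms (the paper's ``wheel graphs'') because $\bfL$ and $\bfR$ commute. One small slip: $nw/(4(n-1))>\lfloor w/4\rfloor$ holds for every $n$, so $\theta\ge\lfloor w/4\rfloor+1$ needs no condition on $n$; the stated threshold is exactly the condition $nw/(4(n-1))\le\lfloor w/4\rfloor+1$, i.e., it is what makes the Euler bound coincide with $\lfloor w/4\rfloor+1$.
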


\begin{proof}
    The proof for the upper bound is a straightforward generalization of Lemma 2 in \cite{Bravyi2024}. Let $\mathcal{T} = \mathcal{T}_1 \cup \mathcal{T}_2 \cup \ldots \cup \mathcal{T}_\theta$ be an edge-partition of the Tanner graph of $Q_G^H(\ah,\bh)$ into planar subgraphs
    Let each subgraph $\mathcal{T}_i$ be a bipartite graph defined by the parity-check matrices $H_X^{(i)}$ and $H_Z^{(i)}$. We claim that any such subgraph defined by the matrices of the form
    \begin{align}\label{eq:layerMatrices1}
        &H_X^{(i)}=[\bfL(g_r)+\bfL(g_s)\mid \bfR(g_t)],\\
        &H_Z^{(i)}=[\bfR(g_t^{-1})\mid \bfL(g_r^{-1})+\bfL(g_s^{-1})]
    \end{align}
    or
    \begin{align}\label{eq:layerMatrices2}
        &H_X^{(i)}=[\bfL(g_r)\mid \bfR(g_t)+\bfR(g_q)],\\
        &H_Z^{(i)}=[\bfR(g_t^{-1})+\bfR(g_q^{-1})\mid \bfL(g_s^{-1})]
    \end{align}
    is planar for every $g_r,g_s\in G$ and $g_q,g_t\in N_G(H)$. For the case of BB codes, this is proved in 
    \cite{Bravyi2024}; see the proof of their Lemma 2,
   where $G=\mathbb{Z}_l\times \mathbb{Z}_m$ and $H=\{e\}$. However, their result affords a straightforward 
   extension to any $Q_G^H(\ah,\bh)$ code. Namely, following the argument in \cite{Bravyi2024}, one can show that each subgraph $\mathcal{T}_i$ is composed of disconnected \textit{wheel graphs.} Each wheel graph consists of an inner cycle connected to an outer cycle of the same length. The length of both cycles is $l$, which is an integer given by $(g_rg_s^{-1})^l=I$ (or $(g_tg_q^{-1})^l=I$). There are $l$ radial edges equally spaced from each other. A detailed description of the wheel graph is given in the Appendix \ref{appendix:wheel graph}. Proposition \ref{prop:commutativity} implies that 
    such a graph is well defined, and since each individual wheel graph is planar, the subgraph $\mathcal{T}_i$ corresponding to the $i$-th layer must be planar.

    Therefore, an upper bound on the thickness can be found by partitioning the $\bfL$ and $\bfR$ matrices into different layers obeying the forms in \eqref{eq:layerMatrices1} and \eqref{eq:layerMatrices2}. Let $w_{\ah}^{(i)}$ and $w_{\bh}^{(i)}$ be the number of $\bfL$ and $\bfR$ matrices assigned to the $i$-th layer. For the subgraphs to take the planar forms described above, we must simultaneously satisfy $w_{\ah}^{(i)}, w_{\bh}^{(i)} \leq 2$ and $w_{\ah}^{(i)} + w_{\bh}^{(i)} \leq 3$ for each layer. Partitioning the $w_{\ah}$ total $\bfL$ matrices and $w_{\bh}$ total $\bfR$ matrices under these constraints directly yields the upper bound stated in the lemma.

    The proof for the lower bound follows from the foundational corollary of Euler's formula for bipartite graphs: Any simple, planar bipartite graph with $V\geq 3$ vertices must satisfy $E\leq 2V-4$, where $E$ is the number of edges \cite[Thm.6.1.23]{west2002}. This corollary yields a lower bound on the thickness, that is
    \begin{align*}
        \left\lceil \frac{E}{2V-4} \right\rceil\leq \theta.
    \end{align*}
    Note that we have $n/2$ Z checks, $n/2$ X checks, and $n$ data qubits, making the total number of vertices $V=2n$. Since we assume the code attains the exact regular stabilizer weight $w=w_{\ah}+w_{\bh}$, we have exactly $E=nw$ edges, yielding the bound
    \begin{align*}
        \left\lceil \left(\frac{n}{n-1}\right)\frac{w}{4}\right\rceil\leq \theta. 
    \end{align*}
    Note that if the code length satisfies $n\geq 4(\lfloor w/4\rfloor +1)/(4-(w\bmod 4))$, then
    \begin{align*}
        \left\lceil \left(\frac{n}{n-1}\right)\frac{w}{4}\right\rceil = \left\lfloor \frac{w}{4} \right\rfloor + 1,
    \end{align*}
    completing the proof.
\end{proof}

Note that any weight-$6$ code $Q_G^H(\hat a,\hat b)$ with $w_{\ah}=w_{\bh}=3$
has thickness exactly $\theta=2$. Indeed, the upper bound in
Lemma~\ref{lemma:thickness} gives $\theta\le 2$, while the Euler lower bound
gives $\theta\ge 2$ for all $n\ge 4$. This condition is automatically
satisfied here: since the block $\bf R(\hat b)$ has row weight $3$ and has
 $[G:H]$ columns, we must have $[G:H]\ge 3$, and hence the code length
\(n=2[G:H]\) satisfies \(n\ge 6\).
Similarly, any weight-$8$ $Q_G^H(\ah,\bh)$ code with $w_{\ah}=w_{\bh}=4$ must have a thickness of exactly $\theta=3$ by Lemma \ref{lemma:thickness}.

\section{\label{sec:CodeSearch} Code Search}
To find codes with good parameters, we performed a computer search in the set of quantum LDPC codes introduced in this paper, examining
binary codes with component weights $w_\ah=w_\bh=3$ and $4$. We limited ourselves to non-Abelian groups and their non-normal subgroups to expand our search space beyond the 2BGA code family. To improve the efficiency of the search process, it is important to eliminate equivalent codes in advance,
which raises the question of characterizing equivalences between the codes in the family $Q_G^H(\ah,\bh)$. 
\subsection{Code Equivalences}

We say that two quantum error correction codes are equivalent if one is obtained from the other by performing some set of single-qubit unitaries and qubit permutations \cite{Gottesman2024}. Such equivalent codes share the same logical parameters. In the context of CSS codes, code equivalence corresponds to applying certain column operations to the parity-check matrices, i.e., permutations and non-zero scalar multiplications. 

Identifying equivalent codes saves computational resources and enables one to examine larger code sets, increasing the chances of finding
good codes. For this purpose, the authors of \cite{Pryadko2BGA} studied equivalent codes for 2BGA codes (see Theorem 6 in \cite{Pryadko2BGA}). As discussed earlier, their results apply to the codes $Q_G^H(\ah,\bh)$ when $H$ is a normal subgroup of $G$. 
Since we focus on non-normal subgroups, we modify the argument of \cite{Pryadko2BGA} to show that acting on the code $Q_G^H(\ah,\bh)$ by group elements and 
constant multiplication results in an equivalent code.

\begin{lemma}\label{lemma:equivalence1}
    Let $Q_G^H(\ah,\bh)$ be a CSS code defined in Construction \ref{cons:Generalized2BGA}. Let $g_a\in G$, $g_b\in N_G(H)$, and $\alpha,\beta\in F\setminus \{0\}$.
    Then the code $Q_G^H(\ah,\bh)$ is equivalent to the code $Q_G^H(\ah * (\alpha g_a), (\beta g_b)*\bh)$.
\end{lemma}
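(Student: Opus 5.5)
We compute the new parity-check matrices and show they are the old ones up to invertible row operations and column scalings. By the multiplicativity rules recorded after Definition~\ref{def:matrices}, namely $\bfL(\ah_1*\ah_2)=\bfL(\ah_1)\bfL(\ah_2)$ and $\bfR(\bh_1*\bh_2)=\bfR(\bh_2)\bfR(\bh_1)$, we get
\begin{align*}
\bfL(\ah*(\alpha g_a))=\alpha\,\bfL(\ah)\bfL(g_a),\qquad \bfR((\beta g_b)*\bh)=\beta\,\bfR(\bh)\bfR(g_b).
\end{align*}
Write $P_a:=\bfL(g_a)$ and $P_b:=\bfR(g_b)$. Both are permutation matrices, so $P_a^{-1}=P_a^T$ and $P_b^{-1}=P_b^T$. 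The new check matrices are therefore
\begin{align*}
H_X'&=[\alpha\bfL(\ah)P_a\mid \beta\bfR(\bh)P_b],\\
H_Z'&=[-\beta P_b^T\bfR(\bh)^T\mid \alpha P_a^T\bfL(\ah)^T].
\end{align*}

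For the $X$ part, let $D=\operatorname{diag}(\alpha P_a,\beta P_b)$, a $2m\times 2m$ monomial matrix. Then $H_X'=H_X D$, so $H_X'$ is obtained from $H_X$ by a qubit permutation together with nonzero column scalings. The $X$-checks in the first block are permuted by $P_a$ and those in the second block by $P_b$.

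For the $Z$ part we must apply the same column transformation. Up to the transpose convention, the permutation $D$ acts on $Z$-check columns through $D^{-T}$ (or $D$ itself when $F=\mathbb{F}_2$). We aim to show
\begin{align*}
H_Z'=M\,H_Z\,\operatorname{diag}(\alpha^{-1}P_a,\ \beta^{-1}P_b)\cdot c
\end{align*}
for an invertible row operation $M$ and a nonzero scalar $c$. Take $M=P_b^TP_a^T$ up to scalars. Using Proposition~\ref{prop:commutativity} to commute $P_a$ past $\bfR(\bh)^T$ and $P_b$ past $\bfL(\ah)^T$ (transposes of commuting permutation matrices still commute), the first block becomes
\begin{align*}
-P_b^TP_a^T\,\bfR(\bh)^T P_a=-P_b^T\bfR(\bh)^T,
\end{align*}
and the second block becomes
\begin{align*}
P_b^TP_a^T\,\bfL(\ah)^T P_b=P_a^T\bfL(\ah)^T,
\end{align*}
as required. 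The row operation $M$ is a permutation, so it does not change the stabilizer group, and the scalars are absorbed by rescaling rows.

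The hard step is the scalar bookkeeping. The $X$ side scales the two column blocks by $\alpha$ and $\beta$, while the $Z$ side requires the inverse scalings $\alpha^{-1},\beta^{-1}$ on the same qubits so that a single-qubit Clifford (multiplication by $\gamma$ on $X$, by $\gamma^{-1}$ on $Z$) realises both. I would check this consistency by verifying directly that $H_X'H_Z'^T=0$ is preserved and that the codespaces $\ker H_X'$ and $\ker H_Z'$ are exactly the images of $\ker H_X$ and $\ker H_Z$ under the monomial maps $D$ and $D^{-T}$. Over $\mathbb{F}_2$ all scalars equal $1$ and this issue disappears.

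Consequently the stabilizer groups correspond under the qubit permutation $\operatorname{diag}(P_a,P_b)$ together with the scalings. Hence $k$ is unchanged because ranks are preserved, and $d_X,d_Z$ are unchanged because monomial maps preserve Hamming weight. This gives the claimed equivalence.
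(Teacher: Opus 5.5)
Your proposal is essentially the paper's proof. It uses the same monomial column transform $\operatorname{diag}(\alpha P_a,\beta P_b)$ on the $X$ side and its contragredient $\operatorname{diag}(\alpha^{-1}P_a,\beta^{-1}P_b)$ on the $Z$ side. It also uses the same row operation $M=\alpha\beta\,P_b^TP_a^T$, with Proposition~\ref{prop:commutativity} letting $P_a$ and $P_b$ cancel across the blocks.

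The scalar step you flag as hard is a one-line check, and you should just do it. After your $M=P_b^TP_a^T$, the first block of $H_Z'$ needs $c\alpha^{-1}=\beta$ and the second needs $c\beta^{-1}=\alpha$. Both hold with the single scalar $c=\alpha\beta$. The detour through $H_X'H_Z'^T=0$ and kernels is therefore unnecessary; it also essentially restates the claim you are trying to prove.
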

\begin{proof}
    Let $H_X,H_Z$ be the parity-check matrices of the code $Q_G^H(\ah,\bh)$ and $H_X^\prime,H_Z^\prime$ be the parity-check matrices of the code $Q_G^H(\ah * (\alpha g_a), (\beta g_b)*\bh)$. We need to show that $H_X^\prime$ and $H_Z^\prime$ can be obtained by permuting and scaling the columns and/or row operations on $H_X$ and $H_Z$, respectively. Let us define the matrix
    \begin{align*}
        P_X=\begin{bmatrix}
            \bfL(\alpha g_a) & \bf0\\
            \bf0 & \bfR(\beta g_b)
        \end{bmatrix}.
    \end{align*}
Using the relations 
\begin{gather*}
\bfL(\ah*(\alpha g_a))=\bfL(\ah)\bfL(\alpha g_a)\\
\bfR((\beta g_b)*\bh)=\bfR(\bh)\bfR(\beta g_b),
\end{gather*}
we observe that $H_X^\prime=H_XP_X$.
 In other words, $H_X^\prime$ is obtained by applying column operations (permutations and scaling) to $H_X$. Applying the transform $P_X$ to the data qubits requires us to perform a corresponding transformation $P_Z=(P_X^{-1})^T$ of the $Z$ stabilizers to preserve the CSS orthogonality property.
Now, let us define the invertible row-operation matrix
    \begin{align*}
        M=\alpha\beta\bfR(g_b)^T\bfL(g_a)^T.
    \end{align*}
Performing matrix multiplication, we observe that $H_Z^\prime=M(H_ZP_Z)$,
{This matrix can be obtained by applying the corresponding column operations to $H_Z$, followed by a set of valid row operations. } This completes the proof.
\end{proof}
Lemma \ref{lemma:equivalence1} suggests the possibility of reducing the search space. Namely, suppose we search for codes with a fixed weight $w=w_{\ah}+w_{\bh}$, where $w_{\ah}$ and $w_{\bh}$ are defined as in Proposition \ref{prop:CodeProperties}. Choose $w_{\ah}$ distinct elements from $G$ that are coset representatives for the quotient group $G/\operatorname{Core}_G(H)$ and $w_{\bh}$ distinct elements from $N_G(H)$ that are coset representatives for $N_G(H)/H$. Lemma \ref{lemma:equivalence1} implies that we can fix one element in each of  $\ah$ and $\bh$ without loss of generality to the equivalence claim.
Therefore, it is sufficient to choose $w_{\ah}-1$ (resp. $w_{\bh}-1$) elements instead of $w_{\ah}$ (resp. $w_{\bh}$), which significantly reduces the size of the search space depending on the group size. 

Additionally, reducing the size of the search space can be accomplished by excluding certain redundant subgroups. In practice, computer algebra systems such as GAP return exhaustive lists of subgroups that include many conjugate copies. Searching over all such conjugates would needlessly duplicate the same code structures, increasing the computational load. To prevent this, in the following lemma, we show that 
codes constructed using conjugate subgroups are equivalent to each other. To state the lemma, we first relate conjugation and normalizers.
Let $H_1< G$ be a group-subgroup pair and let $H_2=g^{-1}H_1g$ 
    be a subgroup conjugate to $H_1$ for some $g\in G$. Then $N_G(H_2)=g^{-1}N_G(H_1)g$ and so $g^{-1}*\bh*g\in F[N_G(H_2)]$.

\begin{lemma}\label{lemma:ConjugateSubgroup}
Let $H_1<G, H_2=g^{-1}H_1 g$ for $g\in G$.
    Then the codes $Q_G^{H_1}(\ah,\bh)$ and $Q_G^{H_2}(\ah,g^{-1}*\bh*g)$ are equivalent.
\end{lemma}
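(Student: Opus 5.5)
The plan is to find a natural bijection $\phi:G/H_1\to G/H_2$, given by $xH_1\mapsto xgH_2$, and show that it intertwines both the left and the twisted right actions. Then the two codes' parity-check matrices will differ only by a simultaneous permutation of coordinates (and of checks).

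First I check that $\phi$ is well defined and bijective. Since $gH_2=H_1g$, we have $xgH_2=xH_1g$, so $\phi$ is simply right multiplication of sets by $g$. Its inverse is right multiplication by $g^{-1}$.

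Next I verify the intertwining relations.

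\emph{Left action.} For $h\in G$ we have $\phi(L_1(h)(xH_1))=hxgH_2=L_2(h)(\phi(xH_1))$. Hence $\phi\circ L_1(h)=L_2(h)\circ\phi$.

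\emph{Right action.} For $n\in N_G(H_1)$, using the remark before the lemma that $g^{-1}ng\in N_G(H_2)$, we get
\begin{align*}
\phi(R_1(n)(xH_1))=xngH_2=(xg)(g^{-1}ng)H_2=R_2(g^{-1}ng)(\phi(xH_1)).
\end{align*}

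Now I pass to matrices. Let $P$ be the permutation matrix of $\phi$ with respect to the fixed coset representatives of $G/H_1$ and $G/H_2$. The relations above become $P\,\bfL_1(h)=\bfL_2(h)\,P$ and $P\,\bfR_1(n)=\bfR_2(g^{-1}ng)\,P$. By linearity they extend to $P\bfL_1(\ah)P^{-1}=\bfL_2(\ah)$ and $P\bfR_1(\bh)P^{-1}=\bfR_2(g^{-1}*\bh*g)$.

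Let $H_X,H_Z$ be the parity-check matrices of $Q_G^{H_1}(\ah,\bh)$ and $H_X',H_Z'$ those of $Q_G^{H_2}(\ah,g^{-1}*\bh*g)$. Since $P$ is orthogonal, $(P^{-1})^T=P$, and we obtain
\begin{align*}
H_X'=P\,H_X\,(P^{-1}\oplus P^{-1}),\qquad H_Z'=P\,H_Z\,(P^{-1}\oplus P^{-1}).
\end{align*}
On the right, $P^{-1}\oplus P^{-1}$ is a permutation of the qubit coordinates, and it is the same permutation for the $X$ and $Z$ checks, so CSS compatibility is preserved. The left factor $P$ merely reorders the checks. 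The two codes are therefore equivalent.

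The main obstacle I expect is bookkeeping rather than substance. One point is the direction of conjugation: since $H_2=g^{-1}H_1g$, we must use $x\mapsto xg$ and not $xg^{-1}$. The other is matching the transposes in $H_Z$. That is handled by observing that permutation matrices satisfy $P^T=P^{-1}$.
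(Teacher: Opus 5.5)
Your proposal is correct and follows essentially the same route as the paper: the bijection $xH_1\mapsto (xg)H_2$, the intertwining relations $L_2(h)=\sigma\circ L_1(h)\circ\sigma^{-1}$ and $R_2(g^{-1}hg)=\sigma\circ R_1(h)\circ\sigma^{-1}$, and conjugation of both $H_X$ and $H_Z$ by the resulting permutation matrix. The only cosmetic difference is how bijectivity is shown. You get it from the identity $xgH_2=(xH_1)g$ together with an explicit inverse. The paper instead proves well-definedness and injectivity through a chain of equivalences, and obtains surjectivity by counting.
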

\begin{proof}
    Let us define the map
    \begin{align*}
        \sigma_g : G/H_1 &\rightarrow G/H_2 \\
        xH_1&\mapsto (xg)H_2.
    \end{align*}
    We have the chain of relationships:
    \begin{align*}
        xH_1=yH_1 &\Leftrightarrow x^{-1}y\in H_1\\
        &\Leftrightarrow g^{-1}x^{-1}yg \in g^{-1}H_1g \\
        &\Leftrightarrow (xg)^{-1}(yg)\in H_2\\
        & \Leftrightarrow (xg)H_2=(yg)H_2 \Leftrightarrow \sigma_g(xH_1)=\sigma_g(yH_1)
    \end{align*}
By reading this chain of relationships both ways, we note that $\sigma_g$ is well defined and injective. Since $[G:H_1]=[G:H_2]$ and $G$ is finite, it must also be surjective, showing that $\sigma_g$ is a well-defined bijection for all $g\in G$. Let $L_1$ ($L_2$) be the homomorphism as in \eqref{eq:defL} defined for the subgroup $H_1$ ($H_2$). Then
    \begin{align*}
        (L_2(h)\circ \sigma_g )(xH_1) &= L_2(h)((xg)H_2)\\
        &=(h(xg))H_2\\
        &=((hx)g)H_2\\
        &=\sigma_g((hx)H_1) = (\sigma_g\circ L_1(h))(xH_1).
    \end{align*}
    Hence, for any $h\in G$ we have
    \begin{align*}
        L_2(h) = \sigma_g\circ L_1(h)\circ \sigma_g^{-1}
    \end{align*}
  Let $P_{\sigma_g}$ be the permutation matrix representing the bijection $\sigma_g$. Then we have
    \begin{align}\label{eq:L_aprime_a}
        \bfL_2(\ah) = P_{\sigma_g} \bfL_1(\ah) P_{\sigma_g}^T.
    \end{align}
Below we write $N_i:=N_G(H_i), i=1,2.$    Similarly, let $R_i, i=1,2$ be the homomorphism in \eqref{eq:defR}. Note that for any $h\in N_1$, $g^{-1}hg \in N_2$ since $H_2=g^{-1}H_1g$. Then, 
    \begin{align*}
        (R_2(g^{-1}hg)\circ \sigma_g)(xH_1) &= R_2(g^{-1}hg)((xg)H_2)\\
        &=((xg)g^{-1}hg)H_2\\
        &=((xh)g)H_2\\
        &=\sigma_g((xh)H_1)\\ 
        &= (\sigma_g\circ R_1(h))(xH_1).
    \end{align*}
    Therefore, for any $h\in N_1$, we have
    \begin{align*}
        R_2(g^{-1}hg) = \sigma_g\circ R_1(h) \circ \sigma_g^{-1}
    \end{align*}
Extending this by linearity to the group algebra, for every $\bh\in F(N_1)$, we obtain
    \begin{align}\label{eq:R_bprime_b}
        \bfR_2(g^{-1}*\bh *g) = P_{\sigma_g} \bfR_1(\bh) P_{\sigma_g}^T.
    \end{align}
    Let $H_X$ and $H_Z$ be the parity-check matrices of the code $Q_G^{H_1}(\ah,\bh)$. Then the parity-check matrices of the code $Q_G^{H_2}(\ah,g^{-1}*\bh*g)$ have the form
    \begin{align*}
        H_X^\prime &= [P_{\sigma_g} \bfL_1(\ah) P_{\sigma_g}^T \mid  P_{\sigma_g} \bfR_1(\bh) P_{\sigma_g}^T]\\
        &= P_{\sigma_g}H_X\begin{bmatrix}
           P_{\sigma_g}^T & \bf0\\
           \bf0 & P_{\sigma_g}^T
        \end{bmatrix}
    \end{align*}
    and
    \begin{align*}
        H_Z^\prime &= [-P_{\sigma_g} \bfR_1(\bh)^T P_{\sigma_g}^T \mid  P_{\sigma_g} \bfL_1(\ah)^T P_{\sigma_g}^T]\\
        &= P_{\sigma_g}H_Z\begin{bmatrix}
           P_{\sigma_g}^T & \bf0\\
           \bf0 & P_{\sigma_g}^T
        \end{bmatrix},
    \end{align*}
and they are obtained by applying the same row and column permutations to the parity-check matrices of the code $Q_G^{H_1}(\ah,\bh)$.
\end{proof}
This lemma shows that it suffices to perform the code search only for a single subgroup from each conjugacy class. 

\subsection{Search Results}
To implement the search, we relied on the Small Groups library in GAP \cite{GAP4} to identify and enumerate all groups and their subgroups
up to a certain order, limiting ourselves to non-abelian groups and their non-normal subgroups (the case of normal subgroups was 
previously covered in \cite{Pryadko2BGA}). 

To find 2BGA codes of a given code length $n$, it is sufficient to search over all groups of order $l=n/2$. However, in our generalized construction, the number of algebraic structures that yield quantum codes of length $n$ is much higher. For instance, consider a quantum code of length $n=16$. While there are only $5$ groups of order $l=8$, there are many more options for group and subgroup pairs $(G,H)$ satisfying the index requirement $[G:H]=l$. Therefore, while it is possible to enumerate all low-weight 2BGA codes over groups of a certain order, it is not computationally feasible to search over all group--subgroup pairs $(G,H)$ that yield a quantum code of a given length. 

For this reason, we restricted our code search to a selected subset of $(G,H)$ pairs using a heuristic designed to filter out overly constrained algebraic structures.
Recall that the building blocks of our codes are the left and right action matrices, as defined in Proposition \ref{prop:LRhomomorpshim}. By Proposition \ref{prop:ImageLandR}, there are $|G|/|\operatorname{Core}_G(H)|$ unique left action matrices $\bfL$, and $|N_G(H)|/|H|$ unique right action matrices $\bfR$ for a given pair $(G,H)$. For the case of 2BGA codes (when $H$ is a normal subgroup), these building blocks are regular left and right group action matrices, and there are exactly $l=[G:H]$ of each for a quotient group $G/H$ of order $l$. However, for the general case where $H$ is not normal, the normalizer $N_G(H)$ can be restrictive if the order of the quotient group $N_G(H)/H$ is not large enough. Indeed, for certain group and subgroup pairs, the order of this quotient group is much lower than $l$, and the codes constructed using these matrices yield only trivial parameters with high probability
On the other hand, the number of unique left action matrices can be significantly larger than the index $l=[G:H]$. Consequently, we exclude group-subgroup pairs $(G,H)$ where the quotient order $|N_G(H)|/|H|$ is too small, as such pairs are highly likely to yield only trivial codes.

We performed an exhaustive search over promising $(G,H)$ pairs, i.e., pairs for which the number of unique left action matrices, $|G|/|\operatorname{Core}_G(H)|$, is larger than $[G:H]$, and the number of unique right action matrices, $|N_G(H)|/|H|$, is not too small. More specifically, we focused mostly on $(G,H)$ pairs such that $|N_G(H)|/|H| = [G:H]/2$ or $[G:H]/3$ and $|G|/|\operatorname{Core}_G(H)|=2[G:H]$ or $(3/2[G:H]$. Our goal was to find binary codes $Q_G^H(\hat{a},\hat{b})$ with stabilizer weights $w_{\hat{a}}=w_{\hat{b}}=3$ and $w_{\hat{a}}=w_{\hat{b}}=4$. Equivalent codes were excluded based on Lemmas \ref{lemma:equivalence1} and \ref{lemma:ConjugateSubgroup}. Since $F$ is the binary field, the search process is equivalent to enumerating all possible subsets of the quotient groups $G/\operatorname{Core}_G(H)$ and $N_G(H)/H$ of size $w_{\ah}-1$ and $w_{\bh}-1$, respectively (recall that we can deduct 1 by the argument in Lemma \ref{lemma:equivalence1}). Therefore, for a given pair $(G,H)$, the size of the search space is given by
\begin{align}\label{eq:SizeSearchSpace}
    N_{\mathrm{T}}=\binom{|G|/|\operatorname{Core}_G(H)|-1}{w_{\ah}-1}\binom{|N_G(H)|/|H|-1}{w_{\bh}-1}.
\end{align}

We had to impose further restrictions on the search space because for certain pairs $(G,H)$, particularly 
for $w_{\ah}=w_{\bh}=4$, exhaustive search proves computationally infeasible. In such cases, we employ a randomized search strategy to cull the entire set of possible codes. Instead of examining each of them, we evaluate only a fraction of candidates selected with probability $p$, chosen as follows. First, we choose a count $N_{\text{target}}$ of codes to be examined, which we set to be between $10^4$ to $10^7$ depending on the group size, and then
analyze or discard the code candidates with probability $p=N_{\text{target}}/N_{\mathrm{T}}$.

\begin{table*}[htbp]
\centering
\caption{Parameters of the new qLDPC codes with notable parameters. The underlying group $G$ is identified by its order $\ell$ and GAP index $m$ via \texttt{SmallGroup(}$\ell, m$\texttt{)}. The \textit{structure} is the output of the GAP function \texttt{StructureDescription(G)}. For exact reproducibility, $s$ denotes the integer index of the subgroup $H$ within the list of non-normal subgroups, generated in GAP 4.14.0 via \texttt{Filtered(AllSubgroups(G), H -> not IsNormal(G,H))}
The generating group algebra elements $\ah$ and $\bh$ are reported as integer arrays; these integers correspond to the indices of the respective coset representatives in the lists \texttt{LeftCosets(G, Core(G,H))} and \texttt{LeftCosets(Normalizer(G,H), H)}. All the distance values in this table are exact. }
\label{tab:group_codes}
\begin{tabular}{@{} ccc cc cc ccc @{}}
\toprule
\multicolumn{3}{c}{\textbf{Code Parameters}} & \multicolumn{2}{c}{\textbf{Group Descriptions}} & \multicolumn{2}{c}{\textbf{Group Algebra}} & \multicolumn{3}{c}{\textbf{GAP Identifiers}} \\
\cmidrule(lr){1-3} \cmidrule(lr){4-5} \cmidrule(lr){6-7} \cmidrule(lr){8-10}

\makebox[1cm][c]{$n$} & \makebox[1cm][c]{$k$} & \makebox[1cm][c]{$d$} & 
$G$ Structure & $H$ Structure & $\ah$ & $\bh$ & 
\makebox[1cm][c]{$\ell$} & \makebox[1cm][c]{$m$} & \makebox[1cm][c]{$s$} \\ \midrule

48 & 8 & 6 & $C_3 \times ((C_{16} \rtimes C_2) \rtimes C_4)$ & $C_{16}$ & $[1, 34, 48]$ & $[1, 6, 12]$ & 384 & 512 & 53 \\
96 & 8 & 10 & $C_3 \times ((C_{16} \rtimes C_2) \rtimes C_4)$ & $C_8$ & $[1, 9, 87]$ & $[1, 21, 23]$ & 384 & 512 & 27 \\
224 & 12 & 16 & $C_7 \times ((C_4 \times C_4) \rtimes C_2)$ & $C_2$ & $[1, 81, 186]$ & $[1, 16, 47]$ & 224 & 53 & 1 \\
84 & 16 & 8 & $C_{21} \times (C_3 \rtimes C_4)$ & $C_6$ & $[1, 14, 71, 89]$ & $[1, 6, 8, 19]$ & 252 & 21 & 6 \\
112 & 16 & 10 & $C_7 \times ((C_4 \times C_4) \rtimes C_2)$ & $C_4$ & $[1, 11, 59, 81]$ & $[1, 15, 23, 25]$ & 224 & 53 & 9 \\
128 & 16 & 12 & $(C_8 \rtimes C_2) \rtimes C_8$ & $C_2$ & $[1, 47, 75, 88]$ & $[1, 8, 12, 19]$ & 128 & 10 & 1 \\
168 & 16 & 15 & $C_7 \times ((C_6 \times C_2) \rtimes C_2)$ & $C_2$ & $[1, 72, 106, 109]$ & $[1, 11, 18, 26]$ & 168 & 33 & 1 \\

\bottomrule
\end{tabular}
\end{table*}

Finding the code distance is a computationally intensive procedure, which we handled as follows. We used the GAP package \texttt{QDistRnd} \cite{Pryadko2022QDistRnd}, which provides an upper bound on the minimum distance, as a quick proxy to filter out poor candidates and identify codes with potentially good parameters. If the returned upper bound was high enough, we used the integer programming approach suggested in \cite{landahl2011faulttolerantquantumcomputingcolor} to compute exact minimum distances. Throughout the paper, codes with exactly computed distances are denoted by $[[n,k,d]]$, whereas codes with only a known upper bound on the distance are denoted by $[[n,k,\leq d]]$.

We list some of the codes we found in Tables \ref{tab:group_codes} and \ref{tab:group_codes_additional}. For the codes in Table \ref{tab:group_codes}, we performed a deeper analysis, including an evaluation of their circuit-level performance. 
A code of weight $w=8$ with parameters $[[84,16,8]]$ that appears in this table was previously reported in 
previous works \cite{Pryadko2BGA} and \cite{tripier2026faulttolerantquantumcomputingtrapped}. We do not know whether these codes are equivalent because testing equivalence of codes (classical or quantum) is computationally hard.
The remaining parameter sets in Table \ref{tab:group_codes} do not appear in earlier literature; in particular, the weight-6 codes with parameters $[[48,8,6]]$ and $[[96,8,10]]$ are not among the codes found by exhaustively examining all 2BGA codes of length $\le 100$ with stabilizer weight 6 in \cite{Pryadko2BGA} (a $[[48,8,6]]$ code mentioned there has stabilizer weight 8). This finding supports the claim that the proposed construction method goes beyond the general
2BGA formalism, giving rise to new quantum LDPC codes. To the best of our knowledge, the remaining codes listed in Table \ref{tab:group_codes} do not appear anywhere in the literature. However, 2BGA codes of length over 100 have not been systematically enumerated, and
it is possible that 2BGA codes with the same parameters, or even equivalent codes, might exist.
Additionally,Table \ref{tab:group_codes_additional} lists several other high-rate  
$Q_G^H(\ah,\bh)$ codes with a potential to perform well in simulations.

\section{\label{sec:SC_Circuit} Syndrome Extraction Circuit}

In this section, we introduce a maximally packed syndrome extraction circuit for the code $Q_G^H(\ah,\bh)$. In Ref.~\cite{Bravyi2024}, the authors presented an algorithm yielding a syndrome extraction cycle for bivariate bicycle (BB) codes of weight $6$. Their circuit minimizes the number of idle qubits and achieves a total depth of $8$, including state initializations and measurements. To accomplish this, they interleaved the $X$- and $Z$-type checks such that no single qubit is operated on twice within a single time step, all while ensuring that this interleaving preserves the validity of the syndrome extraction. In this work, we generalize this construction to any $Q_G^H(\ah,\bh)$ code with an arbitrary stabilizer weight 
less than or equal to $w =w_{\ah} + w_{\bh}$.

Let $Q_G^H(\ah,\bh)$ be a code as defined in Construction~\ref{cons:Generalized2BGA} with block length $n=2[G:H]$.  As in the previous section, here we study binary $Q_G^H(\ah,\bh)$ codes.
With this assumption, the matrices $\bfL(\ah)$ and $\bfR(\bh)$ from Construction~\ref{cons:Generalized2BGA} can be decomposed as
\begin{align*}
    \bfL(\ah) = \sum_{g\in \operatorname{supp}(\ah)} \bfL(g), \quad \text{and} \quad  \bfR(\bh) = \sum_{g\in \operatorname{supp}(\bh)} \bfR(g).
\end{align*}

Denote by $\bfL(g)(i)$ (resp., $\bfR(g)(i)$) the index $j$ such that the $(i, j)$-th entry of the permutation matrix $\bfL(g)$ (resp., $\bfR(g)$) is $1$. This notation corresponds to the inverse permutation convention, identifying the column $j$ for a given row $i$.
Recall that the code $Q_G^H(\ah,\bh)$ comprises $n/2$ $X$-type checks, $n/2$ $Z$-type checks, and $n$ data qubits. Let us label the $X$ and $Z$ checks as $X[i]$ and $Z[i]$, respectively, for $i=1,2,\ldots,n/2$. We partition the data qubits into two blocks, labeled $D_L[i]$ and $D_R[i]$ for $i=1,2,\ldots, n/2$, such that each check $X[i]$ is connected to $D_L[\bfL(g)(i)]$ for all $g \in \operatorname{supp}(\ah)$ and to $D_R[\bfR(g)(i)]$ for all $g\in \operatorname{supp}(\bh)$. Because the transpose of a permutation matrix corresponds to the group inverse (i.e., $\bfL(g)^T = \bfL(g^{-1})$ and $-\bfR(g)^T = \bfR(g^{-1})$), it follows that each check $Z[i]$ is connected to $D_L[\bfR(g^{-1})(i)]$ for all $g\in\operatorname{supp}(\bh)$ and to $D_R[\bfL(g^{-1})(i)]$ for all $g\in\operatorname{supp}(\ah)$.

For each edge incident to an $X$-type check in the Tanner graph, the syndrome computation includes performing a CNOT gate controlled by the check qubit and targeting the data qubit at the other end of the edge. Similarly, for each edge incident to a $Z$-type check, we need to 
perform a CNOT gate controlled by the corresponding data qubit and targeting the $Z$ check. For the valid syndrome extraction, these CNOT gates must be carefully scheduled to follow each other.
A straightforward approach is to perform all $X$-type CNOT operations before all $Z$-type CNOT operations, or vice versa. 
For optimized scheduling, we can interleave $X$- and $Z$-type CNOT gates to reduce the overall circuit depth, thereby minimizing the idle time of the qubits.
In Algorithm \ref{alg:schedule_cnot}, we introduce a subroutine for scheduling interleaved $X$-type and $Z$-type CNOT gates. This function pairs each $X$-type CNOT round with a compatible $Z$-type CNOT round, ensuring that no single data qubit is targeted by multiple gates within a single time step. Ultimately, this subroutine will play a role in the algorithm defining the syndrome cycle of a $Q_G^H(\ah,\bh)$ code.

\begin{algorithm}[htbp]
\caption{Subroutine for CNOT Scheduling}
\label{alg:schedule_cnot}
\begin{algorithmic}[1]

\MultilineInput{
    Physical data block $D_p \in \{D_L, D_R\}$. \\
    Permutation matrix type $\bfM_{\ph} \in \{\bfL, \bfR\}$. \\
    Equal-length sequence pairs $(\vec{g}_X, \vec{h}_X)$ and $(\vec{g}_Z, \vec{h}_Z)$ of elements from a group $G_p$. \\
    Sequences $\vec{f}_X,\vec{f}_Z$ containing the same elements from a group $G_u$
       where $G_u \leq G_p$ or $G_p \leq G_u$.
}
\MultilineOutput{
    Updated quantum states of $D_p, D_u, X,$ and $Z$.
}
\Statex
\Function{CnotSchedule}{$D_p, \bfM_{\ph}, \vec{g}_X, \vec{h}_X, \vec{g}_Z, \vec{h}_Z, \vec{f}_X,\vec{f}_Z$}
    \State $D_u \leftarrow \{D_L, D_R\} \setminus \{D_p\}$
    \State $\bfM_{\uh } \leftarrow \{\bfL, \bfR\} \setminus \{\bfM_{\ph}\}$
    \State Let $N$ be the dimension (row/column size) of $\bfM_{\uh }$
    \Statex
    \For{$k = 1$ to $|\vec{g}_X|$} \label{line:1_to_gX_start}
        \State Simultaneously for $i = 1$ to $N$:
        \State \quad $\text{CNOT}(X[i], D_p[\bfM_{\ph}(g_{X,k})(i)])$
        \State \quad $\text{CNOT}(D_u[i], Z[\bfM_{\ph}(g_{Z,k})(i)])$ \label{line:1_to_gX_end}
    \EndFor
    \Statex
    \For{$k = 1$ to $|\vec{f}_X|$}\label{line:1_to_fX_start}
        \State Simultaneously for $i = 1$ to $N$:
        \State \quad $\text{CNOT}(X[i], D_u[\bfM_{\uh }(f_{X,k})(i)])$
        \State \quad $\text{CNOT}(D_p[i], Z[\bfM_{\uh }(f_{Z,k})(i)])$\label{line:1_to_fX_end}
    \EndFor
    \Statex
    \For{$k = 1$ to $|\vec{h}_X|$}\label{line:1_to_hX_start}
        \State Simultaneously for $i = 1$ to $N$:
        \State \quad $\text{CNOT}(X[i], D_p[\bfM_{\ph}(h_{X,k})(i)])$
        \State \quad $\text{CNOT}(D_u[i], Z[\bfM_{\ph}(h_{Z,k})(i)])$\label{line:1_to_hX_end}
    \EndFor
\EndFunction

\end{algorithmic}
\end{algorithm}

In Algorithm \ref{alg:syndrome_extraction}, we describe a syndrome extraction circuit for an arbitrary code $Q_G^H(\ah,\bh)$. 
Before presenting this algorithm, let us introduce the notation and the scheduling strategies used in it. Since the \textsc{CnotSchedule} subroutine is asymmetric with respect to its inputs, we assign one of the group algebra elements as the primary block, denoted $\ph \in \{\ah, \bh\}$, and the other as the auxiliary block, denoted $\uh$. Swapping the assignments of $\ah$ and $\bh$ yields distinct, valid scheduling cycles, which allows the algorithm to cover both possible syndrome extraction schedules. Furthermore, in the case of odd stabilizer weights $w_{\ph}$, we were unable to identify a perfectly symmetric syndrome extraction cycle. To address this, the algorithm introduces a symmetry-breaking step by assigning a primary syndrome measurement basis, $S \in \{X, Z\}$, and a complementary basis $U$.  Alternating the initializations, specific CNOT gates, and measurements of $S$ and $U$ supports the design of a valid CNOT scheduling with minimum circuit depth. An example of an explicit input configuration for the case of $w_{\ah}=w_{\bh}=3$ and $w_{\ah}=w_{\bh}=4$ in  Algorithm \ref{alg:syndrome_extraction} is presented in Section~\ref{sec:CircuitLevelSimulations}.

\begin{algorithm}[htbp]
\caption{Syndrome Extraction Circuit for $d$ Rounds}
\label{alg:syndrome_extraction}
\begin{algorithmic}[1]

\MultilineInput{
A $Q_G^H(\ah,\bh)$ code. \\
An integer $d$ specifying the number of rounds.
}
\MultilineOutput{
$X$ and $Z$ syndromes for $d$ rounds.
}
\Statex
\State Choose an ordering $(\ph,\uh)\in\{(\ah,\bh),(\bh,\ah)\}$.
\State \textbf{If} $\ph=\ah$, then $(D_p, D_u, \bfM_{\ph}, \bfM_{\uh }) \gets (D_L, D_R, \bfL,\bfR)$,\\ 
\textbf{else} $(D_p, D_u, \bfM_{\ph}, \bfM_{\uh }) \gets (D_R, D_L, \bfR,\bfL)$
\State Let $w_{\ph} = |\operatorname{supp}(\ph)|$.
\State $\vec{f}_X,\vec{f}_Z \gets \text{any permutations of } \operatorname{supp}(\uh )$
\State $N\gets [G:H]$
\Statex
\If{$w_{\ph}$ is odd}
    \State Choose an ordering $(S,U)\in\{(X,Z),(Z,X)\}$. 
    \LongState{$\vec{g}_S, \vec{h}_S \gets$ sequences partitioning $\operatorname{supp}(\ph)$ with $|\vec{g}_S| = |\vec{h}_S| + 1$}
    \LongState{$\vec{g}_U, \vec{h}_U \gets$ sequences partitioning  $\operatorname{supp}(\ph)$ with $\vec{h}_U$ containing the same elements as $\vec{g}_S$} 
    \State Let $g_{s_0}$ and $h_{u_0}$ be the first elements of $\vec{g}_S$ and $\vec{h}_U$.
    \State $\vec{g}_S \gets \vec{g}_S \setminus \{g_{s_0}\}$ and $\vec{h}_U \gets \vec{h}_U \setminus \{h_{u_0}\}$
    \LongState{Let $C_X(i) \equiv \text{CNOT}(X[i], D_p[\bfM_{\ph}(e_X)(i)])$ and $C_Z(i) \equiv \text{CNOT}(D_u[i], Z[\bfM_{\ph}(e_Z)(i)])$, where $e_S = g_{s_0}$ and $e_U = h_{u_0}$}
    \State $\forall i \in [N]$: $R_S(S[i])$
    \For{$r = 1$ to $d$}
        \State $\forall i \in [N]$: $R_U(U[i])$, \quad $C_S(i)$\label{line:InitWpOdd}
        \State \Call{CnotSchedule}{$D_p, \bfM_{\ph}, \vec{g}_X, \vec{h}_X, \vec{g}_Z, \vec{h}_Z, \vec{f}_X,\vec{f}_Z$}\label{line:CnotScheduleWpOdd}
        \State $\forall i \in [N]$: $C_U(i)$, \quad $M_S(S[i])$\label{line:MZWpOdd}
        \State $\forall i \in [N]$: $M_U(U[i])$, \quad $R_S(S[i])$\label{line:MXWpOdd}
    \EndFor
\Else
    \LongState{$\vec{g}_X, \vec{h}_X \gets$ any equal-length sequences partitioning $\operatorname{supp}(\ph)$}
    \LongState{$\vec{g}_Z, \vec{h}_Z \gets$ any equal-length sequences partitioning $\operatorname{supp}(\ph)$ with $\vec{h}_Z$ containing the same elements as $\vec{g}_X$}
    \For{$r = 1$ to $d$}
        \State $\forall i \in [N]$: $R_X(X[i])$, \quad $R_Z(Z[i])$\label{line:InitWpEven}
        \State \Call{CnotSchedule}{$D_p, \bfM_{\ph}, \vec{g}_X, \vec{h}_X, \vec{g}_Z, \vec{h}_Z, \vec{f}_X,\vec{f}_Z$}\label{line:CnotScheduleWpEven}
        \State $\forall i \in [N]$: $M_X(X[i])$, \quad $M_Z(Z[i])$\label{line:MeasureWpEven}
    \EndFor
\EndIf
\end{algorithmic}
\end{algorithm}

To verify the correctness of this extraction cycle, one must ensure that the following three conditions are satisfied: (1) every required CNOT gate appears exactly once, (2) no qubit is acted upon twice within the same time step, and (3) the interleaving of $X$- and $Z$-type operations does not alter the final measured syndrome. They are verified in the following proposition whose proof appears in Appendix \ref{appendix:SC_Circuit_Proof}.
\begin{proposition}\label{prop:SyndromeCircuit}
    For any binary $Q_G^H(\ah,\bh)$ code, Algorithm \ref{alg:syndrome_extraction} yields a syndrome extraction circuit of depth $w_{\ah}+w_{\bh}+2$ per cycle, including state initializations and measurements.
\end{proposition}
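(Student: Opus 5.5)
The plan is to check the three conditions listed before the proposition: (1) every CNOT appears once, (2) no qubit is used twice in a time step, (3) the interleaving leaves the measured syndrome unchanged. The depth count then follows from (1)–(2). Throughout I take $\ph=\ah$, $D_p=D_L$, $\bfM_\ph=\bfL$; the case $\ph=\bh$ is symmetric, with the roles of $\bfL$ and $\bfR$ exchanged.

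\emph{Depth and coverage.} For $w_\ph$ even, one round consists of:
\begin{itemize}
\item one initialization layer;
\item the $|\vec g_X|+|\vec f_X|+|\vec h_X|=w_\ph+w_\uh$ CNOT layers of \textsc{CnotSchedule};
\item one measurement layer.
\end{itemize}
This gives $w_\ah+w_\bh+2$ layers. Coverage holds because:
\begin{itemize}
\item $\vec g_X\cup\vec h_X=\vec g_Z\cup\vec h_Z=\operatorname{supp}(\ph)$;
\item $\vec f_X$ and $\vec f_Z$ enumerate $\operatorname{supp}(\uh)$;
\item by the description of the Tanner graph via $\bfL(g)^T=\bfL(g^{-1})$ and $\bfR(g)^T=\bfR(g^{-1})$, each layer realizes exactly the edges of one permutation-matrix summand of $H_X$ or $H_Z$.
\end{itemize}
For $w_\ph$ odd, one element is removed from $\vec g_S$ and one from $\vec h_U$. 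These are executed as $C_S$, merged into the reset layer of $U$, and $C_U$, merged into the measurement layer of $S$. So one period contains $1+(w_\ph-1+w_\uh)+1+1=w+2$ layers. The reset of $S$ is pipelined into the $M_U$ layer of the previous round, which I will check explicitly across the round boundary.

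\emph{No collisions.} In a $\vec g$ or $\vec h$ layer, the $X$-gates touch $X[\cdot]$ and $D_p[\cdot]$, while the $Z$-gates touch $D_u[\cdot]$ and $Z[\cdot]$. These four sets are disjoint. Within each family, $i\mapsto \bfM(g)(i)$ is a permutation, so every target is hit once. In an $\vec f$ layer the data blocks swap and the same argument applies. In the odd case, $C_S$ shares its layer with $R_U$, and $C_U$ with $M_S$; these act on disjoint qubit sets ($S$ checks and data versus $U$ checks), which I will verify for both choices of $(S,U)$.

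\emph{Syndrome correctness (main obstacle).} The criterion I would use is the one from the BB analysis in \cite{Bravyi2024}: for every pair $X[i]$, $Z[j]$, the number of shared data qubits on which the $Z$-type CNOT precedes the $X$-type CNOT must be even. I would use Proposition \ref{prop:commutativity} to show that the overlaps come in pairs indexed by $(g,h)\in\operatorname{supp}(\ah)\times\operatorname{supp}(\bh)$:
\begin{itemize}
\item a $D_L$-qubit $D_L[\bfL(g)(i)]$, hit by $X[i]$ via $g$ and by $Z[j]$ via $h$ (an $\bfR$-edge);
\item a $D_R$-qubit $D_R[\bfR(h)(i)]$, hit by $X[i]$ via $h$ and by $Z[j]$ via $g$.
\end{itemize}
Here $j$ is determined by $i,g,h$ through $\bfL(g)\bfR(h)=\bfR(h)\bfL(g)$. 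Within each pair I would compare timings:
\begin{itemize}
\item On $D_L$, the $X$-gate occurs in the $\vec g_X$ block if $g\in\vec g_X$, and in the $\vec h_X$ block otherwise. The $Z$-gate always occurs in the middle $\vec f$ block.
\item On $D_R$, the $X$-gate is in the middle block. The $Z$-gate is in $\vec h_Z$ exactly when $g\in\vec g_X$, since $\vec h_Z$ has the same elements as $\vec g_X$.
\end{itemize}
So either both shared qubits see $X$ first, or both see $Z$ first, and the parity is even. The delicate part is the odd case. There I must track the displaced elements $g_{s_0}$, now at the very start, and $h_{u_0}$, now at the very end, and confirm that they still land on the correct side of the $\vec f$ block. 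This holds since $h_{u_0}\in\vec h_U$ and $\vec h_U$ has the same elements as $\vec g_S$. I also need to confirm that the pipelined resets and measurements do not leave an ancilla entangled across rounds. This bookkeeping is where I expect errors to be likeliest, so I would verify it against both $(S,U)$ orderings.
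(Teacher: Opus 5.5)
Your proposal is correct, but it proves condition (3) by a different mechanism than the paper.

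\textbf{The paper's route.} Following the tableau approach of \cite{Bravyi2024}, the paper computes the $\mathbb{F}_2$ block matrix $T$ that \textsc{CnotSchedule} enacts on the registers $(X,D_p,D_u,Z)$. It then shows that the $X$-ancilla rows end with no $Z$-ancilla component, i.e.\ that $\bfM_{\ph}(\gh_X)\bfM_{\uh}(\fh_Z)+\bfM_{\uh}(\fh_X)\bfM_{\ph}(\hh_Z)$ vanishes, with $C_S$ and $C_U$ folded in for the odd case. This uses $\fh_X=\fh_Z$, $\gh_X=\hh_Z$ and $[\bfL,\bfR]=0$. Logical operators are checked separately, and the $Z$ side is handled by symmetry.

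\textbf{How your argument relates.} Yours is the entrywise reading of that identity. The $(i,j)$ entry of the first term counts $X$-first CNOT pairs on $D_p$, and that of the second counts $X$-first pairs on $D_u$. Your bijection $(g,h)\mapsto$ (one $D_L$ pair, one $D_R$ pair) is the relation $\bfL(g)\bfR(h)=\bfR(h)\bfL(g)$ applied pointwise.

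\textbf{What each route gains.} Your route gains several things:
\begin{itemize}
\item one parity condition covers the $X$- and $Z$-syndromes at once;
\item trivial action on logicals comes for free, since the interleaved circuit then coincides with the sequential one;
\item the odd case reduces to your observation that $g_{s_0}$ and $h_{u_0}$ stay on the same side of the $\vec f$ block;
\item you verify coverage and collision-freeness explicitly, which the paper leaves implicit.
\end{itemize}
The paper's route is self-contained linear algebra over $\mathbb{F}_2$, with no external criterion and no multiplicity bookkeeping.

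\textbf{Two things to tighten.}

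First, you should include a one-line justification of the pairwise-parity criterion, since the paper's correctness proof (following \cite{Bravyi2024}) is tableau propagation and does not rely on it. Moving $\mathrm{CNOT}(X[i],d)$ past $\mathrm{CNOT}(d,Z[j])$ produces an extra $\mathrm{CNOT}(X[i],Z[j])$. That gate commutes with every $X$- and $Z$-type CNOT in the circuit. Hence the interleaved and sequential circuits are equal exactly when every such swap count is even.

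Second, phrase the criterion for pairs of CNOTs, not for shared data qubits. Because the coset action is not regular, distinct $g,g'\in\operatorname{supp}(\ah)$ can send $i$ to the same coset. For example, $\bfL(e)$ and $\bfL(s)$ in the dihedral example of Section~\ref{sec:GroupTheory} agree on two cosets. A check may then apply two cancelling CNOTs to one qubit, and ``the'' order on that qubit is undefined; this is why the paper only claims stabilizer weight at most $w$. Your $(g,h)$ indexing already counts at the level of CNOT pairs, so the argument goes through unchanged.
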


    When executing $d$ consecutive cycles, in the case that both $w_{\ah}$ and $w_{\bh}$ are odd, our circuit requires one additional time step, yielding a total depth of $d(w_{\ah}+w_{\bh}+2)+1$.

The authors of \cite{strikis2026highperformancesyndromeextractioncircuits} introduced a generalized syndrome extraction protocol for any CSS code, while a packed syndrome extraction circuit specifically for the family of cyclic HGP codes was proposed in \cite{aydin2026cyclichypergraphproductcode}. The circuit designs in these works rely on a non-interleaved approach, where all $Z$-type CNOT gates precede all $X$-type CNOT gates within the same syndrome round. We note that the Tanner graph of the code $Q_G^H(\ah,\bh)$ accommodates both of these non-interleaved approaches, which would yield a total depth of $\min(w_{\ah},w_{\bh})+d(w+2)$ for $d$ rounds. Interleaving the $X$-type and $Z$-type CNOTs and leveraging the specific algebraic structure of the code $Q_G^H(\ah,\bh)$, we can reduce the depth to $d(w+2)+1$ when both $w_{\ah}$ and $w_{\bh}$ are odd and to $d(w+2)$ when at least one of them is even.

We further note that our syndrome extraction circuit is maximally packed in the sense that there are no idle qubits during any CNOT round for the case when at least one of $w_{\ah}$ and $w_{\bh}$ is even. When both $w_{\ah}$ and $w_{\bh}$ are odd, only half of the data qubits are idle, and this occurs only in the first and last CNOT rounds.

We conclude this section with a remark on the number of distinct syndrome extraction cycles that can be generated using Algorithm \ref{alg:syndrome_extraction}. A simple combinatorial argument, detailed in Appendix \ref{appendix:Counting_Proof},
yields the number of distinct sequence configurations in the form
\begin{align}\label{eq:NoSC_Circuit}
    (w_{\ah}! w_{\bh}!)^2 \left(\frac{\nu(w_{\ah})}{\binom{w_{\ah}}{\lfloor w_{\ah}/2 \rfloor}} + \frac{\nu(w_{\bh})}{\binom{w_{\bh}}{\lfloor w_{\bh}/2 \rfloor}}\right),
\end{align}
where $\nu(w)=1$ if $w$ is even and $\nu(w)=2$ if $w$ is odd. This count gives an upper bound on the number of unique physical circuits
since different sequences may yield identical circuits due to code symmetries. Note that even though our algorithm provides an explicit description of maximally packed syndrome extraction cycles for a broad class of LDPC codes with any stabilizer weight, it does not necessarily cover all possible valid syndrome cycles. However, it generates a large family of scheduling options. For example, for a weight-$6$ code with $w_{\ah}=w_{\bh}=3$, it generates 1728 distinct configurations. All of these configurations apply to a broad class of LDPC codes, including all weight-$6$ BB codes of \cite{Bravyi2024}. For any weight-$8$ code with $w_{\ah}=w_{\bh}=4$, Algorithm \ref{alg:syndrome_extraction} yields 110,592 distinct configurations. For a given code, it is possible to iterate over these different configurations to minimize hook errors. 

\section{\label{sec:CircuitLevelSimulations} Circuit-Level Simulations}
We evaluated the performance of selected $Q_G^H(\ah,\bh)$ codes under a standard circuit-level noise model. In this model, the noise is parameterized by a single physical error rate $p$ and is applied according to the following rules:
\begin{itemize}
    \item \textbf{State preparation:} Qubits initialized in the $X$- or $Z$-basis experience a $Z$- or $X$-type error, respectively, with probability $p$ immediately following initialization.
    \item \textbf{Gate operations:} Every CNOT gate is followed by a two-qubit depolarizing channel, which applies an error with probability $p$
(choosing one of the 15 non-trivial two-qubit Pauli errors uniformly at random) and acts as identity with probability $1-p$.
    \item \textbf{Idle qubits:} Any qubit remaining idle during a given time step is acted upon by a single-qubit depolarizing channel with probability $p$,
    which acts by a single-qubit Pauli, choosing $X$, $Y$, or $Z$ uniformly at random.
    \item \textbf{Measurements:} Ancilla qubits measured in the $X$- or $Z$-basis experience a $Z$- or $X$-type error, respectively, with probability $p$ immediately prior to the measurement.
\end{itemize}

For a $Q_G^H(\ah,\bh)$ code with distance $d$, we ran Algorithm \ref{alg:syndrome_extraction} for $d$ rounds. For group algebra elements of the form $\ah=a_1+a_2+\ldots+a_{w_{\ah}}$ and $\bh= b_1+b_2+\ldots+b_{w_{\bh}}$, we used the configuration $(\ph, \uh)=(\ah, \bh)$, $\vec{h}_Z=\vec{g}_X = (a_2,a_3)$, $\vec{h}_X=\vec{g}_Z=(a_1,a_4)$, $\vec{f}_X=(b_1,b_2,b_3,b_4)$, and $\vec{f}_Z=(b_2,b_1,b_4,b_3)$ for our weight-$8$ code simulations. Similarly, for our weight-$6$ codes with $w_{\ah}=w_{\bh}=3$, we used the configuration $(\ph,\uh )=(\ah,\bh)$, $(S,U)=(Z,X)$, $\vec{g}_Z=(a_1,a_3)$, $\vec{h}_Z=(a_2)$, $\vec{g}_X=(a_2)$, $\vec{h}_X=(a_1,a_3)$, $\vec{f}_X=(b_2,b_1,b_3)$, and $\vec{f}_Z=(b_1,b_2,b_3)$. While we fixed the orderings for our simulations across all codes, one can iterate over the extensive set of configurations provided by Algorithm \ref{alg:syndrome_extraction} to achieve even greater circuit-level error suppression. 

We performed our circuit-level simulations using \texttt{Stim} \cite{Gidney2021stimfaststabilizer} and \texttt{qLDPC} \cite{perlin2023qldpc} packages. 
Following the current standard for benchmarking qLDPC codes in the literature,
we used the Belief Propagation with Ordered Statistics Decoding (BP-OSD) algorithm \cite{Panteleev2021degeneratequantum, PhysRevResearch.2.043423}. Specifically, we utilized the \texttt{stimbposd} \cite{Higgott2022stimbposd} implementation with $10,000$ BP iterations and a combination sweep depth of $10$. 

We numerically computed the logical error rates for $Z$- and $X$-type observables separately. Let $N_{e,d,W}(p)$ be the number of $W$-type logical errors observed in $N_{s,d,W}(p)$ Monte Carlo samples after $d$ rounds of syndrome extraction at a physical error probability $p$, where $W\in\{X,Z\}$. We define the logical error rate per cycle for each basis as follows: 
\begin{align*}
    p_{L,W}(p) = 1 -\left(1-\frac{N_{e,d,W}(p)}{N_{s,d,W}(p)}\right)^{1/d}, \quad W\in\{X,Z\},
\end{align*}
which is the complement to the success probability per syndrome cycle. Throughout the paper, we report the logical error rate per cycle, given by
\begin{align*}
    p_L(p)=1-(1-p_{L,X}(p))(1-p_{L,Z}(p)),
\end{align*}
representing the complement of the probability of success in both bases simultaneously. The pseudo-threshold reported for each code is defined as the solution of the break-even condition:
\begin{align*}
    p_L(p_{th})=1-(1-p_{th})^{k}.
\end{align*}
Here $1-(1-p)^{k}$ represents the probability of at least one error occurring in $k$ qubits. In Figures \ref{fig:circuit_level_weight_6} and \ref{fig:circuit_level_weight_8}, we plot the logical error rates versus physical error rates of the codes listed in Table \ref{tab:group_codes}, decoded using the BP-OSD decoder under the standard circuit-level noise model. The curves in these figures are obtained by fitting the numerical data points in the sub-threshold regime to the heuristic fitting formula $\Tilde{p}_L(p)= p^{d_{\mathrm{circ}}/2}e^{\alpha+\beta p + \gamma p^2}$. The parameter $d_{\mathrm{circ}}$, which appears in the exponent of $p$, represents circuit-level distance of the code. This is defined as the minimum number of faulty operations required to cause an undetectable error during a syndrome extraction cycle. An upper bound on this parameter is obtained by solving the optimization problem introduced in \cite[Supplemental Sec.~2]{Bravyi2024}. For each code, we set the number of trials to at least $50{,}000$ to obtain these upper bounds. The fitting parameters, as well as the upper bounds on the circuit-level distance, are provided in the Appendix \ref{appendix:FitParams}.

\begin{figure}[htbp]
    \centering
    % Subfigure (a) - Weight 6
    \begin{subfigure}[b]{0.9\columnwidth}
        \centering
        \includegraphics[width=\textwidth]{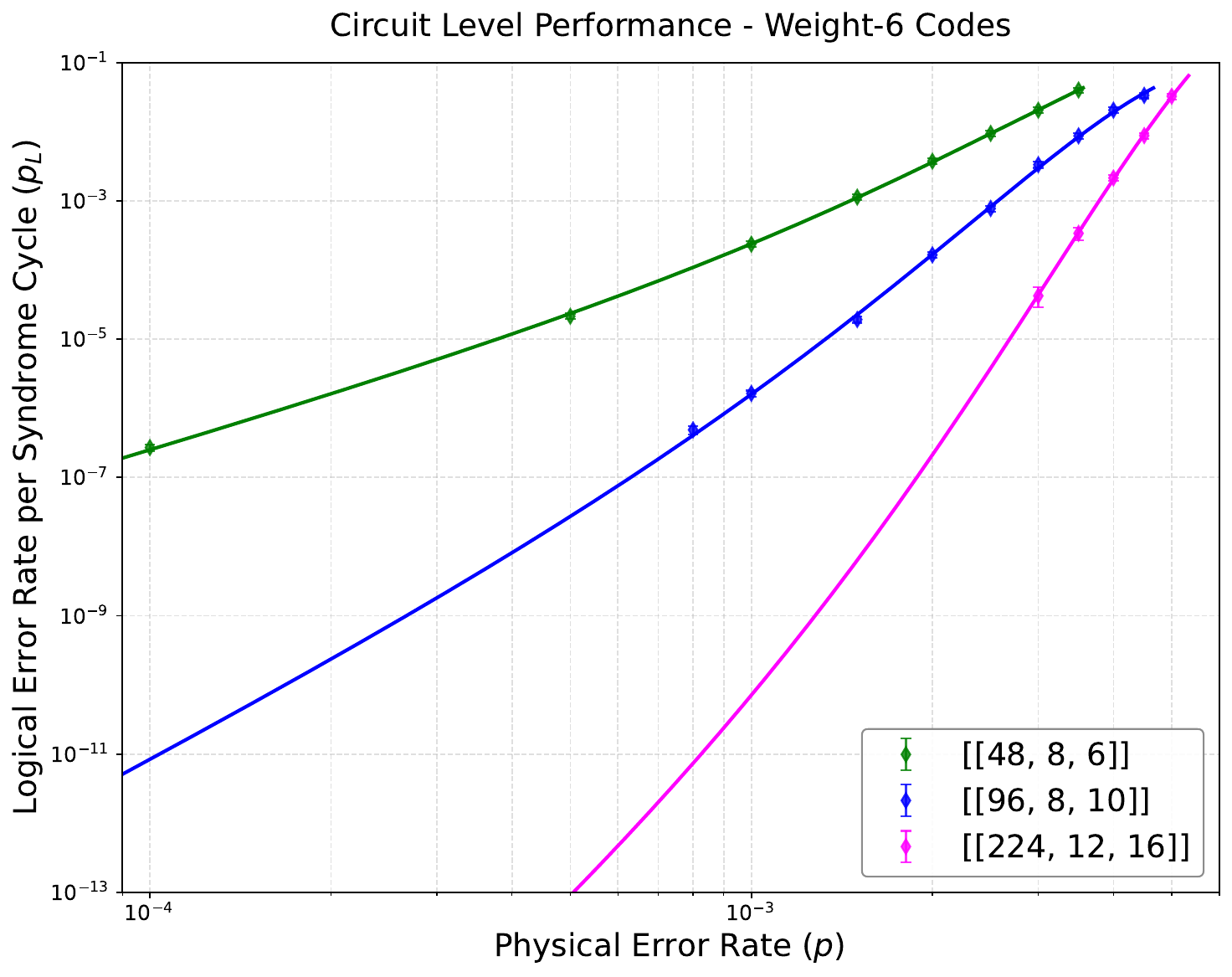}
        \caption{Weight-6 codes}
        \label{fig:circuit_level_weight_6}
    \end{subfigure}
    
    \vspace{2em} % Adds vertical space between the subfigures

    % Subfigure (b) - Weight 8
    \begin{subfigure}[b]{0.9\columnwidth}
        \centering
        \includegraphics[width=\textwidth]{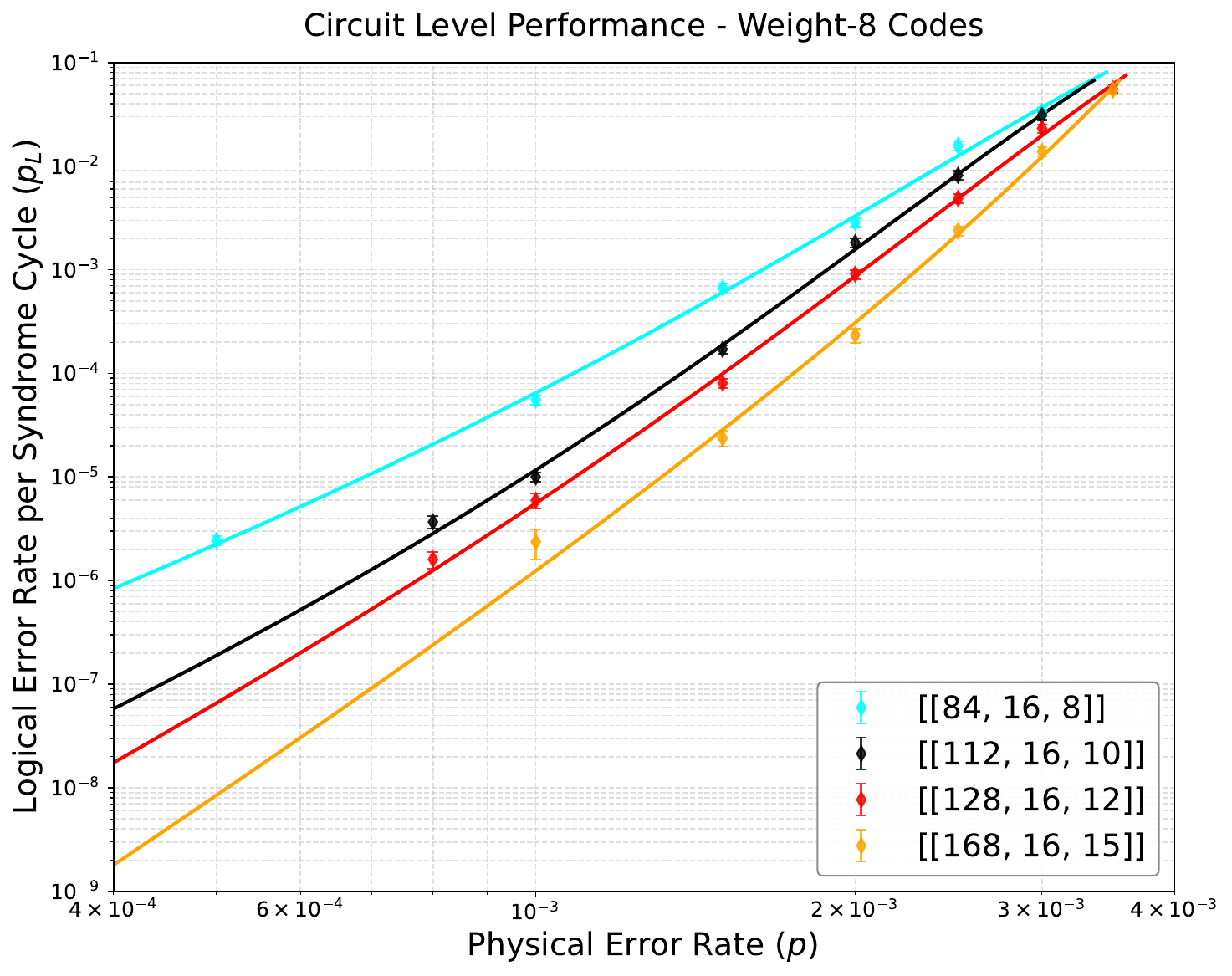}
        \caption{Weight-8 codes}
        \label{fig:circuit_level_weight_8}
    \end{subfigure}
    
    % Main Caption
    \caption{Logical error rate $p_L$ as a function of the physical error rate $p$ for (a) weight-6 and (b) weight-8 $Q_G^H(\ah,\bh)$ codes under the standard circuit-level noise model. The numerical data points represent Monte Carlo simulations, while the solid curves are obtained using the fitting formula $\Tilde{p}_L(p)= p^{d_{\mathrm{circ}}/2}e^{\alpha+\beta p + \gamma p^2}$. All codes were decoded using the BP-OSD decoder.}
    \label{fig:combined_circuit_level}
\end{figure}

\begin{table*}[htbp]
\centering
\caption{Summary of simulation results for selected $Q_G^H(\ah,\bh)$ codes under the standard circuit-level noise model. All logical error rates ($p_L$) and pseudo-thresholds are reported for the BP-OSD decoder. The net encoding rate is defined as $1/\lceil 2n/k \rceil$ to account for both data and ancilla qubits.}
\label{tab:summarylogicalerrorrates}
\renewcommand{\arraystretch}{1.2} 
\begin{tabular}{@{} c | c | c | c | c | c | c @{}}
\hline
\textbf{Code} & \textbf{Weight} & \begin{tabular}{@{}c@{}}\textbf{Net Encoding} \\ \textbf{Rate}\end{tabular} & \textbf{Pseudo-threshold} & $p_L(10^{-3})$ & ${p_L(10^{-4})}$ & ${k d^2/n}$ \\
\hline
$[[48, 8, 6]]$      & 6 & $1/12$ & $0.36\%$ & $2\times10^{-4}$  & $3\times10^{-7}$  & 6.00 \\
$[[96, 8, 10]]$     & 6 & $1/24$ & $0.47\%$ & $2\times10^{-6}$  & $8\times10^{-12}$ & 8.33 \\
$[[224, 12, 16]]$   & 6 & $1/38$ & $0.54\%$ & $7\times10^{-11}$ & $2\times10^{-19}$ & 13.71 \\
\hline
$[[84, 16, 8]]$     & 8 & $1/11$ & $0.35\%$ & $6\times10^{-5}$  & $3\times10^{-9}$  & 12.19 \\
$[[112, 16, 10]]$   & 8 & $1/14$ & $0.34\%$ & $1\times10^{-5}$  & $9\times10^{-11}$ & 14.29 \\
$[[128, 16, 12]]$   & 8 & $1/16$ & $0.36\%$ & $6\times10^{-6}$  & $9\times10^{-12}$ & 18.00 \\
$[[168, 16, 15]]$   & 8 & $1/21$ & $0.36\%$ & $2\times10^{-6}$  & $2\times10^{-13}$ & 21.43 \\
\hline
\end{tabular}
\end{table*}

The weight-$6$ $Q_G^H(\ah,\bh)$ codes introduced in this work are competitive with BB codes. We estimated the threshold for the weight-6 code family to be $0.0065$. For the weight-$8$ codes, the threshold decreases to $0.0035$. However, the weight-$8$ codes allow for higher rates and larger distances. Consequently, their logical error rates remain competitive, particularly at lower physical error rates, as the higher distances result in a steeper suppression of errors in the sub-threshold regime. 

\subsection{Performance under various decoders}

We also evaluated a subset of codes using recently introduced decoders such as the beam search decoder \cite{ye2025beamsearchdecoderquantum} and the relay-BP decoder \cite{muller2025improvedbeliefpropagationsufficient}. In our simulations, we used a beam width of $32$ for the beam search decoder. As noted in Ref. \cite{ye2025beamsearchdecoderquantum}, this decoder is a promising candidate for trapped-ion architectures due to its fast 99.9th percentile execution time. We note that the configuration we used is designed for real-time fast decoding, and better error suppression is possible with a more aggressive configuration for this decoder. For the Relay-BP decoder, we utilized the configuration referred to as 'Relay-BP-5' in Ref. \cite{muller2025improvedbeliefpropagationsufficient}. We also chose the memory strengths matching those used for Gross code in the same paper. We note that further optimization of these parameters could yield improvements in both accuracy and decoding time. 

In Figure \ref{fig:decoder_performance}, we plot the logical error rate as a function of physical error rate for the decoders discussed above. Specifically, we present results for the weight-$6$ codes $[[48,8,6]]$, $[[96,8,10]]$ and the weight-$8$ code $[[128,16,12]]$. 

\begin{figure*}[t]
    \centering
    % --- Subfigure (a) ---
    \begin{subfigure}[b]{0.32\textwidth}
        \centering
        \includegraphics[width=\linewidth]{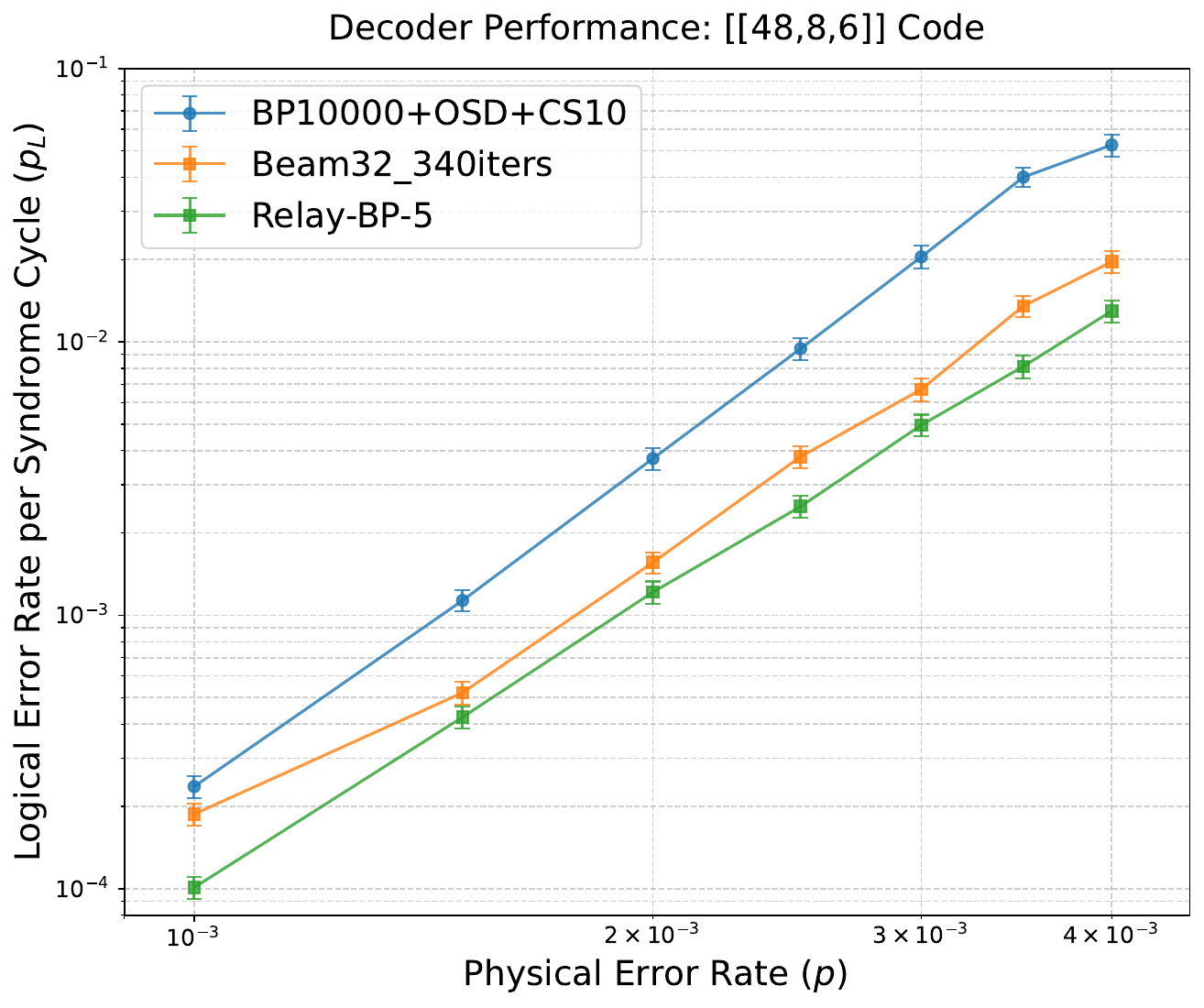}
        \caption{}
        \label{fig:weight6}
    \end{subfigure}
    \hfill
    % --- Subfigure (b) ---
    \begin{subfigure}[b]{0.32\textwidth}
        \centering
        \includegraphics[width=\linewidth]{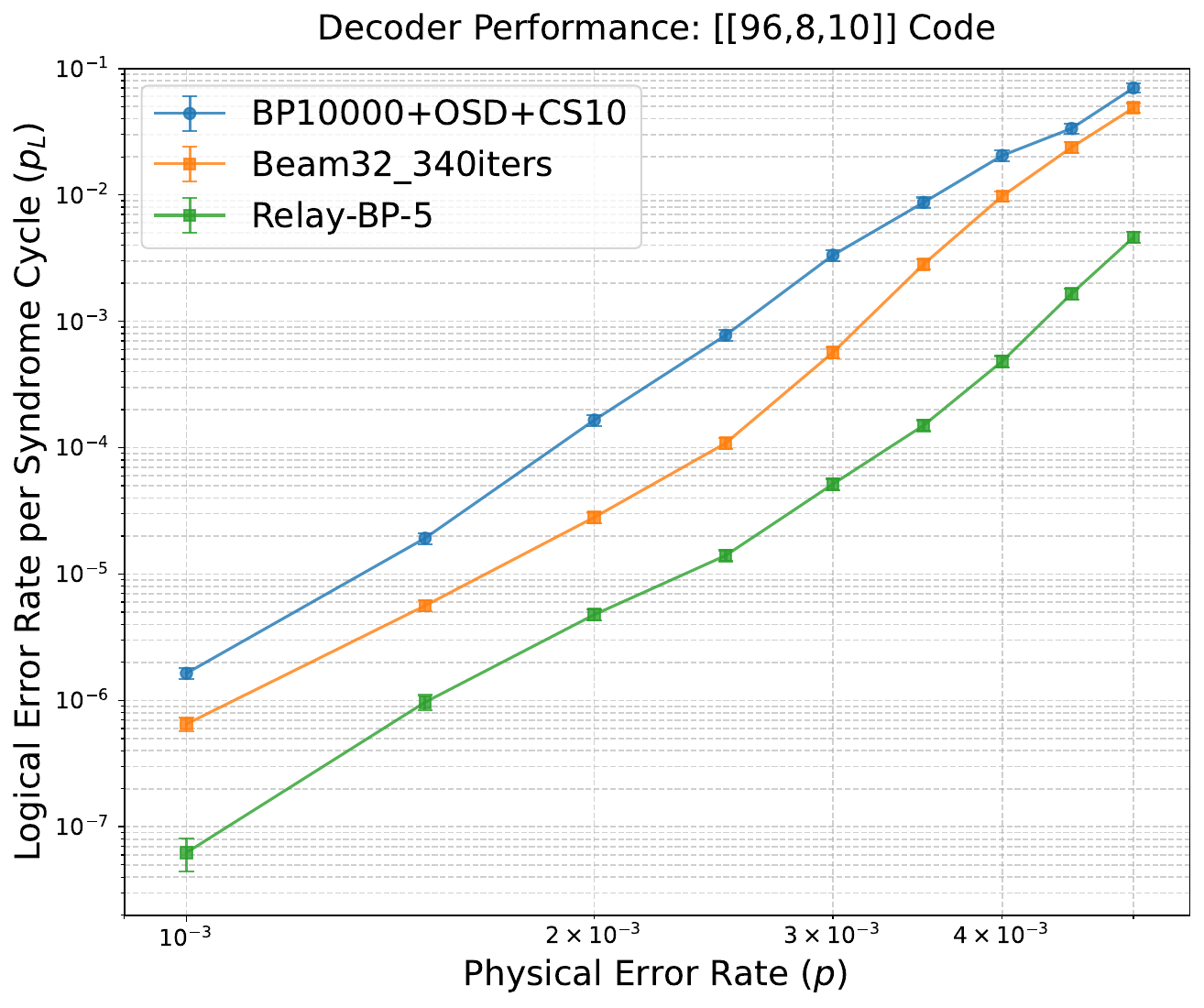}
        \caption{}
        \label{fig:weight8}
    \end{subfigure}
    \hfill
    % --- Subfigure (c) ---
    \begin{subfigure}[b]{0.32\textwidth}
        \centering
        \includegraphics[width=\linewidth]{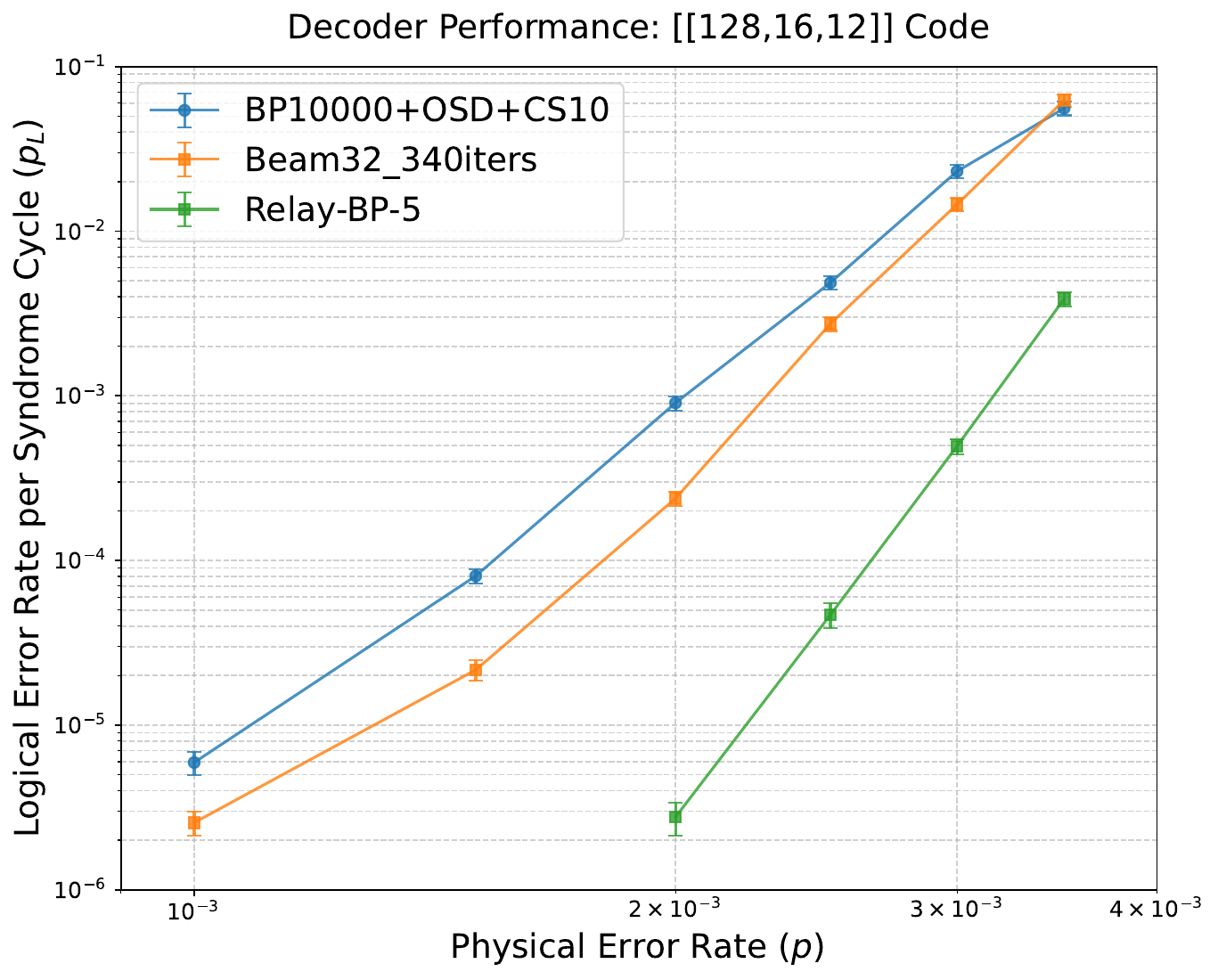}
        \caption{}
        \label{fig:realtime}
    \end{subfigure}

    \caption{Logical error rate $p_L$ as a function of physical error rate $p$ for various $Q_G^H(\ah, \bh)$ codes under different decoders: (a) the $[[48,8,6]]$ code, (b) the $[[96,8,10]]$ code, and (c) the $[[128,16,12]]$ code.}
    \label{fig:decoder_performance}
\end{figure*}

As can be seen from these figures, our codes performed better under both Relay-BP and beam search decoder than under BP-OSD. For the $[[48,8,6]]$ code, the pseudo-threshold increases from $0.0036$ to $0.0046$ for the beam search decoder and to $0.0054$ for the Relay-BP decoder. For the $[[96,8,10]]$ code, the pseudo-threshold increases from $0.0047$ to $0.0063$ for Relay-BP decoder. Similarly, for the $[[128,16,12]]$ code, the pseudo-threshold increases from $0.0036$ to $0.0045$ using the Relay-BP decoder. At a physical error rate of $2\times 10^{-3}$, the beam search decoder achieves a $2.4\times$ reduction, and the relay-BP decoder achieves a $3.1\times$ reduction in the logical error rate compared to BP-OSD for the $[[48,8,6]]$ code. For the $[[96,8,10]]$ code, the beam search and relay-BP decoders yield $5.88\times$ and $34.63\times$ reductions, respectively. We observe a particularly significant improvement with the relay-BP decoder over BP-OSD for the weight-8 $[[128,16,12]]$ code: while the beam search decoder provides a $3.8\times$ reduction, the relay-BP decoder achieves $327.6\times$ reduction in the logical error rate. Note that while the relay-BP decoder shows strong error suppression capabilities, especially for the weight-8 $[[128,16,12]]$ code, the beam search decoder (even in the fast, non-aggressive configuration used here) outperforms the BP-OSD baseline while maintaining the fast 99.9th percentile execution time, which is critical for real-time decoding architectures.

These observations strongly suggest that our codes also perform well under alternative decoding strategies. The real-time decodability of our codes, even with their requirement for all-to-all connectivity, makes them attractive candidates for architectures supporting long-range interactions.

\section{Sequences of Codes from Covering Graphs}\label{sec:CoverGraphs}
In this section, we study sequences of codes constructed from covering graphs of 2BGA codes, a construction that is conceptually distinct from the coset-based methods discussed throughout the previous sections. In classical coding theory, lifting a small base graph to a larger covering graph is a standard and powerful technique for constructing high-performing families of LDPC codes. In practice, this derived graph is commonly constructed by applying `copy-and-permute' operations to a small structural blueprint known as a protograph \cite{2003IPNPR.154C...1T}. Beyond protographs, other graph-covering techniques used for constructing classical codes include unwrapping block codes into LDPC convolutional codes \cite{5695133}, using topological voltage graphs to construct block codes \cite{4595095}, and optimizing voltage assignments to maximize the girth of tailbiting codes \cite{6084747}. An important subclass of graph-cover constructions arises when the lifting voltages are assigned from a cyclic group. 
This approach yields quasi-cyclic LDPC (QC-LDPC) codes, which algebraically map the lifting operations to shifts of circulant matrices \cite{1317123, 1362891}. Since the covering graph is locally isomorphic to the base graph, a sequence of codes constructed from a well-designed base code is expected to perform well under belief propagation (BP) since BP structurally is a local message-passing algorithm.

Extending these graph-cover techniques to quantum codes is significantly more challenging due to the orthogonality constraint required by the CSS construction. However, by utilizing constructions based on the product of two classical codes such as hypergraph product (HGP) codes \cite{tripier2026faulttolerantquantumcomputingtrapped} or lifted product (LP) \cite{10.1145/3519935.3520017} codes, these classical lifting techniques can be extended
to the quantum setting. Moving beyond product-based constructions, a more general framework for lifting arbitrary quantum CSS codes was recently developed in Ref. \cite{10931129}. More recently, by studying voltage assignments on the Tanner graphs of BB codes, 
Symons, Rajput, and Browne \cite{symons2025sequencesbivariatebicyclecodes} established conditions guaranteeing that the code described by the covering Tanner graph is also a BB code. 
%In this section, we formulate their approach in an algebraic language, generalizing and simplifying their results.

Building on these concepts, in this section, we formally define graph-cover-based sequences of 2BGA codes, translating the topological structure into a purely group-theoretic language. 

%\zac{Since $G$ is already used throughout the paper for a group, using $G=(V,E)$ for a graph here is confusing. I suggest denoting a general graph by $\Gamma=(V,E)$ or $\tau=(V,E)$ in this paragraph. This is especially important in this section, where both groups and graphs appear simultaneously.}
We begin by recalling the concept of a covering graph. Given a graph $\tau=(V,E)$, let $$\cN(v):=\{w\in V: vw\in E\}$$ be the neighborhood of a vertex $v\in V$.

\begin{definition}[Covering Graph]\label{def:CoverGraph}
    Let $\tau=(V,E)$ and $\tilde{\tau}=(\tilde{V},\tilde{E})$ be two graphs, and let $f:\tilde{V}\to V$ be a surjection.  If for every vertex $\tilde{v} \in \tilde{V}$,  $f$ maps bijectively ${\cN(\tilde{v})}$ onto $\cN(f(\tilde{v}))$, then $\tilde{\tau}$ is called a covering graph of the base graph $\tau$. Furthermore, if $|f^{-1}(v)|=h$ for every $v\in V$, then $\tilde{\tau}$ is called an $h$-fold cover of $\tau$.  
\end{definition}
Intuitively, an $h$-fold cover can be understood as replacing each vertex of the base graph with a set of $h$ vertices, while preserving the local neighborhood structure.
Our main result in this section is the construction of sequences of 2BGA codes from covering graphs. 

\begin{theorem}\label{theorem:coverCodes}
    Let $Q_G(\ah,\bh)\triangleq Q_G^{\{e\}}(\ah, \bh)$ be a binary 2BGA code of length $n$. Let $(G_i)_{i\ge1}$ be a sequence of groups containing normal subgroups $H_i \trianglelefteq G_i$ such that for all $i\ge 1$, there exist group isomorphisms $\phi_i: G_i/H_i \to G$, which we extend to algebra isomorphisms $\phi_i: \mathbb{F}_2[G_i/H_i] \to \mathbb{F}_2[G]$ by linearity . Let us define the canonical projection homomorphism
    \begin{align*}
        \pi_i : \mathbb{F}_2[G_i] &\to \mathbb{F}_2[G_i/H_i]\\
        \sum_{g\in G_i}\alpha_g g &\mapsto \sum_{g\in G_i}\alpha_g(gH_i).
    \end{align*}
    If the group-algebra elements $\ah_i,\bh_i\in \mathbb{F}_2[G_i]$ are chosen such that $|\operatorname{supp}(\ah_i)|=|\operatorname{supp}(\ah)|$ and $|\operatorname{supp}(\bh_i)|=|\operatorname{supp}(\bh)|$ and they satisfy 
    
    \begin{align*}
        \phi_i\big(\pi_i(\ah_i)\big) = \ah \quad \text{and} \quad \phi_i\big(\pi_i(\bh_i)\big) = \bh,
    \end{align*}
    then the Tanner graph of the code $Q_{G_i}(\ah_i,\bh_i)$ is an $h_i$-fold covering graph of the Tanner graph of the code $Q_{G}(\ah,\bh)$, where $h_i = |H_i|$. 
\end{theorem}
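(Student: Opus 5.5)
We define the covering map explicitly and check the neighborhood condition vertex by vertex. Index the Tanner graph of $Q_{G_i}(\ah_i,\bh_i)$ by group elements. There are data qubits $D_L[x],D_R[x]$, checks $X[x]$ and $Z[x]$, for $x\in G_i$; the trivial subgroup has trivial core and normalizer $G_i$, so cosets are singletons. Define $\Phi_i:G_i\to G$ by $\Phi_i(x)=\phi_i(xH_i)$, a surjective homomorphism with kernel $H_i$. The candidate covering map $f$ sends each vertex of type $T\in\{D_L,D_R,X,Z\}$ labelled $x$ to the vertex of the same type labelled $\Phi_i(x)$ in the base Tanner graph. It is surjective, and every fibre has size $|H_i|=h_i$, so once it is shown to be a cover it is an $h_i$-fold cover.

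The first step is to describe neighborhoods combinatorially. In a binary 2BGA code, $X[x]$ is adjacent to $D_L[gx]$ for $g\in\supp(\ah)$ and to $D_R[xg]$ for $g\in\supp(\bh)$, up to the paper's inverse conventions. Dually, $Z[x]$ is adjacent to $D_L[xg^{-1}]$ for $g\in\supp(\bh)$ and to $D_R[g^{-1}x]$ for $g\in\supp(\ah)$. The qubit neighborhoods are obtained by reading these relations backwards. One subtlety must be handled first. In the base code, for $x\in G$, the elements $gx$ with $g\in\supp(\ah)$ are pairwise distinct, since left multiplication is injective. Hence $X[x]$ has exactly $w_{\ah}$ distinct $D_L$-neighbors, and similarly in every other block, so the graph is simple with the stated degrees. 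The same holds in the lifted code.

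The second step, which is the heart of the argument, is to show that $\supp(\ah)$ corresponds bijectively to $\supp(\ah_i)$ under $\Phi_i$. Write $\ah_i=\sum_{g\in S_i}g$. Then $\phi_i(\pi_i(\ah_i))=\sum_{g\in S_i}\Phi_i(g)$ over $\mathbb F_2$, and this equals $\ah$. Its support therefore has size $|\supp(\ah)|=|S_i|$. A sum of $|S_i|$ group elements over $\mathbb F_2$ can have support of size $|S_i|$ only if no two terms collide, since any collision cancels or merges terms and reduces the support. So $\Phi_i$ restricted to $S_i$ is injective with image $\supp(\ah)$. The same argument works for $\bh_i$. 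This is the one place where the hypothesis $|\supp(\ah_i)|=|\supp(\ah)|$ is essential, so I expect it to be the main point to state carefully rather than a real obstacle.

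The third step checks the local bijection. Take the vertex $X[x]$ with $x\in G_i$. Its $D_L$-neighbors are $D_L[gx]$ for $g\in S_i$, and $f$ maps these to $D_L[\Phi_i(g)\Phi_i(x)]$ because $\Phi_i$ is a homomorphism. By the second step these are exactly the $D_L$-neighbors of $X[\Phi_i(x)]$, each hit once. The $D_R$ block is handled identically with right multiplication. The $Z$-checks work the same way, using $\Phi_i(g^{-1})=\Phi_i(g)^{-1}$. For a data qubit, say $D_L[y]$, its $X$-neighbors are $X[g^{-1}y]$ for $g\in S_i$ and its $Z$-neighbors are $Z[yg]$ for $g\in\supp(\bh_i)$. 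The identical computation shows these map bijectively onto the neighbors of $D_L[\Phi_i(y)]$.

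Since $f$ preserves vertex types, neighbors of different types cannot be confused. Hence $f$ is a bijection from $\cN(\tilde v)$ onto $\cN(f(\tilde v))$ for every vertex $\tilde v$. By Definition \ref{def:CoverGraph}, $f$ is therefore an $h_i$-fold covering map.
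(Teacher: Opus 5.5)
Your proof is correct and follows essentially the same route as the paper: the vertex map $\tilde W(g)\mapsto W(\phi_i(\pi_i(g)))$, the neighborhood bijection coming from the homomorphism property, and fibres of size $|H_i|$. Your collision argument over $\mathbb{F}_2$, which shows that $\phi_i\circ\pi_i$ restricts to a bijection $\supp(\ah_i)\to\supp(\ah)$, makes rigorous a step the paper only asserts from the equality of support sizes; you also check the data-qubit and $Z$-check neighborhoods explicitly, where the paper appeals to ``an identical argument.''
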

We say that the code $Q_{G_i}(\ah_i,\bh_i)$ of length $n_{h_i}=h_in$ is  an $h_i$-fold cover of the base code $Q_{G}(\ah,\bh)$. Before proceeding to the proof, we clarify that the conditions $|\operatorname{supp}(\ah_i)|=|\operatorname{supp}(\ah)|$ and $|\operatorname{supp}(\bh_i)|=|\operatorname{supp}(\bh)|$ in Theorem \ref{theorem:coverCodes} ensure that each element in $\operatorname{supp}(\ah)$ has exactly one chosen lift in $\operatorname{supp}(\ah_i)$, and similarly for $\bh$.

\begin{proof} (of Theorem \ref{theorem:coverCodes})
Let $\tau=\tau_X\cup \tau_Z$ be the Tanner graph of the binary code $Q_G(\ah,\bh)$, where $\tau_X$ and $\tau_Z$ are the bipartite graphs defined by the parity-check matrices $H_X=[\bfL(\ah) \mid \bfR(\bh)]$ and $H_Z=[\bfR(\bh)^T \mid \bfL(\ah)^T]$, respectively. For 2BGA codes, the blocks $\bfL(\cdot)$ and $\bfR(\cdot)$ are linear combinations of the permutation matrices representing the left and right regular actions of $G$ on itself. Hence, the rows and columns of the parity-check matrices can be naturally labeled by the elements of the group $G$.

Consider first the graph $\tau_X=(V_X\cup V_{D_L}\cup V_{D_R}, E_X)$. Here, $V_X=\{X(h) \mid h \in G\}$ is the set of $X$-checks corresponding to the rows in $H_X$, and $V_{D_k}=\{D_k(h) \mid h \in G\}$ for $k \in \{L,R\}$ represent the sets of data qubits corresponding to the first and second halves of the columns in $H_X$. For every check $X(h)$ where $h\in G$, there is an edge connecting $X(h)$ to the data qubit $D_L(h_a^{-1} h)$ for each $h_a \in \operatorname{supp}(\ah)$, and another edge connecting $X(h)$ to the data qubit $D_R(h h_b^{-1})$ for each $h_b \in \operatorname{supp}(\bh)$. 

Similarly, we define the Tanner graph $\tilde{\tau} = \tilde{\tau}_X \cup \tilde{\tau}_Z$ for the code $Q_{G_i}(\ah_i, \bh_i)$. The $X$-Tanner graph $\tilde{\tau}_X = (\tilde{V}_X \cup \tilde{V}_{D_L} \cup \tilde{V}_{D_R}, \tilde{E}_X)$ is defined such that the vertex sets are $\tilde{V}_X = \{\tilde{X}(g) \mid g \in G_i\}$ and $\tilde{V}_{D_k} = \{\tilde{D}_k(g) \mid g \in G_i\}$ for $k\in\{L,R\}$. 

Define a surjective mapping
\begin{align*}
    f : \tilde{V}_X \cup \tilde{V}_Z \cup \tilde{V}_{D_L} \cup \tilde{V}_{D_R} &\to V_X \cup V_Z \cup V_{D_L} \cup V_{D_R} \\
    \tilde{W}(g) &\mapsto W(\phi_i(\pi_i(g))),
\end{align*}
where $W \in \{ X, Z, D_L, D_R \}$. Note that
\begin{align*}
    \cN(\tilde{X}(g)) &= \{ \tilde{D}_L(g_a^{-1} g) \mid g_a \in \operatorname{supp}(\ah_i) \} \\
    &\hspace*{.3in} \cup \{ \tilde{D}_R(g g_b^{-1}) \mid g_b \in \operatorname{supp}(\bh_i) \}, \quad g \in G_i.
\end{align*}

Let $h(g) = \phi_i(\pi_i(g))$ for all $g \in G_i$. By the conditions of the code construction, for any $g_a \in \operatorname{supp}(\ah_i)$ and $g_b \in \operatorname{supp}(\bh_i)$, we have $h(g_a) \in \operatorname{supp}(\ah)$ and $h(g_b) \in \operatorname{supp}(\bh)$. For the vertex $f(\tilde{X}(g)) = X(h(g))$, the neighborhood in the base graph is given as
\begin{align*}
    \cN(X(h(g))) &= \{ D_L(h(g_a)^{-1} h(g)) \mid h(g_a) \in \operatorname{supp}(\ah) \} \\
    &\hspace*{.3in} \cup \{ D_R(h(g) h(g_b)^{-1}) \mid h(g_b) \in \operatorname{supp}(\bh) \}.
\end{align*}

Let $f|_{\cN(\tilde{X}(g))}$ be the restriction of $f$ to $\cN(\tilde{X}(g))$. By definition, we have $f(\tilde{D}_L(g_a^{-1} g)) = D_L(h(g_a^{-1}g))=D_L(h(g_a^{-1})h(g))=D_L(h(g_a)^{-1}h(g))$ and $f(\tilde{D}_R(g g_b^{-1})) = D_R(h(gg_b^{-1})) = D_R(h(g)h(g_b^{-1})) = D_R(h(g)h(g_b)^{-1}) $. Because the code construction requires that $|\operatorname{supp}(\ah_i)| = |\operatorname{supp}(\ah)|$ and $|\operatorname{supp}(\bh_i)| = |\operatorname{supp}(\bh)|$, the mappings $g_a \mapsto h(g_a)$ and $g_b \mapsto h(g_b)$ are bijections between their respective supports. Consequently, $f|_{\cN(\tilde{X}(g))}$ maps the distinct elements of $\cN(\tilde{X}(g))$ one-to-one and onto the distinct elements of $\cN(f(\tilde{X}(g)))$ for all $g\in G_i$. An identical argument establishes that this local bijection also holds for the neighborhoods of the data vertices $\tilde{D}_L(g)$ and $\tilde{D}_R(g)$.
Finally, since $|f^{-1}(X(h^\prime))| = |H_i| = h_i$ for all $h^\prime \in G$, the graph $\tilde{\tau}_X$ is the $h_i$-fold cover of the graph $\tau_X$.

The claims for the $Z$-Tanner graph are proved by following a similar sequence of steps.
\end{proof} 

We remark that Theorem \ref{theorem:coverCodes} recovers the combined results of Theorems 3.1, 3.3, and 3.4 in \cite{symons2025sequencesbivariatebicyclecodes} as a particular case. Namely, their results correspond to the case of the base group to $G=\mathbb{Z}_l \times \mathbb{Z}_m$, the covering group $\tilde{G}=\mathbb{Z}_{ul} \times \mathbb{Z}_{tm}$, and the normal subgroup $H \cong \mathbb{Z}_u \times \mathbb{Z}_t$. In contrast, our theorem applies to any group extension of $G$ by $H_i$, including non-abelian ones. Thus, setting the covering
argument in a group-theoretic framework, we describe a substantially broader class of LDPC codes compared to the earlier work. 

The following example illustrates this algebraic lifting construction in action.
\begin{example}\label{example:coverCodes}
    Let $V_4=\{1,x,y,xy\}$ be the Klein four-group, and let $Q_{V_4}(\ah_{\mathrm{base}},\bh_{\mathrm{base}})$ be a 2BGA base code defined by $\ah_{\mathrm{base}}=1+x$ and $\bh_{\mathrm{base}}=1+y$. Consider the dihedral group of order $8$ given by the presentation $D_4=\langle r,s \mid r^4 = s^2 = (sr)^2 = 1\rangle$, and a normal subgroup $H=\{1,r^2\}$. The quotient group $D_4/H$ consists of the cosets $H=\{1,r^2\}$, $rH=\{r,r^3\}$, $sH=\{s,sr^2\}$, and $srH=\{sr,sr^3\}$. 
    
 Define a group isomorphism $\phi : D_4/H \xrightarrow{\sim} V_4$ via the mappings:
    \begin{align*}
        H &\mapsto 1, & sH &\mapsto y, \\
        rH &\mapsto x, & srH &\mapsto xy.
    \end{align*}
    To construct a valid lift $\ah_{\text{cover}}$ such that $\phi(\pi(\ah_{\text{cover}}))=\ah_{\mathrm{base}}$ while preserving the support size, we must select one element from $H$ and one element from $rH$. This yields four valid choices for the first group algebra element $\ah_{\text{cover}}$:
    \begin{align*}
        \ah_{\text{cover}} \in \{1+r, \ 1+r^3, \ r^2+r, \ r^2+r^3\}.
    \end{align*}
    Similarly, lifting $\bh_{\mathrm{base}}=1+y$ requires choosing one element from $H$ and one from $sH$, yielding four valid choices for the second group algebra element $\bh_{\text{cover}}$:
    \begin{align*}
        \bh_{\text{cover}} \in \{1+s, \ 1+sr^2, \ r^2+s, \ r^2+sr^2\}.
    \end{align*}
    Selecting any pair $(\ah_{\text{cover}}, \bh_{\text{cover}})$ from these two sets satisfies the conditions of Theorem \ref{theorem:coverCodes}. Consequently, this specific assignment yields $16$ valid pairs of $(\ah_{\text{cover}}, \bh_{\text{cover}})$ that form $2$-fold covers of the base code $Q_{V_4}(1+x,1+y)$. However, as will be formalized by the equivalence principles in Proposition \ref{prop:GeneralCoverEquivalence}, the number of strictly unique codes up to isomorphism is considerably less than $16$. $\triangleleft$
\end{example}

\subsection{Constructing and Searching for Cover Codes}
To construct an $h$-fold cover code from the base code over a group $G$, we begin by identifying a suitable covering group $\tilde{G}$. Specifically, we systematically search the GAP Small Groups library for a group $\tilde{G}$ of order $|\tilde{G}| = h|G|$ with a normal subgroup $H \trianglelefteq \tilde{G}$ such that $\tilde{G}/H \cong G$.
 To reduce the size of the search space for cover codes, we can first assume that the base group algebra elements contain the identity $e_G \in G$ by Lemma \ref{lemma:equivalence1}. Consequently, by the following proposition, we can assume without loss of generality that the lifted group algebra elements contain the identity $e_{\tilde{G}} \in \tilde{G}$.

\begin{proposition}\label{prop:GeneralCoverEquivalence}
    Let $Q_{\tilde{G}}(\ah_{\text{cover}},\bh_{\text{cover}})$ be any valid cover code for a binary base code $Q_G(\ah_{\mathrm{base}},\bh_{\mathrm{base}})$ over the group-subgroup pair $(\tilde{G},H)$ with the isomorphism $\phi:\tilde{G}/H\to G$. Let $\pi: \mathbb{F}_2[\tilde{G}] \to \mathbb{F}_2[\tilde{G}/H]$ be the canonical projection homomorphism. For any elements $\tilde{g}_a, \tilde{g}_b \in \tilde{G}$, the shifted code $Q_{\tilde{G}}( \ah_{\text{cover}}*\tilde{g}_a, \tilde{g}_b*\bh_{\text{cover}} )$ is an equivalent cover code. Furthermore, this shifted cover code covers the base code $Q_G( \ah_{\mathrm{base}} *g_a, g_b*\bh_{\mathrm{base}} )$, which is equivalent to the original base code, where $g_a = \phi(\pi(\tilde{g}_a))$ and $g_b = \phi(\pi(\tilde{g}_b))$.
\end{proposition}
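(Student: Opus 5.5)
The proof splits into three claims, each reducing to an earlier result. First, equivalence of the two cover codes: Lemma~\ref{lemma:equivalence1} applied with $H=\{e\}$ (so $N_{\tilde G}(H)=\tilde G$), $F=\mathbb{F}_2$ and $\alpha=\beta=1$ shows that $Q_{\tilde{G}}(\ah_{\text{cover}}*\tilde g_a,\tilde g_b*\bh_{\text{cover}})$ is equivalent to $Q_{\tilde{G}}(\ah_{\text{cover}},\bh_{\text{cover}})$. The same lemma over $G$ with $g_a,g_b$ shows that $Q_G(\ah_{\mathrm{base}}*g_a,g_b*\bh_{\mathrm{base}})$ is equivalent to the original base code. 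What remains is the covering statement, which I will get from Theorem~\ref{theorem:coverCodes} with $G_i=\tilde G$, $H_i=H$, $\phi_i=\phi$ and $\pi_i=\pi$.

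To apply the theorem I must check its two hypotheses for the shifted elements. The support condition holds because right multiplication by $\tilde g_a$ is a bijection of $\tilde G$. Hence $\operatorname{supp}(\ah_{\text{cover}}*\tilde g_a)=\operatorname{supp}(\ah_{\text{cover}})\,\tilde g_a$ has the same size, with no cancellation since distinct group elements stay distinct. Likewise, $\operatorname{supp}(\ah_{\mathrm{base}}*g_a)$ has size $|\operatorname{supp}(\ah_{\mathrm{base}})|$, and the same reasoning with left multiplication handles $\bh$. The projection condition follows because $\pi$ is an algebra homomorphism: it is the linear extension of the group homomorphism $\tilde G\to\tilde G/H$. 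Likewise, $\phi$ extended linearly is an algebra isomorphism. Therefore
\begin{align*}
\phi(\pi(\ah_{\text{cover}}*\tilde g_a))=\phi(\pi(\ah_{\text{cover}}))*\phi(\pi(\tilde g_a))=\ah_{\mathrm{base}}*g_a,
\end{align*}
and analogously $\phi(\pi(\tilde g_b*\bh_{\text{cover}}))=g_b*\bh_{\mathrm{base}}$. Theorem~\ref{theorem:coverCodes} then says that the Tanner graph of the shifted cover code is an $|H|$-fold cover of the Tanner graph of $Q_G(\ah_{\mathrm{base}}*g_a,g_b*\bh_{\mathrm{base}})$, so the shifted code is a valid cover code.

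I expect the main obstacle to be interpretive rather than technical: what ``equivalent cover code'' should mean. The weakest reading is that the shifted code is an equivalent code that is also a cover; this follows directly from the two steps above. For a stronger reading, I would note that the permutation $P_X=\mathrm{diag}(\bfL(\tilde g_a),\bfR(\tilde g_b))$ from the proof of Lemma~\ref{lemma:equivalence1} is compatible with the covering maps. It sends the fibre over $D_L(h)$ to the fibre over $D_L(h g_a)$ (or its inverse, depending on convention), because $\pi$ intertwines the regular actions. So the equivalence of the cover codes lies over the equivalence of the base codes. This also justifies the intended use of the proposition: by picking $\tilde g_a$ to be the inverse of some support element of $\ah_{\text{cover}}$ lying over $e_G$, and similarly for $\tilde g_b$, both lifted elements may be assumed to contain $e_{\tilde G}$.
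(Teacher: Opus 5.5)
Your proposal is correct and follows the paper's proof: you use Lemma~\ref{lemma:equivalence1} over $\tilde G$ and over $G$ for the two equivalences, and multiplicativity of $\phi\circ\pi$ to identify the covered base code, and you also check the support-size hypothesis of Theorem~\ref{theorem:coverCodes}, which the paper leaves implicit. One small slip in your optional remark: $\bfL(\tilde g_a)$ permutes the $D_L$ block by left multiplication, so the fibre over $D_L(h)$ goes to the fibre over $D_L(g_a^{\pm1}h)$, not $D_L(hg_a)$; your conclusion that the cover equivalence lies over the base equivalence is unaffected.
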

\begin{proof}
    By Lemma \ref{lemma:equivalence1} applied to the group $\tilde{G}$, multiplying the polynomials by any group elements $\tilde{g}_a$ and $\tilde{g}_b$ yields the equivalent code $Q_{\tilde{G}}(  \ah_{\text{cover}}*\tilde{g}_a, \tilde{g}_b * \bh_{\text{cover}} )$. 
    To determine the base code covered by this new equivalent code, we use the assumptions in Theorem \ref{theorem:coverCodes}:
    \begin{align*}
        \phi(\pi( \ah_{\text{cover}}*\tilde{g}_a )) &= \phi(\pi(\ah_{\text{cover}}))\cdot \phi(\pi(\tilde{g}_a)) =  \ah_{\mathrm{base}}*g_a , \\
        \phi(\pi(\tilde{g}_b*\bh_{\text{cover}})) &= \phi(\pi(\tilde{g}_b))\cdot \phi(\pi(\bh_{\text{cover}}))  = g_b * \bh_{\mathrm{base}} .
    \end{align*}
Now Lemma \ref{lemma:equivalence1} used for the base group $G$ implies that the new base code $Q_G(\ah_{\mathrm{base}}*g_a , g_b*\bh_{\mathrm{base}})$ is equivalent to the original base code $Q_G(\ah_{\mathrm{base}},\bh_{\mathrm{base}})$.
\end{proof}

To minimize the computational search space for cover codes, we can directly apply Proposition \ref{prop:GeneralCoverEquivalence} to assume, without loss of generality, that the identity element belongs to the support of both lifted elements $\ah_{\text{cover}}\in \mathbb{F}_2[\tilde{G}]$ and $\bh_{\text{cover}}\in \mathbb{F}_2[\tilde{G}]$. If a valid cover code does not contain the identity element, we can simply choose any elements $\tilde{x} \in \operatorname{supp}(\ah_{\text{cover}})$ and $\tilde{y} \in \operatorname{supp}(\bh_{\text{cover}})$ and multiply the group algebra elements by the inverses, $\tilde{x}^{-1}$ and $\tilde{y}^{-1}$. This multiplication guarantees that the identity of the cover group is in the support of the new equivalent cover code, which covers an equivalent base code containing the identity of the base group. Thus, fixing the identity element during the search does not omit any unique cover codes up to isomorphism.

Consequently, by fixing one element in both lifted group algebra elements, the search space for an $h$-fold cover of a base code $Q_G(\ah,\bh)$ is reduced to exactly $h^{w_{\ah}+w_{\bh}-2}$ possible codes, once the cover group, the normal subgroup, and the isomorphism $\tilde{G}/H \cong G$ are fixed.  As observed in \cite{symons2025sequencesbivariatebicyclecodes}, this represents a significant reduction from set of all possible normalized codes over $\tilde{G}$ with the same weights, whose size is $\binom{|\tilde{G}|-1}{w_{\ah}-1}\binom{|\tilde{G}|-1}{w_{\bh}-1}$. 

 Using the strategy explained above, we performed a search for the cover codes of certain previously known 2BGA base codes over various groups, including non-abelian ones. We note that our choice of base codes serves to illustrate the construction rather than exhaust most possibilities. Table \ref{tab:base_and_cover_codes} presents these results, highlighting several cover codes with notable parameters. For example, the $[[204,22,\leq 17]]$ code is a double cover of the $[[102,22,9]]$ GB code mentioned in \cite{tripier2026faulttolerantquantumcomputingtrapped}, while the $[[306,22,\leq 24]]$ code is its triple cover. This illustrates the computational advantage of the cover-graph-based method: while an unconstrained search for the $[[204,22,\leq 17]]$ code would require enumerating %1,887,054,437,401 
$1.8\times 10^{12}$ code instances, the search space for the double cover is reduced to only 64 combinations (calculated as $h^{w_{\ah}+w_{\bh}-2} = 2^{4+4-2} = 64$) once the cover group and the lift subgroup are fixed. 

We further note that our group-theoretic approach provides a foundation for analytically studying the lifted logical operators, offering a pathway to explicitly relate the parameters of the base code and cover codes. 

\begin{table*}[htpb]
    \centering
    \caption{Base codes and their cover codes. The sequences ($\ah_{\text{cover}}$, $\bh_{\text{cover}}$, $\ah_{base}$, $\bh_{base}$) are given as integer indices corresponding to the ordered list generated by the GAP 4.14.0 function \texttt{Elements(G)} for their respective groups. The Base GAP Identifier is given as $(l, m)$ via the command \texttt{SmallGroup(l, m)}. The Cover GAP Identifier is given as a tuple $(l, m, s)$, where $l$ and $m$ define the cover group $\tilde{G}$, and $s$ represents the index of the specific normal lifted subgroup $H$ chosen from the ordered list produced by the command \texttt{Filtered(AllSubgroups(G), h -> IsNormal(G,h))[s]}. Distances reported with a $\leq$ symbol represent upper bounds obtained using \texttt{QDistRnd} with $10^6$ iterations. All other distances are exact.}
    \label{tab:base_and_cover_codes}
    \resizebox{\textwidth}{!}{%
    \begin{tabular}{clccc clcccc}
        \toprule
        \multicolumn{5}{c}{Base Code} & \multicolumn{6}{c}{Cover Code} \\
        \cmidrule(lr){1-5} \cmidrule(lr){6-11}
        $[[n, k, d]]$ & Structure & $\ah_{base}$ & $\bh_{base}$ & GAP Id. & $[n, k, d]$ & Structure & Lift $H$ & $\ah_{cover}$ & $\bh_{cover}$ & GAP Id. \\
         & & & & $(l, m)$ & & & & & & $(l, m, s)$ \\
        \midrule
        % First Base Code block spanning 2 rows
        \multirow{4}{*}{$[[40, 8, 5]]$} & \multirow{4}{*}{$C_5 \rtimes C_4$} & \multirow{4}{*}{[1, 2, 4, 10]} & \multirow{4}{*}{[1, 3, 4, 11]} & \multirow{4}{*}{20, 1}
        & $[[80, 8, 10]]$ & $C_5 \rtimes C_8$ & $C_2$ & [1, 2, 5, 23] & [1, 9, 11, 25] & 40, 1, 2 \\
        & & & & & $[[80, 10, 8]]$ & $C_2 \times (C_5 \rtimes C_4)$ & $C_2$ & [1, 2, 10, 22] & [1, 4, 10, 25] & 40, 7, 3 \\
        & & & & & $[[120, 16, 10]]$ & $C_3 \times (C_5 \rtimes C_4)$ & $C_3$ & [1, 6, 20, 28] & [1, 10, 5, 43] & 60, 2, 3 \\
        & & & & & $[[160, 10, \leq16]]$ & $C_2 \times (C_5 \rtimes C_8)$ & $C_2\times C_2$ & [1, 2, 13, 39] & [1, 4, 16, 42] & 80, 9, 6 \\
        \midrule
        % Second Base Code block spanning 3 rows
        \multirow{5}{*}{$[[62, 12, 7]]$} & \multirow{5}{*}{$C_{31}$} & \multirow{5}{*}{[1, 2, 3, 7]} & \multirow{5}{*}{[1, 2, 13, 18]} & \multirow{5}{*}{31, 1}
        & $[[124, 12, 14]]$ & $C_{62}$ & $C_2$ & [1, 3, 5, 14] & [1, 4, 26, 36] & 62, 2, 2 \\
        & & & & & $[[124, 14, 12]]$ & $C_{62}$ & $C_2$ & [1, 4, 6, 13] & [1, 3, 26, 36] & 62, 2, 2\\
        & & & & & $[[186, 12, \leq 19]]$ & $C_{93}$ & $C_3$ & [1, 5, 8, 22] & [1, 7, 40, 55] & 93, 2, 2 \\
        & & & & & $[[248, 18, \leq19]]$ & $C_{124}$ & $C_4$ & [1, 10, 14, 23] & [1, 5, 49, 67] & 124, 2, 3 \\
        & & & & & $[[248, 16, \leq21]]$ & $C_{62}\times C_2$ & $C_2\times C_2$ & [1, 6, 13, 27] & [1, 6, 50, 68] & 124, 4, 5 \\
        \midrule
        % Third Base Code block spanning 3 rows
        \multirow{3}{*}{$[[56, 12, 4]]$} & \multirow{3}{*}{$C_{14}\times C_2$} & \multirow{3}{*}{[1, 4, 12]} & \multirow{3}{*}{[1, 6, 15]} & \multirow{3}{*}{28, 4}
        & $[[112, 12, 8]]$ & $C_{28}\times C_2$ & $C_2$ & [1, 12, 28] & [1, 7, 25] & 56, 8, 2 \\
        & & & & & $[[168, 16, 10]]$ & $C_{14}\times S_3$ & $C_3$ & [1, 12, 48] & [1, 7, 44] & 84, 13, 3\\
        & & & & & $[[280, 12, 16]]$ & $C_{14}\times D_{10}$ & $C_5$ & [1, 4, 38] & [1, 7, 52] & 140, 9, 3\\
        \midrule
        % Fourth Base Code block spanning 3 rows
        \multirow{2}{*}{$[[48, 6, 6]]$} & \multirow{2}{*}{$C_{3}\rtimes C_8$} & \multirow{2}{*}{[1, 2, 3, 14]} & \multirow{2}{*}{[1, 2, 14, 17]} & \multirow{2}{*}{24, 1}
        & $[[96, 12, 10]]$ & $C_{2}\times(C_3 \rtimes C_8)$ & $C_2$ & [1, 2, 11, 22] & [1, 2, 33, 29] & 48, 9, 3 \\
        & & & & & $[[192, 14, \leq 16]]$ & $C_{3}\rtimes ((C_8\times C_2)\rtimes C_2)$ & $C_2\times C_2$ & [1, 2, 36, 30] & [1, 2, 57, 64] & 96, 37, 9\\
        \midrule
        % Fifth Base Code block spanning 3 rows
        \multirow{3}{*}{$[[36, 4, 6]]$} & \multirow{3}{*}{$(C_{3}\times C_3)\rtimes C_2$} & \multirow{3}{*}{[1, 2, 3, 6]} & \multirow{3}{*}{[1, 2, 8, 15]} & \multirow{3}{*}{18, 4}
        & $[[72, 8, 9]]$ & $(C_{3}\times C_3)\rtimes C_4$ & $C_2$ & [1, 2, 4, 8] & [1, 2, 12, 27] & 36, 7, 2 \\
        & & & & & $[[108, 12, 9]]$ & $C_{3}\times ((C_3\times C_3)\rtimes C_2)$ & $C_3$ & [1, 6, 4, 17] & [1, 15, 13, 50] & 54, 13, 6\\
        & & & & & $[[144, 10, 14]]$ & $(C_3\times C_3)\rtimes C_8$ & $C_4$ & [1, 2, 14, 23] & [1, 8, 33, 45] & 72, 13, 7\\
        \midrule
        % Sixth Base Code block spanning 3 rows
        \multirow{3}{*}{$[[70, 16, 6]]$} & \multirow{3}{*}{$C_{35}$} & \multirow{3}{*}{[1, 3, 9, 19]} & \multirow{3}{*}{[1, 3, 13, 23]} & \multirow{3}{*}{35, 1}
        & $[[140, 16, 12]]$ & $C_{70}$ & $C_2$ & [1, 4, 15, 39] & [1, 6, 23, 43] & 70, 4, 2 \\
        & & & & & $[[210, 20, \leq16]]$ & $C_{105}$ & $C_3$ & [1, 7, 18, 54] & [1, 12, 29, 77] & 105, 2, 2\\
        & & & & & $[[280, 22, \leq18]]$ & $C_{70}\times C_2$ & $C_2\times C_2$ & [1, 8, 31, 74] & [1, 10, 63, 93] & 140, 11, 5\\
        \midrule
        % Seventh Base Code block spanning 3 rows
        \multirow{4}{*}{$[[84, 16, 8]]$} & \multirow{4}{*}{$C_{7}\times S_3$} & \multirow{4}{*}{[1, 2, 3, 29]} & \multirow{4}{*}{[1, 4, 15, 27]} & \multirow{4}{*}{42, 3}
        & $[[168, 20, 14]]$ & $C_{7}\times (C_3\rtimes C_4)$ & $C_2$ & [1, 2, 3, 51] & [1, 5, 36, 60] & 84, 3, 2 \\
        & & & & & $[[252, 16, \leq 21]]$ & $C_{21}\times S_3$ & $C_3$ & [1, 6, 21, 71] & [1, 11, 57, 78] & 126, 12, 3\\
        & & & & & $[[252, 20, \leq18]]$ & $C_{21}\times S_3$ & $C_3$ & [1, 15, 21, 71] & [1, 22, 57, 63] & 126, 12, 3\\
        & & & & & $[[252, 24, \leq17]]$ & $C_7\times((C_3\times C_3)\rtimes C_2)$ & $C_3$ & [1, 2, 11, 65] & [1, 4, 43, 98] & 126, 14, 2\\
        \midrule
        % Eight Base Code block spanning 3 rows
        \multirow{5}{*}{$[[42, 8, 6]]$} & \multirow{5}{*}{$C_{21}$} & \multirow{5}{*}{[1, 3, 6, 14]} & \multirow{5}{*}{[1, 3, 8, 12]} & \multirow{5}{*}{21, 2}
        & $[[84, 8, 12]]$ & $C_{42}$ & $C_2$ & [1, 6, 12, 29] & [1, 6, 14, 21] & 42, 6, 2 \\
        & & & & & $[[126, 8, 15]]$ & $C_{21}\times C_3$ & $C_3$ & [1, 4, 15, 35] & [1, 12, 23, 33] & 63, 4, 2\\
        & & & & & $[[126, 20, 9]]$ & $C_{63}$ & $C_3$ & [1, 18, 8, 31] & [1, 3, 23, 45] & 63, 2, 2\\
        & & & & & $[[168, 10, \leq18]]$ & $C_{84}$ & $C_4$ & [1, 4, 24, 45] & [1, 7, 27, 48] & 84, 6, 4\\
        & & & & & $[[168, 12, \leq17]]$ & $C_{84}$ & $C_4$ & [1, 7, 24, 64] & [1, 13, 21, 48] & 84, 6, 4\\
        & & & & & $[[168, 14, \leq16]]$ & $C_{42}\times C_2$ & $C_2\times C_2$ & [1, 10, 27, 56] & [1, 15, 23, 42] & 84, 15, 6\\
        \midrule
        % Ninth Base Code block spanning 3 rows
        \multirow{3}{*}{$[[62, 10, 7]]$} & \multirow{3}{*}{$C_{31}$} & \multirow{3}{*}{[1, 2, 13]} & \multirow{3}{*}{[1, 2, 3, 28]} & \multirow{3}{*}{31, 1}
        & $[[124, 10, 12]]$ & $C_{62}$ & $C_2$ & [1, 3, 26] & [1, 4, 5, 56] & 62, 2, 2 \\
        & & & & & $[[186, 10, \leq17]]$ & $C_{93}$ & $C_3$ & [1, 5, 38] & [1, 3, 10, 83] & 93, 2, 2\\
        & & & & & $[[248, 10, \leq 21]]$ & $C_{62}\times C_2$ & $C_2\times C_2$ & [1, 6, 51] & [1, 9, 10, 113] & 124, 4, 5\\
        \midrule
        % Tenth Base Code block spanning 3 rows
        \multirow{3}{*}{$[[42, 10, 5]]$} & \multirow{3}{*}{$C_{21}$} & \multirow{3}{*}{[1, 5, 13]} & \multirow{3}{*}{[1, 3, 10, 19]} & \multirow{3}{*}{21, 2}
        & $[[84, 10, 9]]$ & $C_{42}$ & $C_2$ & [1, 8, 28] & [1, 6, 22, 37] & 42, 6, 2 \\
        & & & & & $[[126, 10, 13]]$ & $C_{63}$ & $C_3$ & [1, 6, 38] & [1, 3, 29, 46] & 63, 2, 2\\
        & & & & & $[[210, 10, \leq 19]]$ & $C_{105}$ & $C_5$ & [1, 14, 76] & [1, 41, 34, 102] & 105, 2, 3\\
        \midrule
        % Eleventh Base Code block spanning 3 rows
        \multirow{5}{*}{$[[102, 22, 9]]$} & \multirow{5}{*}{$C_{51}$} & \multirow{5}{*}{[1, 14, 45, 35]} & \multirow{5}{*}{[1, 27, 32, 50]} & \multirow{5}{*}{51, 1}
        & $[[204, 22, \leq 17]]$ & $C_{102}$ & $C_2$ & [1, 26, 87, 71] & [1, 51, 65, 100] & 102, 4, 2 \\
        & & & & & $[[204, 24, 15]]$ & $C_{102}$ & $C_2$ & [1, 26, 87, 68] & [1, 54, 62, 100] & 102, 4, 2\\
        & & & & & $[[204, 28, 12]]$ & $C_{102}$ & $C_2$ & [1, 26, 87, 68] & [1, 51, 65, 100] & 102, 4, 2\\
        & & & & & $[[306, 22, \leq 24]]$ & $C_{153}$ & $C_3$ & [1, 31, 134, 104] & [1, 80, 105, 139] & 153, 1, 2\\
        & & & & & $[[306, 26, \leq18]]$ & $C_{51}\times C_{3}$ & $C_3$ & [1, 35, 138, 110] & [1, 84, 89, 151] & 153, 2, 2\\
        \midrule
        % Twelve Base Code block spanning 3 rows
        \multirow{2}{*}{$[[126,20,11]]$} & \multirow{2}{*}{$C_{63}$} & \multirow{2}{*}{[1, 6, 26, 60]} & \multirow{2}{*}{[1, 22, 53, 35]} & \multirow{2}{*}{63, 2}
        & $[[252, 20, \leq 21]]$ & $C_{126}$ & $C_2$ & [1, 10, 53, 122] & [1, 40, 98, 71] & 126, 6, 2 \\
        & & & & & $[[252, 22, \leq18]]$ & $C_{126}$ & $C_2$ & [1, 16, 44, 122] & [1, 40, 106, 71] & 126, 6, 2\\
        \bottomrule
    \end{tabular}%
    }
\end{table*}

\section{Conclusion and Future Directions}

In this paper, we introduced a family of quantum LDPC codes constructed using group actions on the cosets of a finite group. Our construction generalizes the previously known 2BGA family of quantum LDPC codes, which includes BB codes and GB codes. Through targeted computer searches, we identified new weight-6 and weight-8 quantum LDPC codes whose parameters are on par with the state of the art. We developed a maximally packed syndrome extraction schedule that utilizes a circuit depth of $w+2$ for a code with stabilizer weight at most $w$. We evaluated the circuit-level performance of our codes using various decoders and demonstrated that they achieve logical error rates comparable to those of previously known, high-performing quantum LDPC codes under the standard noise models. Furthermore, we presented a framework to generate sequences of 2BGA codes from covering graphs.

Our work opens several new directions for future research:
\begin{itemize}
    \item \textbf{Hardware-specific implementation:} While our syndrome extraction schedule is maximally packed and optimizes time-domain depth, further investigation into the physical layout and qubit connectivity requirements for specific quantum hardware architectures is a natural next step. For instance, exploring coset spaces that admit a bi-planar toric layout (similar to BB codes) but whose long-range connections extend beyond a single cyclic shift is an interesting new direction for superconducting platforms. Similarly, identifying coset spaces that admit efficient scheduling tailored for neutral-atom or trapped-ion architectures represents another promising direction.
    \item \textbf{Exploring logical operators:} While our work focuses on the construction and syndrome extraction of these codes, a detailed investigation of their logical operators remains an open problem. Explicitly constructing logical bases and finding minimum-weight logical operators is a natural next step toward practical implementation.
    \item \textbf{Distance bounds on graph cover codes:} Another interesting direction is to study bounds on the distances of cover codes, given the base code parameters, by restricting the base and cover groups from Section~\ref{sec:CoverGraphs} to specific, highly structured families like cyclic groups. While this problem was partially addressed in \cite{symons2025sequencesbivariatebicyclecodes} for BB codes, we believe that our generalized group-theoretic framework could be leveraged to extend these results.
\end{itemize}

\section{Data Availability}
The scripts and source code to reproduce the data generated in this paper are publicly available at \cite{aydin_coset2bga_2026}.

\begin{acknowledgements}

We are grateful to Nicolas Delfosse, Leonid Pryadko, and  Min Ye for stimulating discussions. The research of A.A. and A.B. was partially supported by NSF grants CCF2330909 and CCF2526035.  We also acknowledge the University of Maryland supercomputing resources (\href{http://hpcc.umd.edu}{http://hpcc.umd.edu}) made available for conducting the research reported in this paper.
\end{acknowledgements}

\FloatBarrier
\bibliography{TwoBGA}

\appendix
\section{Additional Proofs \label{appendix:additionalproofs}}

\subsection{Proof of Proposition \ref{prop:LRhomomorpshim}}
 First, let us prove that $L(g)$ permutes the cosets in $G/H$ for all $g\in G$. Let $g\in G$. The following chain of relationships
    \begin{align*}
            xH=yH&\iff x^{-1}y\in H \\ &\iff x^{-1}y=x^{-1}g^{-1}gy =(gx)^{-1}(gy)\in H \\ &\iff (gx)H=(gy)H \\&\iff L(g)(xH)=L(g)(yH),
    \end{align*}
shows that $L(g)$ is well defined by reading it from left to right, and it is an injection by reading it from right to left. Hence, $L(g)$ is a well-defined left action of $g\in G$ on $G/H$.  It remains to show that $L$ commutes with the group product. For $g_1,g_2\in G$, we have
\begin{align*}
    L(g_1)\circ L(g_2)(xH) &= L(g_1)((g_2x)H)\\
    &=(g_1(g_2x))H\\
    &=((g_1g_2)x)H\\
    &=L(g_1g_2)(xH),
\end{align*}
as required.

We repeat a similar sequence of steps for $R$. Let $g\in N_G(H)$, then
\begin{align*}
    xH=yH &\iff x^{-1}y\in H\\
    &\iff g^{-1}x^{-1}yg \in g^{-1}Hg=H\\
    &\iff (xg)^{-1}(yg)\in H\\
    &\iff (xg)H = (yg)H\\
    &\iff R(g)(xH)=R(g)(yH),
\end{align*}
where we used $g^{-1}Hg=H$ since $g\in N_G(H)$. This proves that $R(g)$ is a well-defined right action on left cosets of $H$ in $G$. Furthermore, taking $g_1,g_2\in N_G(H)$,
\begin{align*}
    R(g_1)\circ R(g_2) (xH) &= R(g_1)((xg_2)H)\\
    &= ((xg_2)g_1)H\\
    &=(x(g_2g_1))H\\
    &=R(g_2g_1)(xH),
\end{align*}
proving that $R(g_1)\circ R(g_2) = R(g_2g_1)$.

\subsection{Proof of Proposition \ref{prop:commutativity}}
Let $xH$ be any left coset of $H$ in $G$. Then
\begin{align*}
    L(g_1)\circ R(g_2)(xH) &=L(g_1)((xg_2)H)\\
    &= (g_1(xg_2))H\\
    &= ((g_1x)g_2)H\\
    &=R(g_2)((g_1x)H)\\
    &= R(g_2)\circ L(g_1)(xH)
\end{align*}
\subsection{Proof of Proposition \ref{prop:ImageLandR}}
\cite[Thm.4.2.3]{Dummit2004}

Once we show that $\ker L=\operatorname{Core}_G(H)$, the claim
$\operatorname{Im} L\cong{G}/{\ker L}$ will follow by the first isomorphism theorem. We have, for all $x\in G$, 
\begin{align*}
   g\in \ker L &\iff L(g)(xH) = xH\\
   &\iff (gx)H=xH\\
   & \iff x^{-1}(gx) \in H\\
   & \iff g \in xHx^{-1}\\
   & \iff g \in \bigcap_{x\in G} xHx^{-1}\\
   &\iff g \in \operatorname{Core}_G(H).
\end{align*}
Similarly, $\ker R=H$ since
\begin{align*}
    g\in \ker R  &\iff R(g)(xH)=xH\\
    &\iff (xg)H = xH\\
    & \iff x^{-1}(xg) \in H\\
    &\iff g \in H.
\end{align*}

\section{Description of the Wheel Graph in Lemma \ref{lemma:thickness}}\label{appendix:wheel graph}

Let $\mathcal{T}_i$ be the bipartite subgraph defined in Lemma \ref{lemma:thickness} by the parity-check matrices
\begin{align*}
        &H_X^{(i)}=[\bfL(g_r)+\bfL(g_s)\mid \bfR(g_t)],\\
        &H_Z^{(i)}=[\bfR(g_t^{-1})\mid \bfL(g_r^{-1})+\bfL(g_s^{-1})]
\end{align*} 
For a code of even length $n$, let $l=n/2$. We assume that the assignments of the data qubits $D_L[i], D_R[i]$ and the ancilla qubits $X[i], Z[i]$ for $i=1,2,\ldots,l$ are those described in Section \ref{sec:SC_Circuit}. Let $M$ be a permutation matrix, and define the mapping $M(i) = j$ if $M_{i,j} = 1$. Note that for any two permutation matrices $M$ and $P$ of the same size, if $M$ and $P$ commute, then $M(P(i)) = P(M(i))$. For better readability, we write $MP(i)$ for $M(P(i))$. Using this notation and the fact that all $\bfL$ matrices commute with all $\bfR$ matrices, it follows that every check qubit $X[i]$ belongs to the connected planar component shown in Figure~\ref{fig:wheelgraph}. Following \cite{Bravyi2024}, we call such graphs 'wheel graphs.'

\begin{figure*}
    \centering
    \caption{The description of the wheel graph in Lemma \ref{lemma:thickness}.}
    \label{fig:wheelgraph}
\begin{tikzpicture}[
    redcircle/.style={draw=red, thick, fill=red!50, circle, minimum size=12pt, inner sep=0pt},
    bluecircle/.style={draw=blue, thick, fill=blue!50, circle, minimum size=12pt, inner sep=0pt},
    greensquare/.style={draw=green!80!black, thick, fill=green!50, rectangle, minimum size=12pt, inner sep=0pt},
    purplesquare/.style={draw=blue!60!red, thick, fill=blue!60!red!50, rectangle, minimum size=12pt, inner sep=0pt}
    ]

    % 1. Draw the two horizontal side rails (Width from x=0 to x=12)
    % The top rail is now at y=4 to make the height 4 units
    \draw[thick] (0,4) -- (12,4); 
    \draw[thick] (0,0) -- (12,0); 

% 2. Draw the wrapping dashed lines (Periodic Boundaries)
    % Top line: Extends to x=13.5, arcs UP over the text, lands at x=-1.5, enters x=0
    \draw[thick, dashed, gray] (12,4) -- (15,4) to[bend right=25] (0,4) -- (0,4);
    
    % Bottom line: Extends to x=13.5, arcs DOWN under the text, lands at x=-1.5, enters x=0
    \draw[thick, dashed, gray] (12,0) -- (15,0) to[bend left=25] (0,0) -- (0,0);
    % ---------------------------------------------------------
    % --- Column 1 (Leftmost rung at x = 0) ---
    \draw[thick] (0,4) -- (0,0);
    \node[redcircle] at (0,4) {};
    \node at (0, 4.5) {$X[i]$}; 
    \node at (0.45, 2.0) {$\bfR(g_t)$};
    \node[greensquare] at (0,0) {};
    \node at (0, -0.5) {$D_R[\bfR(g_t)(i)]$}; 
    \node at (2.0, 3.70) {$\bfL(g_r)$};
    \node at (2.0, 0.30) {$\bfL(g_r^{-1})$};
    % ---------------------------------------------------------
    % --- Column 2 (Second rung at x = 4) ---
    \draw[thick] (4,4) -- (4,0);
    \node[purplesquare] at (4,4) {};
    \node at (4, 4.5) {$D_L[\bfL(g_r)(i)]$};  
    \node at (4.55, 2.0) {$\bfR(g_t^{-1})$};
    \node[bluecircle] at (4,0) {};
    \node at (4, -0.5) {$Z[\bfL(g_r)\bfR(g_t)(i)]$};
    \node at (6.0, 3.70) {$\bfL(g_s)$};
    \node at (6.0, 0.30) {$\bfL(g_s^{-1})$};
    % ---------------------------------------------------------
    % --- Column 3 (Third rung at x = 8) ---
    \draw[thick] (8,4) -- (8,0);
    \node[redcircle] at (8,4) {};
    \node at (8, 4.5) {$X[\bfL(g_s^{-1}g_r)(i)]$};
    \node at (8.45, 2.0) {$\bfR(g_t)$};
    \node[greensquare] at (8,0) {};
    \node at (8, -0.5) {$D_R[\bfL(g_s^{-1}g_r)\bfR(g_t)(i)]$};
    \node at (10.0, 3.70) {$\bfL(g_r)$};
    \node at (10.0, 0.30) {$\bfL(g_r^{-1})$};
    % ---------------------------------------------------------
    % --- Column 4 (Fourth rung at x = 12) ---
    \draw[thick] (12,4) -- (12,0);
    \node[purplesquare] at (12,4) {};
    \node at (12, 4.5) {$D_L[\bfL(g_rg_s^{-1}g_r)(i)]$};  
    \node at (12.55, 2.0) {$\bfR(g_t^{-1})$};
    \node[bluecircle] at (12,0) {};
    \node at (12, -0.5) {$Z[\bfL(g_rg_s^{-1}g_r)\bfR(g_t)(i)]$};

\end{tikzpicture}
\end{figure*}

\section{Proof of Proposition \ref{prop:SyndromeCircuit}}\label{appendix:SC_Circuit_Proof}
We first establish that each syndrome extraction cycle in 
Algorithm~\ref{alg:syndrome_extraction} requires exactly $w_{\ah} + w_{\bh} + 2$ 
time steps for any code $Q_G^H(\ah, \bh)$. To this end, observe that 
\texttt{CnotSchedule} produces a circuit consisting of 
$|\vec{g}_X| + |\vec{h}_X| + |\vec{f}_X|$ CNOT rounds, corresponding to 
its three sequential loops. 
The proof depends on the parity of $w_{\ph}$. We will start with the even case.  Then the state initializations at Line~\ref{line:InitWpEven} occupy a single time step. 
The call to \texttt{CnotSchedule} then requires $w_{\ph} + w_{\uh }$ CNOT rounds, 
since $|\vec{g}_X| + |\vec{h}_X| = w_{\ph}$ and $|\vec{f}| = w_{\uh }$ by 
construction. Finally, the ancilla measurements contribute one additional 
time step, yielding a total cycle depth of 
$w_{\ph} + w_{\uh } + 2 = w_{\ah} + w_{\bh} + 2 = w + 2$. Now suppose that $w_{\ph}$ is odd, then the state initializations on Line~\ref{line:InitWpOdd} again occupy a single time step. 
The subsequent call to \texttt{CnotSchedule} requires $w_{\ph} + w_{\uh } - 1$ 
CNOT rounds, since $|\vec{g}_X| + |\vec{h}_X| = w_{\ph} - 1$ and 
$|\vec{f}| = w_{\uh }$ by construction. As shown in 
Lines~\ref{line:MZWpOdd}--\ref{line:MXWpOdd}, the CNOT associated with the isolated element 
$e_x$ (resp.\ $e_z$) is executed concurrently with the initialization 
(resp.\ measurement) step, contributing two time step, yielding 
a total cycle depth of $(w_{\ph} +w_{\uh } - 1) + 3 = w_{\ah} + w_{\bh} + 2 = w + 2$. 

To prove correctness of the syndrome extraction cycle, we employ the 
stabilizer tableau formalism~\cite{PhysRevA.70.052328}, following the 
approach of Ref.~\cite{Bravyi2024}. Recall that the action of the CNOT 
gate on Pauli operators is given by
\begin{align*}
    X \otimes I &\rightarrow X \otimes X, & I \otimes X &\rightarrow I \otimes X, \\
    Z \otimes I &\rightarrow Z \otimes I, & I \otimes Z &\rightarrow Z \otimes Z.
\end{align*}
Since this action does not mix $X$-type and $Z$-type Pauli operators, 
the evolution of $X$-type and $Z$-type stabilizers can be tracked 
independently. Let us first examine the evolution of $X$-type stabilizers. Suppose that $Q_G^H(\ah, \bh)$ is the CSS code given by Construction~\ref{cons:Generalized2BGA} and let $l = [G:H] = n/2$, where $n$ is the code length. To introduce the stabilizer tableau, we define a $2l \times 4l$ binary matrix whose $j$-th column represents the qubit
\begin{align*}
    Q[j] = \begin{cases}
        X[j]        & \text{if } 1 \leq j \leq l, \\
        D_L[j-l]    & \text{if } l < j \leq 2l, \\
        D_R[j-2l]   & \text{if } 2l < j \leq 3l, \\
        Z[j-3l]     & \text{if } 3l < j \leq 4l.
    \end{cases}
\end{align*}
The first $l$ rows track the single-weight check operators acting on the 
ancilla qubits $X[i]$ for $i = 1, 2, \ldots, l$. Since these qubits are 
initialized to the $\ket{+}$ state, the first $l$ rows of the stabilizer 
tableau take the block form $(I,\, 0,\, 0,\, 0)$, where $I := I_l$ denotes 
the $l \times l$ identity matrix. The remaining $l$ rows track the 
$X$-type stabilizer generators of the CSS code $Q_G^H(\ah, \bh)$. Therefore, at the beginning of each syndrome extraction cycle, the 
stabilizer tableau takes the following block form:
\begin{align*}
    \begin{bmatrix}
        I & 0 & \bf{0} & 0\\
        0 & \bfL(\ah) & \bfR(\bh) & 0
    \end{bmatrix}.
\end{align*}
A valid syndrome extraction cycle must leave the $X$-type stabilizer 
generators of the code invariant, while updating the ancilla stabilizers 
to reflect the $X$-type check operators, so that measuring the ancilla 
qubits in the $X$-basis yields the syndrome. In terms of the stabilizer 
tableau, this requires the bottom $l$ rows to remain unchanged, while the 
top $l$ rows evolve from $(I,\, 0,\, 0,\, 0)$ to $(I,\, \bfL(\ah),\, \bfR(\bh),\, 0)$.

To systematically track the algebraic propagation of operators throughout the 
circuit, it is helpful to formalize how simultaneously applied CNOT gates transform these binary blocks. Let $\bfC$ and $\bfT$ be $l \times l$ binary matrices over $\mathbb{F}_2$, where each row denotes an $X$-type Pauli operator on $l$ qubits, mapping $0$ to the identity operator $I$ and $1$ to the Pauli $X$ operator. Let 
$\bfM$ be an $l \times l$ permutation matrix that defines CNOT gates applied within a single time step. Specifically,  if $M_{i,j} = 1$, a CNOT gate is executed with the $i$-th qubit of the block ($\bfC$) as the control and the $j$-th qubit of the block ($\bfT$) as the target. Recalling that the CNOT gate acts on $X$-type operators as $X \otimes I \rightarrow X \otimes X$ and $I \otimes X \rightarrow I \otimes X$, the collective action of these parallel gates updates the stabilizer blocks according to the matrix transformation:
\begin{align*}
    (\bfC, \bfT) \xrightarrow{\bfM} (\bfC, \bfT + \bfC\bfM).
\end{align*}

Recall that Algorithm~\ref{alg:syndrome_extraction} admits two possible 
dynamic assignments. If $\ph = \ah$, then 
$(D_p, D_u, \bfM_{\ph}, \bfM_{\uh }) \leftarrow (D_L, D_R, \bfL, \bfR)$, and if $\ph = \bh$, then $(D_p, D_u, \bfM_{\ph}, \bfM_{\uh }) \leftarrow (D_R, D_L, \bfR, \bfL)$. In either case, the $X$-type check operators act on the data qubits in block $D_p$ via the permutation matrices $\bfM_{\ph}(\ph)$, and act on the data qubits in block $D_u$ via the permutation matrices $\bfM_{\uh }(\uh )$. Similarly, the $Z$-type check operators act on block $D_p$ via $\bfM_{\uh }(\uh )^T$ and on block $D_u$ via $\bfM_{\ph}(\ph)^T$. Therefore, it suffices to show that Algorithm~\ref{alg:syndrome_extraction} 
yields the tableau transformation
\begin{align*}
    \begin{bmatrix}
        I & 0 & 0 & 0\\
        0 & \bfM_{\ph}(\ph) & \bfM_{\uh }(\uh ) & 0
    \end{bmatrix} 
    \longrightarrow  
    \begin{bmatrix}
        I & \bfM_{\ph}(\ph) & \bfM_{\uh }(\uh ) & 0\\
        0 & \bfM_{\ph}(\ph) & \bfM_{\uh }(\uh ) & 0
    \end{bmatrix}.
\end{align*}

In the calculations that follow, for any sequence $\vec{v}$ of elements 
from a group $G_v$, we define the corresponding group algebra element 
over $\mathbb{F}_2(G_v)$ as ${\mathfrak {v}} = \sum_{g \in \vec{v}} g$. 
Furthermore, we omit the index variable $i$ when an operation is 
performed simultaneously for all $i = 1, \ldots, N$. We first establish how the sequence of CNOT operations described in 
Algorithm~\ref{alg:schedule_cnot} transforms the stabilizer tableau. 
To determine the transformation matrix $T$ enacted by the 
\textsc{CnotSchedule} subroutine on an arbitrary input basis 
$[X,\, D_p,\, D_u,\, Z]$, we track the evolution of each operator 
starting from the identity:
\begin{align*}
    \begin{bmatrix}
        I & 0 & 0 & 0\\
        0 & I & 0 & 0\\
        0 & 0 & I & 0\\
        0 & 0 & 0 & I
    \end{bmatrix}.
\end{align*}

\begin{widetext}
\noindent After the loop at Lines~\ref{line:1_to_gX_start}--\ref{line:1_to_gX_end} (for $k = 1$ to $|\vec{g}_X|$):
\begin{align*}
    \xrightarrow{\mathrm{CNOT}(X,\, D_p[\bfM_{\ph}(g_{X,k})])}
    \begin{bmatrix}
        I & \bfM_{\ph}({\gh}_X) & 0 & 0\\
        0 & I & 0 & 0\\
        0 & 0 & I & 0\\
        0 & 0 & 0 & I
    \end{bmatrix}
    \xrightarrow{\mathrm{CNOT}(D_u,\, Z[\bfM_{\ph}(g_{Z,k})])}
    \begin{bmatrix}
        I & \bfM_{\ph}({\gh}_X) & 0 & 0\\
        0 & I & 0 & 0\\
        0 & 0 & I & \bfM_{\ph}({\gh}_Z)\\
        0 & 0 & 0 & I
    \end{bmatrix}.
\end{align*}

\noindent After the updated loop at Lines~\ref{line:1_to_fX_start}--\ref{line:1_to_fX_end} (for $k = 1$ to $|\vec{f}_X|$):
\begin{align*}
    &\xrightarrow{\mathrm{CNOT}(X,\, D_u[\bfM_{\uh }(f_{X,k})])}
    \begin{bmatrix}
        I & \bfM_{\ph}({\gh}_X) & \bfM_{\uh }({\fh}_X) & 0\\
        0 & I & 0 & 0\\
        0 & 0 & I & \bfM_{\ph}({\gh}_Z)\\
        0 & 0 & 0 & I
    \end{bmatrix} \notag \\ 
    &\xrightarrow{\mathrm{CNOT}(D_p,\, Z[\bfM_{\uh }(f_{Z,k})])}
    \begin{bmatrix}
        I & \bfM_{\ph}({\gh}_X) & \bfM_{\uh }({\fh}_X) & \bfM_{\ph}({\gh}_X)\bfM_{\uh }({\fh}_Z)\\
        0 & I & 0 & \bfM_{\uh }({\fh}_Z)\\
        0 & 0 & I & \bfM_{\ph}({\gh}_Z)\\
        0 & 0 & 0 & I
    \end{bmatrix}.
\end{align*}

\noindent After the loop at Lines~\ref{line:1_to_hX_start}--\ref{line:1_to_hX_end} (for $k = 1$ to $|\vec{h}_X|$):
\begin{align*}
    &\xrightarrow{\mathrm{CNOT}(X,\, D_p[\bfM_{\ph}(h_{X,k})])}
    \begin{bmatrix}
        I & \bfM_{\ph}({\gh}_X+{\hh}_X) & \bfM_{\uh }({\fh}_X) & \bfM_{\ph}({\gh}_X)\bfM_{\uh }({\fh}_Z)\\
        0 & I & 0 & \bfM_{\uh }({\fh}_Z)\\
        0 & 0 & I & \bfM_{\ph}({\gh}_Z)\\
        0 & 0 & 0 & I
    \end{bmatrix} \notag \\
    &\xrightarrow{\mathrm{CNOT}(D_u,\, Z[\bfM_{\ph}(h_{Z,k})])}
    \begin{bmatrix}
        I & \bfM_{\ph}({\gh}_X+{\hh}_X) & \bfM_{\uh }({\fh}_X) & \bfM_{\ph}({\gh}_X)\bfM_{\uh }({\fh}_Z) + \bfM_{\uh }({\fh}_X)\bfM_{\ph}({\hh}_Z)\\
        0 & I & 0 & \bfM_{\uh }({\fh}_Z)\\
        0 & 0 & I & \bfM_{\ph}({\gh}_Z+{\hh}_Z)\\
        0 & 0 & 0 & I
    \end{bmatrix}.
\end{align*}

Since $\vec{f}_X$ and $\vec{f}_Z$ are both sequence orderings of the identical support set (${\fh}_X={\fh}_Z$), we can simplify the top-right matrix entry as $\bfM_{\uh }({\fh}_X)\bfM_{\ph}({\gh}_X + {\hh}_Z)$. Thus, the transformation matrix $T$ corresponding to \texttt{CnotSchedule}$(D_p,\, \bfM_{\ph},\, \vec{g}_X,\, \vec{h}_X,\, \vec{g}_Z,\, \vec{h}_Z,\, \vec{f}_X,\, \vec{f}_Z)$ reduces to the block form:
\begin{align*}
    T = \begin{bmatrix}
        I & \bfM_{\ph}({\gh}_X+{\hh}_X) & \bfM_{\uh }({\fh}_X) & 
        \bfM_{\uh }({\fh}_X)\bfM_{\ph}({\gh}_X+{\hh}_Z)\\
        0 & I & 0 & \bfM_{\uh }({\fh})\\
        0 & 0 & I & \bfM_{\ph}({\gh}_Z + {\hh}_Z)\\
        0 & 0 & 0 & I
    \end{bmatrix}.
\end{align*}
\end{widetext}
We now analyze Algorithm~\ref{alg:syndrome_extraction} in each of the 
two cases.

\medskip
\noindent\textbf{Case 1: $w_{\ph}$ odd.}
The CNOT operation at Line~\ref{line:InitWpOdd} acts on the initial tableau as follows: If $(S,U)=(Z,X)$, then:
\begin{align*}
    \xrightarrow{\mathrm{CNOT}(D_u,\, Z[\bfM_{\ph}(g_{z_0})])}
    \begin{bmatrix}
        I & 0 & 0 & 0\\
        0 & \bfM_{\ph}(\ph) & \bfM_{\uh }(\uh ) & 
        \bfM_{\uh }(\uh )\bfM_{\ph}(g_{z_0})
    \end{bmatrix}.
\end{align*}
Applying the transformation $T$ then yields:
\begin{align*}
    \xrightarrow{T}
    \begin{bmatrix}
        I & \bfM_{\ph}({\gh}_X+{\hh}_X) & \bfM_{\uh }(\uh ) & 
        \bfM_{\uh }(\uh )\bfM_{\ph}(2{\gh}_X)\\
        0 & \bfM_{\ph}(\ph) & \bfM_{\uh }(\uh ) & 
        2\bfM_{\ph}(\ph)\bfM_{\uh }(\uh )
    \end{bmatrix},
\end{align*}
where we have used $\ph = {\gh}_Z + {\hh}_Z + g_{z_0}$, 
$\uh  = {\fh}$, and ${\gh}_X = {\hh}_Z$. 
Finally, the CNOT operation at Line~\ref{line:MZWpOdd} gives:
\begin{align*}
    \xrightarrow{\mathrm{CNOT}(X,\, D_p[\bfM_{\ph}(h_{x_0})])}
    \begin{bmatrix}
        I & \bfM_{\ph}(\ph) & \bfM_{\uh }(\uh ) & 0\\
        0 & \bfM_{\ph}(\ph) & \bfM_{\uh }(\uh ) & 0
    \end{bmatrix},
\end{align*}
where we have used $\ph = {\gh}_X + {\hh}_X + h_{x_0}$, together 
with $2\bfM_{\ph}(\ph)\bfM_{\uh }(\uh ) = 0$ and 
$\bfM_{\uh }(\uh )\bfM_{\ph}(2{\gh}_X) = 0$, since all 
computations are performed over the binary field $\mathbb{F}_2$.

\noindent If $(S,U)=(X,Z)$, then:
\begin{align*}
    \xrightarrow{ \text{CNOT}(X, D_p[\bfM_{\ph}(g_{x_0})])}
    \begin{bmatrix}
        I & \bfM_{\ph}(g_{x_0}) & 0 & 0\\
        0 & \bfM_{\ph}(\ph) & \bfM_{\uh }(\uh ) & 
        0
    \end{bmatrix}.
\end{align*}
Applying the transformation $T$ then yields:
\begin{align*}
    \xrightarrow{T}
    \begin{bmatrix}
        I & \bfM_{\ph}(\ph) & \bfM_{\uh }(\uh ) & 
       \bfM_{\uh }(\uh )\bfM_{\ph}({\gh}_X+{\hh}_Z+g_{x_0}) \\\
        0 & \bfM_{\ph}(\ph) & \bfM_{\uh }(\uh ) & 
        \bfM_{\uh }(\uh )\bfM_{\ph}(\ph+{\gh}_Z+{\hh}_Z)
    \end{bmatrix},
\end{align*}
where we have used $\ph = {\gh}_X + {\hh}_X + g_{x_0}$. Finally, the CNOT operation at Line~\ref{line:MZWpOdd} gives:
\begin{align*}
    \xrightarrow{\mathrm{CNOT}(D_u,\, Z[\bfM_{\ph}(h_{z_0})])}
    \begin{bmatrix}
        I & \bfM_{\ph}(\ph) & \bfM_{\uh }(\uh ) & 0\\
        0 & \bfM_{\ph}(\ph) & \bfM_{\uh }(\uh ) & 0
    \end{bmatrix},
\end{align*}
where we have used $\ph = {\gh}_Z + {\hh}_Z + h_{z_0}$ and ${\gh}_X+g_{x_0}={\hh}_Z+h_{z_0}$.

\medskip
\noindent\textbf{Case 2: $w_{\ph}$ even.}
In this case, $\ph = {\gh}_X + {\hh}_X = {\gh}_Z + {\hh}_Z$ 
and $\uh  = {\fh}$ by construction. Applying $T$ directly to the 
initial tableau yields:
\begin{align*}
    \xrightarrow{T}
    \begin{bmatrix}
        I & \bfM_{\ph}(\ph) & \bfM_{\uh }(\uh ) & 0\\
        0 & \bfM_{\ph}(\ph) & \bfM_{\uh }(\uh ) & 0
    \end{bmatrix}.
\end{align*}
In all the cases, the desired tableau transformation is achieved.

Now we will show that the action of our syndrome extraction circuit on logical qubits is trivial. Let us consider a logical operator containing $X$ Paulis with the form $(0,d_p,d_u,0)$. Let us track this logical operator through our chain of transformations. For the case where $w_{\ph}$ is even, applying the general transformation matrix $T$ yields a new vector:
\begin{align*}
    (0, d_p, d_u, 0) T &= (0, d_p, d_u, d_p\bfM_{\uh }({\fh}) + d_u\bfM_{\ph}({\gh}_Z + {\hh}_Z)).
\end{align*}
Note that the commutativity between $X$-type logicals and $Z$-type stabilizers implies:
\begin{align*}
    d_p\bfM_{\uh }(\uh ) + d_u\bfM_{\ph}(\ph) = 0.
\end{align*}
Combining this with the fact that $\uh  = {\fh}$ and $\ph = {\gh}_Z + {\hh}_Z$, we find:
\begin{align*}
    (0, d_p, d_u, 0) T &= (0, d_p, d_u, 0).
\end{align*}

For the case where $w_{\ph}$ is odd and $(S,U)=(Z,X)$, the initial $\text{CNOT}(D_u, Z[\bfM_{\ph}(g_{z_0})])$ transforms the initial vector to $(0, d_p, d_u, d_u\bfM_{\ph}(g_{z_0}))$. Applying the transformation matrix $T$ to this intermediate state yields:
\begin{align*}
    (0, d_p, d_u, d_u\bfM_{\ph}(g_{z_0})) T &= (0, d_p, d_u, t),
\end{align*}
where
\begin{align*}
    t &= d_u\bfM_{\ph}(g_{z_0}) + d_p\bfM_{\uh }({\fh}) + d_u\bfM_{\ph}({\gh}_Z + {\hh}_Z) \\
    &= d_p\bfM_{\uh }(\uh ) + d_u\bfM_{\ph}({\gh}_Z + {\hh}_Z + g_{z_0}) \\
    &= d_p\bfM_{\uh }(\uh ) + d_u\bfM_{\ph}(\ph) = 0.
\end{align*}
The final operation in the odd-weight schedule is $\text{CNOT}(X, D_p[\bfM_{\ph}(h_{x_0})])$. Because this operation relies on the $X$ ancilla as the control, and the $X$ register has remained identically $0$, this operation has a trivial action, and the state remains $(0, d_p, d_u, 0)$. One can conclude the same for the case $w_{\ph}$ is odd and $(S,U)=(X,Z)$ by following the same sequence of steps.

In all the cases, we have shown that the SC circuit maps the vector $(0, d_p, d_u, 0)$ to itself. Hence, the circuit acts trivially on $X$-type logical operators.

The correctness of the $Z$-type syndrome extraction and its trivial action on $Z$-type logical operators follows by symmetry. To verify this, one must simply update the propagation rule and the initial tableau to track $Z$-type Pauli operators. For $Z$-type stabilizers, the collective action of parallel CNOT gates defined by $\bfM$ on $Z$-type blocks transforms as 
\begin{align*}
    (\bfC, \bfT) \xrightarrow{\bfM} (\bfC + \bfT\bfM^T, \bfT)
\end{align*}
due to CNOT actions on $I\otimes Z \to Z\otimes Z$ and $Z\otimes I \to Z\otimes I$. Following a similar sequence of steps, one can show the required tableau transformation:
\begin{align*}
    \begin{bmatrix} 0 & 0 & 0 & I\\ 0 & \bfM_{\uh }(\uh )^T & \bfM_{\ph}(\ph)^T & 0 \end{bmatrix} \longrightarrow \begin{bmatrix} 0 & \bfM_{\uh }(\uh )^T & \bfM_{\ph}(\ph)^T & I\\ 0 & \bfM_{\uh }(\uh )^T & \bfM_{\ph}(\ph)^T & 0 \end{bmatrix},
\end{align*}
completing the proof of Proposition~\ref{prop:SyndromeCircuit}

\section{Counting Argument Regarding the Sequence Configurations}\label{appendix:Counting_Proof}
The elements of $\operatorname{supp}(\uh )$ are arranged into two independent sequences, $\vec{f}_X$ and $\vec{f}_Z$. Since there are no constraints on $\uh $, there are $w_{\uh }!$ valid orderings for each sequence, yielding  $(w_{\uh }!)^2$ configurations. The partitioning of $\operatorname{supp}(\ph)$ depend on the parity of $w_{\ph}$:

\textbf{Case 1: $w_{\ph}$ is even.}
The sequences $\vec{g}_X$ and $\vec{h}_X$ partition $\operatorname{supp}(\ph)$ into two equal-length ordered sequences. The total number of such ordered pairs equals $w_{\ph}!$. The sequences $\vec{g}_Z$ and $\vec{h}_Z$ must partition the same support such that $\vec{h}_Z$ contains the same elements as $\vec{g}_X$. Because these subsets are uniquely fixed by the $X$-partition, we only choose their internal orderings. There are $(w_{\ph}/2)!$ ways to order each, yielding $((w_{\ph}/2)!)^2$ configurations. Noting the identity $\binom{w_{\ph}}{w_{\ph}/2} = w_{\ph}! / ((w_{\ph}/2)!)^2$, the number of $Z$-sequence choices simplifies to $w_{\ph}! / \binom{w_{\ph}}{w_{\ph}/2}$. Multiplying the $X$ and $Z$ choices gives the total sequence pairs for $\ph$:
\begin{align*}
   \frac{(w_{\ph}!)^2}{\binom{w_{\ph}}{w_{\ph}/2}} 
\end{align*}
\textbf{Case 2: $w_{\ph}$ is odd.}
The algorithm first selects a basis $S \in \{X,Z\}$. The sequences $\vec{g}_S$ and $\vec{h}_S$ partition $\operatorname{supp}(\ph)$ such that $|\vec{g}_S| = \lfloor w_{\ph}/2 \rfloor + 1$ and $|\vec{h}_S| = \lfloor w_{\ph}/2 \rfloor$. The number of such ordered pairs is $w_{\ph}!$. The sequences $\vec{g}_U$ and $\vec{h}_U$ are constrained such that $\vec{h}_U$ contains the exact elements of $\vec{g}_S$. The internal orderings for these fixed subsets yield $(\lfloor w_{\ph}/2 \rfloor + 1)! \lfloor w_{\ph}/2 \rfloor!$ choices, which equals to $w_{\ph}! / \binom{w_{\ph}}{\lfloor w_{\ph}/2 \rfloor}$. Multiplying these choices gives the total sequence pairs for $\ph$:
\begin{align*}
   2 \frac{(w_{\ph}!)^2}{\binom{w_{\ph}}{\lfloor w_{\ph}/2 \rfloor}}, 
\end{align*}  
where the factor $2$ arises from the different cases of the variable $S$. Therefore, combining the independent choices for $\uh $ and $\ph$ for a given $\ph$ yields:
\begin{align*}
(w_{\uh }!)^2 \times \nu(w_{\ph}) \frac{(w_{\ph}!)^2}{\binom{w_{\ph}}{\lfloor w_{\ph}/2 \rfloor}} = (w_{\uh }! w_{\ph}!)^2 \frac{\nu(w_{\ph})}{\binom{w_{\ph}}{\lfloor w_{\ph}/2 \rfloor}}
\end{align*}
Finally, summing this expression over two cases where, $\ph = \ah$ and $\ph = \bh$,  yields the total number of  configurations:
\begin{align*}
(w_{\ah}! w_{\bh}!)^2 \left(\frac{\nu(w_{\ah})}{\binom{w_{\ah}}{\lfloor w_{\ah}/2 \rfloor}} + \frac{\nu(w_{\bh})}{\binom{w_{\bh}}{\lfloor w_{\bh}/2 \rfloor}}\right)
\end{align*}

\section{Fit Parameters}\label{appendix:FitParams}
In Table \ref{tab:fit_parameters}, we report the fit parameters used in the heuristic formula $\Tilde{p}_L(p)= p^{d_{\mathrm{circ}}/2}e^{\alpha+\beta p + \gamma p^2}$ for each code listed in Table \ref{tab:group_codes}.
\begin{table}[htbp]
    \centering
    \caption{Fit Parameters of the codes listed in Table \ref{tab:group_codes}}
    \label{tab:fit_parameters}
    \begin{tabular}{c|c|c|c|c} 
        \toprule
        Codes & $d_{\mathrm{circ}}$ & $\alpha$ & $\beta$ & $\gamma$ \\ 
        \midrule
        $[[48,8,6]]$ & $\leq 5$ & 7.69 & 1363 & -125961 \\
        $[[96,8,10]]$ & $\leq 9$ & 15.71 & 2259 & -241331 \\
        $[[224,12,16]]$ & $\leq 14$ & 21.18 & 4129 & -327354 \\
        $[[84,16,8]]$ & $\leq 7$ & 12.53 & 2252 & -250402 \\ 
        $[[112,16,10]]$ & $\leq 8$ & 13.38 & 3263 & -377805 \\ 
        $[[128,16,12]]$ & $\leq 10$ & 20.37 & 2308 & -241828 \\ 
        $[[168,16,15]]$ & $\leq 13$ & 30.34 & 932 & 24829 \\ 
        \bottomrule
    \end{tabular}
\end{table}

\section{Additional Code Examples}
We present additional examples of codes in Table \ref{tab:group_codes_additional} below.

\begin{table*}[htbp]
\centering
\caption{Additional examples of codes. Distances reported with a $\leq$ symbol represent upper bounds obtained using \texttt{QDistRnd} with $10^6$ iterations. All other distances are exact.}
\label{tab:group_codes_additional}
\begin{tabular}{@{} ccc cc cc ccc @{}}
\toprule
\multicolumn{3}{c}{\textbf{Code Params}} & \multicolumn{2}{c}{\textbf{Group Descriptions}} & \multicolumn{2}{c}{\textbf{Group Algebra}} & \multicolumn{3}{c}{\textbf{GAP Identifiers}} \\
\cmidrule(lr){1-3} \cmidrule(lr){4-5} \cmidrule(lr){6-7} \cmidrule(lr){8-10}

\makebox[1cm][c]{$n$} & \makebox[1cm][c]{$k$} & \makebox[1cm][c]{$d$} &
$G$ Structure & $H$ Structure & $\ah$ & $\bh$ &
\makebox[1cm][c]{$\ell$} & \makebox[1cm][c]{$m$} & \makebox[1cm][c]{$s$} \\ \midrule

\multicolumn{10}{c}{\textbf{Weight-6 codes}} \\[4pt]

36 & 4 & 6 & $C_9 \times D_{18}$ & $C_9$ & $[1, 15, 36]$ & $[1, 4, 8]$ & 162 & 3 & 24 \\
48 & 4 & 8 & $C_3 \times (C_9 \rtimes C_8)$ & $C_9$ & $[1, 26, 69]$ & $[1, 8, 9]$ & 216 & 12 & 14 \\
54 & 8 & 6 & $C_{36} \times S_3$ & $C_4 \times C_2$ & $[1, 13, 27]$ & $[1, 4, 5]$ & 216 & 47 & 23 \\
56 & 6 & 8 & $C_7 \times D_8$ & $C_2$ & $[1, 5, 56]$ & $[1, 2, 9]$ & 56 & 9 & 1 \\
60 & 16 & 4 & $C_{15} \times D_{10}$ & $C_5$ & $[1, 11, 31]$ & $[1, 10, 12]$ & 150 & 8 & 6 \\
72 & 4 & 10 & $C_3 \times ((C_6 \times C_2) \rtimes C_2)$ & $C_2$ & $[1, 12, 50]$ & $[1, 4, 17]$ & 72 & 30 & 1 \\
72 & 8 & 8 & $C_{36} \times S_3$ & $C_6$ & $[1, 6, 31]$ & $[1, 7, 10]$ & 216 & 47 & 17 \\
84 & 6 & 10 & $C_{21} \times (C_3 \rtimes C_4)$ & $C_6$ & $[1, 24, 34]$ & $[1, 10, 20]$ & 252 & 21 & 6 \\
90 & 8 & 10 & $C_{15} \times ((C_6 \times C_2) \rtimes C_2)$ & $D_8$ & $[1, 39, 41]$ & $[1, 4, 14]$ & 360 & 99 & 35 \\
96 & 4 & 12 & $(C_3 \times C_3) \rtimes ((C_4 \times C_4) \rtimes C_2)$ & $S_3$ & $[1, 19, 52]$ & $[1, 14, 24]$ & 288 & 489 & 93 \\
108 & 8 & 10 & $C_{36} \times S_3$ & $C_4$ & $[1, 13, 67]$ & $[1, 8, 18]$ & 216 & 47 & 12 \\
112 & 6 & 12 & $C_7 \times D_8 \times S_3$ & $S_3$ & $[1, 76, 107]$ & $[1, 13, 14]$ & 336 & 188 & 59 \\
112 & 12 & 8 & $C_7 \times D_8 \times S_3$ & $S_3$ & $[1, 44, 106]$ & $[1, 12, 27]$ & 336 & 188 & 59 \\
120 & 8 & 12 & $C_{15} \times ((C_6 \times C_2) \rtimes C_2)$ & $C_6$ & $[1, 162, 211]$ & $[1, 7, 11]$ & 360 & 99 & 18 \\
126 & 10 & 10 & $S_3 \times (C_7 \rtimes C_3)$ & $C_2$ & $[1, 10, 81]$ & $[1, 5, 19]$ & 126 & 8 & 1 \\
144 & 4 & 16 & $C_3 \times (C_9 \rtimes C_8)$ & $C_3$ & $[1, 26, 179]$ & $[1, 22, 33]$ & 216 & 12 & 1 \\
150 & 16 & 8 & $C_{15} \times D_{10}$ & $C_2$ & $[1, 4, 55]$ & $[1, 7, 15]$ & 150 & 8 & 1 \\
168 & 16 & 10 & $C_7 \times ((C_6 \times C_2) \rtimes C_2)$ & $C_2$ & $[1, 92, 118]$ & $[1, 14, 30]$ & 168 & 33 & 1 \\
180 & 8 & 16 & $C_{15} \times ((C_6 \times C_2) \rtimes C_2)$ & $C_4$ & $[1, 84, 128]$ & $[1, 5, 22]$ & 360 & 99 & 14 \\
186 & 10 & 14 & $C_{31} \times S_3$ & $C_2$ & $[1, 74, 183]$ & $[1, 21, 29]$ & 186 & 3 & 1 \\
192 & 8 & 16 & $C_3 \times ((C_{16} \rtimes C_2) \rtimes C_4)$ & $C_4$ & $[1, 226, 283]$ & $[1, 9, 21]$ & 384 & 512 & 8 \\
248 & 10 & $\leq 18$ & $C_{31} \times D_8$ & $C_2$ & $[1, 122, 235]$ & $[1, 15, 30]$ & 248 & 9 & 1 \\

\midrule
\multicolumn{10}{c}{\textbf{Weight-8 codes}} \\[4pt]

48 & 10 & 6 & $C_3 \times ((C_6 \times C_2) \rtimes C_2)$ & $C_3$ & $[1, 16, 57, 58]$ & $[1, 2, 3, 9]$ & 72 & 30 & 9 \\
72 & 16 & 6 & $C_3 \times ((C_6 \times C_2) \rtimes C_2)$ & $C_2$ & $[1, 3, 43, 47]$ & $[1, 6, 8, 12]$ & 72 & 30 & 1 \\
80 & 8 & 10 & $(C_5 \rtimes C_8) \rtimes C_2$ & $C_2$ & $[1, 3, 43, 49]$ & $[1, 9, 16, 18]$ & 80 & 10 & 1 \\
80 & 10 & 8 & $(C_5 \rtimes C_8) \rtimes C_2$ & $C_2$ & $[1, 10, 52, 58]$ & $[1, 3, 9, 14]$ & 80 & 10 & 1 \\
84 & 10 & 9 & $C_{21} \times (C_3 \rtimes C_4)$ & $C_6$ & $[1, 82, 101, 120]$ & $[1, 8, 13, 16]$ & 252 & 21 & 6 \\
90 & 18 & 7 & $C_5 \times ((C_3 \times C_3) \rtimes C_3)$ & $C_3$ & $[1, 78, 111, 112]$ & $[1, 4, 6, 7]$ & 135 & 3 & 1 \\
96 & 10 & 12 & $(C_3 \times C_3) \rtimes ((C_4 \times C_4) \rtimes C_2)$ & $S_3$ & $[1, 18, 28, 84]$ & $[1, 20, 22, 23]$ & 288 & 489 & 93 \\
96 & 12 & 10 & $(C_3 \times C_3) \rtimes ((C_4 \times C_4) \rtimes C_2)$ & $C_6$ & $[1, 77, 85, 136]$ & $[1, 13, 18, 21]$ & 288 & 489 & 103 \\
96 & 18 & 8 & $(C_3 \times C_3) \rtimes ((C_4 \times C_4) \rtimes C_2)$ & $S_3$ & $[1, 19, 31, 82]$ & $[1, 2, 18, 24]$ & 288 & 489 & 93 \\
108 & 12 & 10 & $((C_9 \times C_3) \rtimes C_3) \rtimes C_2$ & $C_3$ & $[1, 29, 92, 97]$ & $[1, 2, 12, 17]$ & 162 & 4 & 13 \\
112 & 12 & 12 & $C_7 \times ((C_4 \times C_4) \rtimes C_2)$ & $C_4$ & $[1, 46, 82, 107]$ & $[1, 17, 20, 25]$ & 224 & 53 & 8 \\
120 & 14 & 12 & $C_{15} \times D_8$ & $C_2$ & $[1, 13, 47, 100]$ & $[1, 13, 16, 19]$ & 120 & 32 & 3 \\
120 & 16 & 11 & $C_{15} \times D_8$ & $C_2$ & $[1, 9, 93, 101]$ & $[1, 2, 15, 30]$ & 120 & 32 & 3 \\
120 & 24 & 7 & $C_5 \times ((C_3 \times C_3) \rtimes C_4)$ & $C_3$ & $[1, 32, 69, 73]$ & $[1, 9, 20, 23]$ & 180 & 23 & 12 \\
128 & 10 & 14 & $(C_8 \rtimes C_2) \rtimes C_8$ & $C_2$ & $[1, 33, 92, 120]$ & $[1, 18, 28, 30]$ & 128 & 10 & 1 \\
128 & 18 & 10 & $(C_8 \rtimes C_2) \rtimes C_8$ & $C_2$ & $[1, 40, 75, 107]$ & $[1, 8, 12, 19]$ & 128 & 10 & 1 \\
128 & 30 & 8 & $(C_2 \times (C_4 \rtimes C_4)) \rtimes C_4$ & $C_2$ & $[1, 21, 78, 82]$ & $[1, 12, 22, 27]$ & 128 & 26 & 7 \\
136 & 8 & 15 & $C_{17} \times D_8$ & $C_2$ & $[1, 11, 55, 61]$ & $[1, 12, 15, 29]$ & 136 & 10 & 1 \\
144 & 10 & 15 & $C_3 \times (C_9 \rtimes C_8)$ & $C_3$ & $[1, 17, 173, 200]$ & $[1, 8, 33, 35]$ & 216 & 12 & 1 \\
144 & 18 & 12 & $(C_3 \times C_3) \rtimes ((C_4 \times C_2 \times C_2) \rtimes C_2)$ & $C_4$ & $[1, 87, 88, 117]$ & $[1, 7, 29, 35]$ & 288 & 498 & 141 \\
144 & 24 & 8 & $(C_3 \times C_3) \rtimes ((C_4 \times C_4) \rtimes C_2)$ & $C_4$ & $[1, 112, 129, 134]$ & $[1, 12, 17, 20]$ & 288 & 489 & 15 \\
160 & 8 & $\leq 18$ & $C_5 \rtimes ((C_8 \times C_4) \rtimes C_2)$ & $C_4$ & $[1, 28, 108, 113]$ & $[1, 13, 17, 38]$ & 320 & 17 & 61 \\
160 & 12 & 16 & $C_5 \rtimes ((C_4 \rtimes C_4) \rtimes C_4)$ & $C_2 \times C_2$ & $[1, 65, 105, 128]$ & $[1, 14, 28, 31]$ & 320 & 10 & 59 \\
168 & 10 & $\leq17$ & $C_7 \times ((C_6 \times C_2) \rtimes C_2)$ & $C_2$ & $[1, 60, 71, 108]$ & $[1, 7, 19, 36]$ & 168 & 33 & 1 \\
168 & 24 & 10 & $D_8 \times (C_7 \rtimes C_3)$ & $C_2$ & $[1, 10, 48, 49]$ & $[1, 12, 15, 25]$ & 168 & 20 & 3 \\
186 & 22 & 12 & $C_{31} \times S_3$ & $C_2$ & $[1, 23, 90, 180]$ & $[1, 2, 9, 26]$ & 186 & 3 & 1 \\
186 & 36 & 7 & $C_{31} \times S_3$ & $C_2$ & $[1, 4, 19, 67]$ & $[1, 6, 13, 19]$ & 186 & 3 & 1 \\
200 & 28 & 10 & $C_5 \times C_5 \times D_8$ & $C_2$ & $[1, 25, 157, 172]$ & $[1, 14, 16, 33]$ & 200 & 38 & 3 \\
208 & 10 & $\leq 20$ & $C_{13} \times ((C_4 \times C_2) \rtimes C_2)$ & $C_2$ & $[1, 8, 172, 193]$ & $[1, 24, 30, 50]$ & 208 & 21 & 1 \\
216 & 14 & $\leq 18$ & $C_9 \times ((C_6 \times C_2) \rtimes C_2)$ & $C_2$ & $[1, 104, 196, 215]$ & $[1, 6, 19, 35]$ & 216 & 58 & 7 \\
240 & 16 & $\leq 20$ & $C_3 \times ((C_5 \rtimes C_8) \rtimes C_2)$ & $C_2$ & $[1, 37, 136, 228]$ & $[1, 11, 35, 52]$ & 240 & 39 & 1 \\

\bottomrule
\end{tabular}
\end{table*}
\end{document}